 \newcommand{\bs}{\bigskip}
 \newcommand{\ms}{\medskip}
 \newcommand{\n}{\noindent}
 \newcommand{\s}{\smallskip}
 \newcommand{\hs}[1]{\hspace*{ #1 mm}}
 \newcommand{\vs}[1]{\vspace*{ #1 mm}}
 \newcommand{\cent}{{|}\!\!\mathrm{c}}
 \newcommand{\dollar}{\$}
 \newcommand{\nat}{\mathbb{N}}
 \newcommand{\integer}{\mathbb{Z}}
 \newcommand{\complex}{\mathbb{C}}
 \newcommand{\appcomplex}{\tilde{\mathbb{C}}}
\newcommand{\algebraic}{\mathbb{A}}
 \newcommand{\ie}{\textrm{i.e.},\hspace*{2mm}}
 \newcommand{\eg}{\textrm{e.g.},\hspace*{2mm}}
 \newcommand{\etalc}{\textrm{et al.}}
 \newcommand{\MM}{{\cal M}}
 \newcommand{\PP}{{\cal P}}
 \newcommand{\RR}{{\cal R}}
 \newcommand{\VV}{{\cal V}}
 \newcommand{\p}{\mathrm{{P}}}
 \newcommand{\np}{\mathrm{{NP}}}
 \newcommand{\am}{\mathrm{{AM}}}
 \newcommand{\pspace}{\mathrm{{PSPACE}}}
 \newcommand{\qip}{\mathrm{QIP}}
 \newcommand{\ip}{\mathrm{IP}}
 \def\bbox{\vrule height6pt width6pt depth1pt}
\theoremstyle{plain}
 \newtheorem{theorem}{Theorem}[section]
 \newtheorem{lemma}[theorem]{Lemma}
 \newtheorem{proposition}[theorem]{Proposition}
\newtheorem{definition}[theorem]{Definition}}
 \newtheorem{claim}{Claim}
 \newenvironment{proof}{\par \noindent
            {\bf Proof. \hs{2}}}{\hfill$\Box$ \vspace*{3mm}}
 \newenvironment{proofof}[1]{\vspace*{5mm} \par \noindent
         {\bf Proof of #1.\hs{2}}}{\hfill$\Box$ \vspace*{3mm}}
 \newcommand{\ceilings}[1]{\lceil #1 \rceil}
 \newcommand{\floors}[1]{\lfloor #1 \rfloor}
 \newcommand{\pair}[1]{\langle #1 \rangle}
 \newcommand{\qubit}[1]{| #1 \rangle}
 \newcommand{\bra}[1]{\langle #1 |}
 \newcommand{\ket}[1]{| #1 \rangle}
\newif\ifnotesw\noteswtrue% T to show box & marginal notes; F suppresses.
\ifnotesw\marginpar[\hfill\(\top\)]{\(\top\)}\fi}%
\ifnotesw\marginpar[\hfill\(\bot\)]{\(\bot\)}\fi}
\newcommand{\mnote}[1]%
   {\ifnotesw\marginpar%
	  [{\scriptsize\begin{minipage}[t]{\marginparwidth}
	  \raggedleft#1%
		  \end{minipage}}]%
	  {\scriptsize\begin{minipage}[t]{\marginparwidth}
	  \raggedright#1%
		  \end{minipage}}%
    \fi}
\newcommand{\ignore}[1]{}
\begin{document}
% %%%%%%%%%%%%%%%%%%

\pagestyle{plain}

%%%%%%%%%%%%%%%%%%%%%%%%%%%%%%%%%%%%%%%%%%%%
%%%%%%%%%%%%%%%%%%%%%%%%%%%%%%%%%%%%%%%%%%%%

\begin{center}
{\Large {\bf Interactive Proofs with Quantum Finite Automata}}
\footnote{This paper expands the second half part
of the extended abstract having appeared in the Proceedings
of the 9th International Conference on Implementation and Application of Automata (CIAA 2004), Lecture Notes in Computer Science, Vol.3317, pp.225--236, Springer-Verlag, Kingston, Canada, July 22--24, 2004.
An extension of the first half part already appeared in Journal of Computer and System Sciences, vol.75, pp.255--269, 2009.
This work was in part supported by the Natural Sciences and Engineering Research Council of Canada.} \bs\ms\\

{\sc Harumichi Nishimura\footnote{Present Affiliation:
Department of Computer Science and Mathematical Informatics,
Graduate School of Information Science, Nagoya University,
Chikusa-ku, Nagoya, Aichi, 464-8601 Japan.}
and Tomoyuki Yamakami\footnote{Present Affiliation: Department of Information Science, University of Fukui, 3-9-1 Bunkyo, Fukui, 910-8507 Japan.}} \bs\\
\end{center}
\bs

\begin{center}
{\bf Abstract}
\end{center}
\begin{quote}
Following an early work of Dwork and Stockmeyer on interactive proof systems whose verifiers are two-way probabilistic finite automata, the authors initiated in 2004 a study on the computational power of quantum interactive proof systems whose verifiers are particularly limited to quantum finite automata. As a follow-up to the authors' early journal publication [J. Comput. System Sci., vol.75, pp.255--269, 2009], we further investigate the quantum nature of interactions between provers and verifiers by studying how various restrictions on quantum interactive proof systems affect the language recognition power of the proof systems. In particular, we examine three intriguing restrictions that (i) provers always behave in a classical fashion, (ii) verifiers always reveal to provers the information on next moves, and (iii) the number of interactions between provers and verifiers is bounded.

\s

\n{\sf Keywords:} finite automaton, interactive proof system, quantum computing, classical prover, quantum prover, interaction
\end{quote}

%%%%%%%%%%%%%%%%%%%%%%
\section{Overview}\label{sec:QFA}

Quantum mechanics has provided an unconventional means to fast computing and secure communication since early 1980s.
The potential power of quantum interactions between two parties---a prover and a verifier---motivated the authors \cite{NY04} in 2004
to investigate a quantum analogy of an early work of Dwork and Stockmeyer \cite{DS92}
on classical {\em interactive proof (IP) systems} whose verifiers are restricted to two-way probabilistic finite automata (or 2pfa's, in short).
Such weak verifiers can represent
computations that utilize only a finite amount of memory space.
These IP systems are a special case of a much wider scope of space-bounded IP systems of Condon \cite{Con93}. It has been shown that those IP systems behave quite differently from time-bounded IP systems.
Although verifier's power is limited to 2pfa's, as Dwork and Stockmeyer dexterously demonstrated, 2pfa-verifier IP systems turn out to be significantly powerful, because a number of interactions with a mighty prover can enhance the 2pfa-verifier's ability to recognize much more complicated languages than the 2pfa's alone recognize. To describe succinctly such restricted IP systems, Dwork and Stockmeyer introduced a special notation $\ip(\pair{restriction})$ for a class of languages that are recognized by bounded-error IP systems with weak verifiers under restrictions specified by  $\pair{restriction}$. For instance, the notation $\ip(2pfa,poly\mbox{-}time)$ expresses the language class characterized by IP systems with 2pfa verifiers operated in {\em expected} polynomial time. The seminal work of Dwork and Stockmeyer further
studied numerous subjects (private coins versus public coins, etc.) for
a basis model of 2pfa-verifier IP systems.
To express Babai's Arthur-Merlin proof systems \cite{Bab85}, in particular, they introduced another notation $\am(\pair{restriction})$ under restrictions given in $\pair{restrictions}$.
An immediate advantage of studying such weak verifier models is to prove certain types of separations and collapses among the associated complexity classes {\em without any unproven assumption}. For instance, Dwork and Stockmeyer successfully separated $\ip(2pfa,poly\mbox{-}time)$ from $\am(2pfa)$, and $\am(2pfa,poly\mbox{-}time)$ from $\am(2pfa)$. Another significant advantage is that their restricted models make it easier to analyze the behaviors of two parties---a prover and a verifier---during a usually complicated communication process between them.

Unlike Watrous' model \cite{Wat03} of circuit-based {\em quantum interactive proof (QIP) systems},
a basic QIP model of the authors \cite{NY04,NY09} uses a {\em measure-many two-way quantum finite automaton} (or 2qfa) of Kondacs and Watrous \cite{KW97} as a weak verifier who can communicate with a mighty {\em quantum prover} through a quantum communication bulletin board (implemented as a {\em communication cell} that holds at each moment one symbol from a communication alphabet). They also considered two additional variants of verifiers: a {\em measure-once one-way quantum finite automaton} (or mo-1qfa) of Moore and Crutchfield \cite{MC00} and a {\em measure-many one-way quantum finite automaton}  (or 1qfa) of Kondacs and Watrous \cite{KW97}. An initial study of QIP systems with those weak verifiers reveals their noticeable strength.
Let us recall from \cite{NY09} a general notation   $\qip(\pair{restrictions})$ for bounded-error QIP systems under restrictions indicated by $\pair{restrictions}$, analogous to the aforementioned notation $\ip(\pair{restriction})$ of Dwork and Stockmeyer. For instance, $\qip(2qfa)$ is obtained by restricting all verifiers to 2qfa's. Likewise, a use of mo-1qfa verifiers and 1qfa verifiers introduces the language classes $\qip(mo\mbox{-}1qfa)$ and $\qip(1qfa)$, respectively.

The power of quantum interaction was exemplified in  \cite{NY09}.
(i) With mo-1qfa verifiers, it holds that   $\mathrm{MO\mbox{-}1QFA}\subsetneqq
\qip(mo\mbox{-}1qfa) \subsetneqq \mathrm{REG}$, where $\mathrm{REG}$ is the class of regular languages and $\mathrm{MO\mbox{-}1QFA}$ is the class of all languages recognized by bounded-error mo-1qfa's. (ii) With 1qfa verifiers, we obtain  $\mathrm{1QFA}\subsetneqq \qip(1qfa) = \mathrm{REG}$, where $\mathrm{1QFA}$ is the class of all languages recognized by bounded-error 1qfa's.
(iii) With 2qfa verifiers, it holds that $\qip(1qfa) \subsetneqq \qip(2qfa,poly\mbox{-}time)\nsubseteq \mathrm{AM}(2pfa)$.
In addition, they showed that $\mathrm{2QFA}_{\algebraic}\subseteq \p$ and $\qip_{\appcomplex}(2qfa,poly\mbox{-}time)\subseteq\np$, where  $\algebraic$ and $\appcomplex$ respectively indicate that all (transition) amplitudes of qfa verifiers are algebraic complex numbers and polynomial-time {\lq\lq}approximable{\rq\rq} complex numbers. The last result clearly contrasts with the following classical containments: $\am(2pfa)\subseteq \ip(2pfa,poly\mbox{-}time)\subseteq\pspace$ \cite{DS92}.

%%%%%%%%%%%%%%%%%%%%%%%%%%%%%

We intend to continue a study of qfa-verifier QIP systems from various aspects of computational complexity. In particular,
This paper aims at presenting three intriguing subjects that were briefly discussed
in \cite{NY04} but have been excluded from the authors' early journal publication \cite{NY09}.
In this current publication, we shall examine strengths and weaknesses of qfa-verifier QIP systems by observing how various restrictions on the QIP systems affect their power of language recognition. Our investigation
is focused on the following three selected subjects.

%%%%%
\ms
\n{\bf 1. Classical provers versus quantum provers.}\hs{2}
In the model of QIP systems in \cite{NY09},
provers are basically quantum machines, which
can apply any predetermined unitary operators. In contrast, provers in IP systems of Dwork and Stockmeyer \cite{DS92} are essentially {\lq\lq}probabilistic{\rq\rq} machines, which can flip privately owned
coins and decide what messages to send back to 2pfa verifiers. These probabilistic machines, however, are known to be reduced to deterministic machines, which are naturally associated with unitary operators whose entries are only $0$s and $1$s because the provers can use an unlimited amount of private memory storage. For convenience, we briefly call such provers {\em classical provers}; in contrast, we call standard provers {\em quantum provers}.
Naturally, we raise a simple question of whether our quantum
provers are truly different in recognition power from the aforementioned
classical provers.

It appears that a classical prover helps a 2qfa verifier much more than a quantum prover does.
For instance, the language $Center = \{ x1y \mid x,y\in\{0,1\}^*, |x|=|y|\}$ is not yet known to belong to $\qip(2qfa)$; however, interactions with a classical prover allow a 2qfa verifier to recognize this particular
language.
Such a strength of using classical provers stems from a simple fact that an analysis of classical-prover QIP protocols is much easier than that of  quantum-prover ones. This paper further shows the following containments and separations concerning classical provers, where we use the restriction $\pair{c\mbox{-}prover}$ to indicate the use of classical provers.
(i) $\qip(1qfa) \subseteq \qip(1qfa,c\mbox{-}prover)$. (ii)
$\am(2pfa)\subsetneqq \qip(2qfa,c\mbox{-}prover)$. (iii)
$\am(2pfa,poly\mbox{-}time)\subsetneqq \qip(2qfa,poly\mbox{-}time,c\mbox{-}prover)\nsubseteq \am(2pfa)$.
A core argument used for these results is a technical construction of appropriate QIP protocols that recognize target languages. All the above results will be  presented in Section \ref{sec:classical-prover}.

%%%%%
\ms
\n{\bf 2. Public information versus private information.}\hs{2}
As noted earlier, Dwork and Stockmeyer \cite{DS92} examined two different types of IP systems---$\ip(\pair{restriction})$ and $\am(\pair{restriction})$---when verifiers are limited to 2pfa's.
The latter IP system $\am(\pair{restriction})$ refers to Babai's model of Arthur-Merlin proof systems \cite{Bab85} (also known as {\lq\lq}public-coin{\rq\rq} IP systems), in which verifiers
flip fair coins to decide next moves and reveal their outcomes, essentially showing a state of next internal configurations to mighty provers, who can keep track of the verifier's past configurations.
In comparison, standard IP systems are often referred to as {\lq\lq}private-coin{\rq\rq} IP systems.
A question concerning {\lq\lq}public coins{\rq\rq}  versus {\lq\lq}private  coins,{\rq\rq} which was a key subject in \cite{DS92},
is essentially whether the prover obtains {\lq\lq}complete information{\rq\rq}  or {\lq\lq}partial information{\rq\rq} on the configurations of the verifier.
It was shown in \cite{DS92} that $\mathrm{2PFA}\neq \am(2pfa)\neq \ip(2pfa)$, highlighting a clear difference between public information and private information.

Likewise, we shall introduce a similar {\lq\lq}publicness{\rq\rq} notion to qfa-verifier QIP systems by demanding verifiers
to reveal their next moves to provers at every step.
To express the public-coin analogue of QIP systems, we use the notation $\pair{public}$ as a restriction to such systems.\footnote{These new QIP systems might be possibly expressed as $\mathrm{QAM}(\pair{restriction})$ analogous to $\am(\pair{restriction})$.}
With this notation, for instance, $\qip(1qfa,public)$ denotes
the language class obtained from $\qip(1qfa)$
by restricting verifiers to publicly announcing their next moves.
It turns out that public QIP systems remain significantly powerful.
To be more precise, we shall prove the following three class relations.
(i) $\mathrm{1RFA}\subsetneqq \qip(1qfa,public) \nsubseteq \mathrm{1QFA}$,  (ii) $\qip(2qfa,public,poly\mbox{-}time) \nsubseteq \am(2pfa,poly\mbox{-}time)$,  and (iii) $\qip(2qfa,public,c\mbox{-}prover) \nsubseteq \am(2pfa,poly\mbox{-}time)$, where $\mathrm{1RFA}$ is the language family induced by {\em one-way (deterministic) reversible finite automata} (1rfa's, in short) \cite{AF98}. In Section \ref{sec:public-coin}, we shall discuss those results in details.

%%%%%
\ms
\n{\bf 3. The number of interactions between a prover and a verifier.}\hs{2}
As suggested in \cite[Section 6]{NY09}, the number of interactions between a prover and a verifier in a weak-verifier QIP system may serve as a complexity measure of classifying various languages.
Unlike Dwork-Stockmeyer IP systems, the original QIP systems of the authors \cite{NY09} were introduced as to force two parties---a prover and a verifier---to communicate with each other {\em at every step} and,
through Sections \ref{sec:classical-prover}--\ref{sec:public-coin}, we shall take this definition of QIP systems.
To study the precise effect of interactions, nevertheless, we need to modify this original model slightly so that the verifier can interact with the prover only when he needs any help from the prover.
To express those new QIP systems and their corresponding language classes,
we invent two new notations $\qip^{\#}(\pair{restrictions})$ and $\qip^{\#}_{k}(\pair{restrictions})$, where $k$ indicates the maximal number of iterations made during each computation.
In Section \ref{sec:interaction}, we shall prove that
$\qip^{\#}_0(1qfa) \subsetneqq\qip^{\#}_1(1qfa)
\subsetneqq\qip^{\#}(1qfa)$. The first separation between $\qip^{\#}_0(1qfa)$ and $\qip^{\#}_1(1qfa)$ comes from the fact that the language $Odd =\{0^m1z\mid z\in\{0,1\}^*,\,\text{$z$ has an odd number of $0$s}\,\}$
belongs to $\qip^{\#}_1(1qfa)$ but it is not in $\qip^{\#}_0(1qfa)$ since $\qip^{\#}_{0}(1qfa)$ coincides with $\mathrm{1QFA}$.
In contrast, the second separation of $\qip^{\#}(1qfa)$ from $\qip^{\#}_{1}(1qfa)$ is exemplified by the language $Zero= \{x0\mid x\in\{0,1\}^*\}$; however, the proof of $Zero\not\in \qip^{\#}_{1}(1qfa)$ is much more involved than the proof of $Zero\not\in\mathrm{1QFA}$ that appears in \cite{KW97}.

%%%%%%%%%%%%%%%%%%%
\section{QFA-Verifier QIP Systems}\label{application}

Throughout this paper, $\complex$ denotes the set of all
{\em complex numbers} and $\imath$ is $\sqrt{-1}$.
Let $\nat$ be the set of all {\em natural numbers}
(\ie nonnegative integers) and set $\nat^{+}= \nat-\{0\}$.
Given two integers $m$ and $n$ with $m\leq n$, the {\em integer interval}  $[m,n]_{\integer}$ is the set
$\{m,m+1,m+2,\ldots,n\}$ and $\integer_{n}$ in particular denotes the set $[0,n-1]_{\integer}$.
All {\em logarithms} are to base 2 and all {\em polynomials} have integer coefficients.
An {\em alphabet} is a finite nonempty set of {\lq\lq}symbols{\rq\rq} and our input alphabet $\Sigma$ is not necessarily limited to $\{0,1\}$ throughout this paper.
Following the standard convention, $\Sigma^*$ denotes the set of all {\em finite} sequences of symbols from $\Sigma$,
and we write $\Sigma^{n}=\{x\in\Sigma^*\mid |x|=n\}$, where $|x|$ denotes the length of $x$.
Opposed to the notation $\Sigma^*$, $\Sigma^{\infty}$ stands for the set of all {\em infinite} sequences,
each of which consists of symbols from $\Sigma$. For any symbol $a$ in $\Sigma$,
$a^{\infty}$ denotes an element of $\Sigma^{\infty}$, which is the infinite sequence made only of $a$.
We assume the reader's familiarity with classical automata theory and the basic concepts of quantum
computation (refer to, \eg \cite{Gru99,HMU01,NC00} for its foundation).
As underlying computation device, we extensively use {\em measure-many one-way quantum finite automata} (or 1qfa's, in short) and of {\em measure-many two-way quantum finite automata} (or 2qfa's), where we assume the reader's familiarity with the definitions of those quantum automata \cite{KW97}.
The reader may refer to \cite{DS92} for the formal definitions of the classes $\ip(\pair{restriction})$ and $\am(\pair{restriction})$ described in Section \ref{sec:QFA}.

%%%%%
\subsection{Basic Model of QIP Systems}\label{sec:basic-def}

Let us review the fundamental definition of QIP system of the authors  \cite{NY04,NY09}, in which verifiers are particularly limited to
quantum finite automata.
In Sections \ref{sec:public-coin}--\ref{sec:interaction}, we shall further
restrict the behaviors of those verifiers
as well as provers in order to obtain three major variations of our basic QIP systems.
The reader may refer to \cite[Section 3.2]{NY09} for a brief discussion on the main difference between QIP systems
based on uniform quantum circuits and those based on quantum finite automata.

We use the notation $(P,V)$ to denote a {\em QIP protocol} executed by prover $P$ and verifier $V$ (whose schematic diagram is illustrated in Figure \ref{fig:scheme}).
Unless there is any confusion, we also use the same notation $(P,V)$ to mean a {\em QIP system}  with the prover $P$ and the verifier $V$.
The 2qfa verifier $V = (Q,\Sigma\cup\{\cent,\dollar\},\Gamma,\delta,q_0,Q_{acc},Q_{rej})$ is a 2qfa  specified by a finite set $Q$ of verifier's inner states, an input alphabet $\Sigma$ and a
verifier's transition function $\delta$, equipped further with a shared communication cell using a communication alphabet $\Gamma$. The set $Q$ is the union of
three mutually disjoint subsets $Q_{non}$, $Q_{acc}$, and $Q_{rej}$,
where any states in $Q_{non}$, $Q_{acc}$, and $Q_{rej}$ are respectively
called a {\em non-halting inner state}, an {\em accepting inner state},
and a {\em rejecting inner state}.
In contrast to $Q_{non}$,  inner states in $Q_{acc}\cup Q_{rej}$ are simply called {\em halting inner states}.
In particular, $Q_{non}$ contains a so-called {\em initial inner state}
$q_0$.
An input tape is indexed by natural numbers (where the first cell is indexed $0$). Two designated symbols
$\cent$ and $\$$ not appearing in $\Sigma$, which are called respectively the
{\em left endmarker} and the {\em right endmarker},
mark the left end and the right end of the input on the input tape.
For our convenience, set
 $\check{\Sigma}=\Sigma\cup\{\cent,\$\}$. Assume also that $\Gamma$
contains a blank symbol $\#$ with which the system $(P,V)$ begins in the communication cell.

The {\em verifier's transition
function} $\delta$ is a map from $Q\times\check{\Sigma}\times\Gamma\times
Q\times\Gamma\times\{0,\pm 1\}$ to $\complex$ and is interpreted as
follows. For any $q,q'\in Q$, $\sigma\in\check{\Sigma}$,
$\gamma,\gamma'\in\Gamma$,
and $d\in\{0,\pm 1\}$, the complex number
$\delta(q,\sigma,\gamma,q',\gamma',d)$
specifies the transition amplitude with which the verifier $V$  in state $q$
scanning symbol $\sigma$ on the input tape and symbol $\gamma$ in the communication
cell changes $q$ to $q'$, replaces $\gamma$ with $\gamma'$,
and moves his tape head on the input tape in direction $d$. When the tape head is located in a cell indexed $t$, it must move to the cell indexed $t+d$.

At the beginning of the computation,
an input string $x$ over $\Sigma$ of length $n$ is written orderly from
the first cell to the $n$th cell of the input tape. The tape head initially
scans $\cent$ in the $0$th cell. The communication cell holds only a symbol in $\Gamma$ and initially $\#$ is written in the cell.
As in the original definition of 2qfa in \cite{KW97}, our input tape
is {\em circular}; that is, whenever the verifier's tape head scanning $\cent$
($\$$, resp.) on the input tape moves to the left (right, resp.), the tape
head reaches to the right end (resp. left end) of the input tape.

\begin{figure}[t]
\begin{center}
\centerline{\psfig{figure=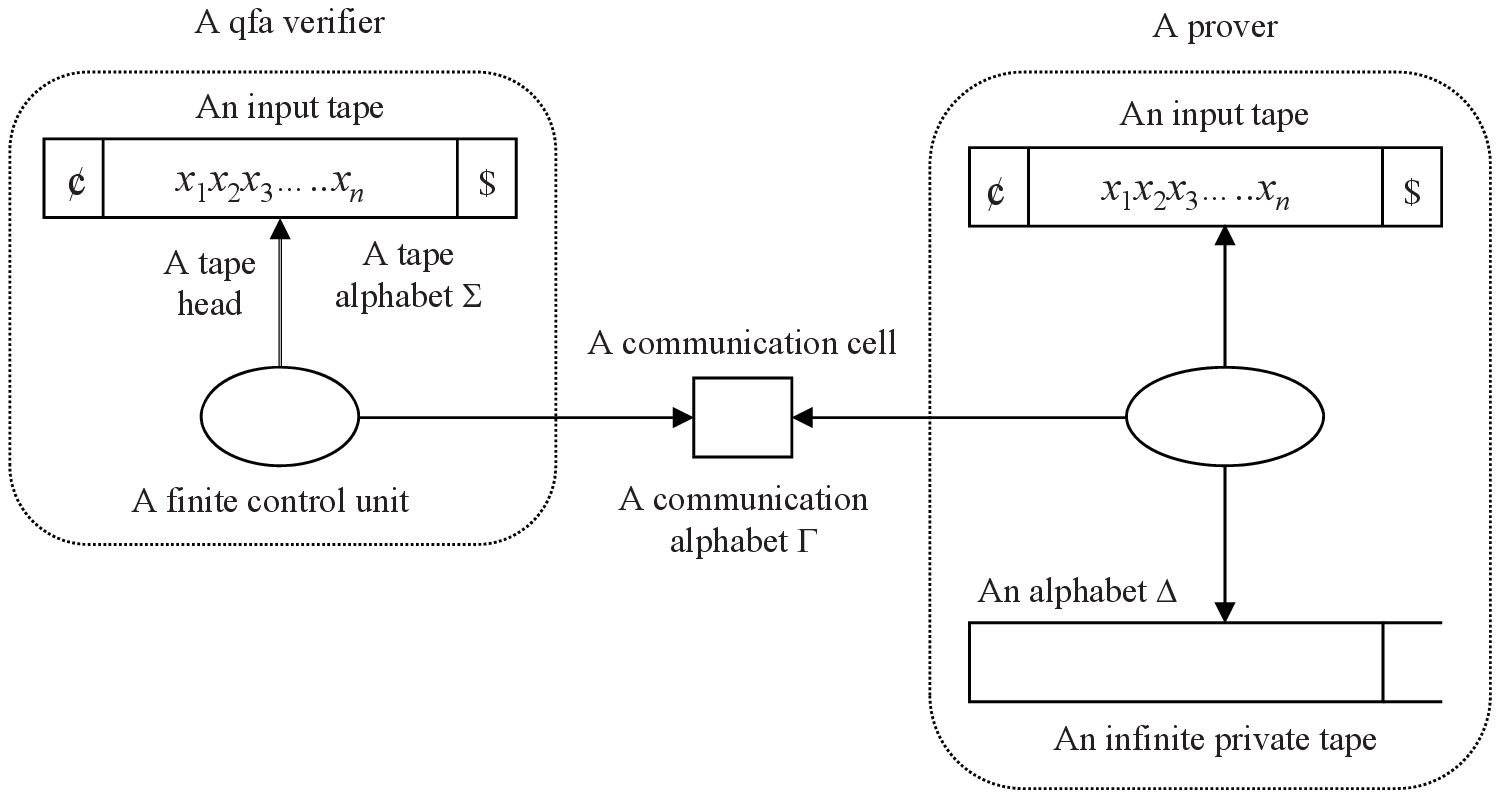,height=5.0cm}}
\caption{{\small A schematic view of a QIP system with a qfa verifier}}\label{fig:scheme}
\end{center}
\end{figure}

Next, we shall explain two concepts of (global) configuration and visible configuration.  A {\em (global) configuration} of the QIP protocol $(P,V)$
is a snapshot of a {\lq\lq}computation{\rq\rq} of the protocol,
comprising the following visible configurations
of the two players. Each player can see only his portion of a global
configuration.
A {\em visible configuration} of the verifier $V$ on an input of
length $n$ is
represented by a triplet
$(q,k,\gamma)\in Q\times\integer_{n+2}\times\Gamma$,
which indicates that the verifier is in state $q$, the content of the
communication cell is $\gamma$, and the verifier's tape head position is
$k$ on the input tape. Let $\VV_n$ and $\MM$ be respectively the Hilbert
spaces spanned by the computational bases
$\{\qubit{q,k}\mid (q,k)\in Q\times\integer_{n+2}\}$ and
$\{\qubit{\gamma}\mid \gamma\in\Gamma\}$. The Hilbert space
$\VV_n\otimes\MM$ is called the {\em verifier's visible configuration
space} on inputs of length $n$.

When a direction of every tape head's move is completely determined by the pair $(q',\gamma')$ of state $q$ and content $\gamma'$ of the communication cell after applying the associated transition,
we call the verifier {\em unidirectional} and use the following notation to simplify a description of the verifier's strategy (or transition function) $\delta$.
Assuming that the verifier $V$ is unidirectional, a {\em $(Q\times\Gamma)$-transition} $V_{\sigma}$ of
$V$ on input symbol $\sigma\in\check{\Sigma}$ is a unitary operator acting on $\mathrm{span}\{\qubit{q}\qubit{\gamma}\mid q\in Q,\gamma\in \Gamma\}$ of the form $V_{\sigma}\qubit{q,\gamma} = \sum_{q',\gamma'}\alpha_{q,\gamma,q',\gamma'}\qubit{q',\gamma'}$,
where $q'\in Q$, $\gamma'\in\Gamma$, and $\alpha_{q,\gamma,q',\gamma'}\in\complex$.
This means that, if $M$ scans $\sigma$ in state $q$ with $\gamma$, it changes $q$ to $q'$, $\gamma$ to $\gamma'$ and the tape head moves in direction $d$. Since $d$ is determined uniquely from $(q',\gamma')$,
we often express the head direction as $D(q',\gamma')=d$. When $d$ is independent of $\gamma$, by contrast,
we succinctly write $D(q')=d$. The verifier's strategy $\delta$ can be expressed as $\delta(q,\sigma,\gamma,q',\gamma',d) = \bra{q',\gamma'}V_{\sigma}\ket{q,\gamma}$ if $D(q',\gamma')=d$, and $0$ otherwise.
Notice that most 2qfa's constructed in the past literature (for instance, \cite{KW97}) actually satisfy this unidirectional condition.

For any input $x$ in $\Sigma^*$ of length $n$,
$\delta$ automatically induces the linear
operator $U_\delta^x$ acting on the Hilbert space
$\VV_n\otimes\MM$ defined by
$U_\delta^x\qubit{q,k,\gamma}= \sum_{q',\gamma',d}
\delta(q,x_{(k)},\gamma,q',\gamma',d)\qubit{q',k',\gamma'}$, where $x_{(0)}=\cent$, $x_{(n+1)}=\dollar$,
$x_{(i)}$ ($i\in[1,n]_{\integer}$) is the $i$th symbol in $x$,
and $k'=k+d\ (\mbox{mod }n+2)$.
The verifier is called {\em well-formed} if $U_\delta^x$ is
unitary on $\VV_n\otimes\MM$ for every string $x\in\Sigma^*$. Since
we are interested only in well-formed verifiers, we henceforth assume that all verifiers should be well-formed.

Let $x$ be any input $x$ of length $n$. The 2qfa verifier $V$ starts with
the initial quantum state $\qubit{q_0,0,\#}$.
A single step of the verifier on  $x$ consists of the
following process.
First, $V$ applies his operation $U_\delta^x$ to an existing superposition $\qubit{\phi}$ in $\VV_n\otimes\MM$
and then $U_\delta^x\qubit{\phi}$ becomes a new superposition
$\qubit{\phi'}$. Second, we define  $W_{acc}=\mathrm{span}\{\qubit{q,k,\gamma}\mid
(q,k,\gamma)\in Q_{acc}\times\integer_{n+2}\times\Gamma\}$,
$W_{rej}=\mathrm{span}\{\qubit{q,k,\gamma}\mid (q,k,\gamma)\in Q_{rej}\times\integer_{n+2}\times\Gamma\}$,
and $W_{non}=\mathrm{span}\{\qubit{q,k,\gamma}\mid (q,k,\gamma)\in Q_{non}\times\integer_{n+2}\times\Gamma\}$.
Moreover, let $k_{acc}$, $k_{rej}$, and $k_{non}$ be respectively the positive
numbers representing {\lq\lq}accepting,{\rq\rq}
{\lq\lq}rejecting,{\rq\rq} and {\lq\lq}non halting.{\rq\rq}
The new superposition $\qubit{\phi'}$ is then measured by
the {\em observable}
$k_{acc}E_{acc}+ k_{rej}E_{rej}+ k_{non}E_{non}$, where $E_{acc}$,
$E_{rej}$, and $E_{non}$ are respectively the projection operators
onto $W_{acc}$, $W_{rej}$, and $W_{non}$.
Provided that $\qubit{\phi'}$ is expressed as
$\qubit{\psi_1}+\qubit{\psi_2}+\qubit{\psi_3}$ for certain three vectors
$\qubit{\psi_1}\in W_{acc}$, $\qubit{\psi_2}\in W_{rej}$, and
$\qubit{\psi_3}\in W_{non}$, we say that, at this step, {\em $V$
accepts $x$ with probability} $\|\qubit{\psi_1}\|^2$ and {\em
rejects $x$ with probability} $\|\qubit{\psi_2}\|^2$. Only the
non-halting superposition $\qubit{\psi_3}$
continues to the next step and $V$ is said to {\em continue (to
the next step) with probability} $\|\qubit{\psi_3}\|^2$. The
probability that $x$ is accepted (rejected, resp.) within the first
$t$ steps is thus the sum, over all $i\in[1,t]_{\integer}$, of
the probabilities with which $V$ accepts (rejects, resp.) $x$ at
the $i$th step.

When the verifier is a 1qfa, the verifier's transition function $\delta$
must satisfy the following additional condition: for
all $q,q'\in Q$, $\sigma\in\check{\Sigma}$, and $\gamma,\gamma'\in\Gamma$,
it must hold that $\delta(q,\sigma,\gamma,q',\gamma',d) =0$ if $d\neq +1$ (\ie the tape
head does not move to the right). Unlike 2qfa verifiers,
a 1qfa verifier must stop running after applying $\delta$ at scanning $\dollar$ and then performing a projective measurement. In other words, the 1qfa verifier completely stops by the time the verifier's tape head moves off $\$$
(thus, the tape head actually stops at $\cent$ since the input tape is circular).
Therefore, on any input $x$, the 1qfa verifier halts in at most $|x|+2$ steps.

In contrast to the verifier, the prover $P$ has a semi-infinite private
tape and accesses the input $x$ and the communication cell. For the prover's private tape, let $\Delta$
be a tape alphabet, which includes a special blank symbol $\#$.
The prover is assumed to alter only a {\lq\lq}finite{\rq\rq} initial
segment of his private tape at every step. Let $\PP$ be the
Hilbert space spanned by $\{\qubit{y}\mid y\in \Delta^{\infty}_{fin}\}$,
where $\Delta^{\infty}_{fin}$ is the set of all infinite sequences
of tape symbols containing only a finite number of non-blank symbols.
The {\em prover's visible configuration space} is the Hilbert space $\MM\otimes\PP$.
Formally, the prover $P$ is specified by a series
$\{U_{P,i}^x\}_{x\in\Sigma^*,i\in\nat^+}$ of unitary operators, each of which acts on the  prover's visible
configuration space, such that $U_{P,i}^x$ is of the form
$S_{P,i}^x\otimes I$, where $\dim(S_{P,i}^x)$ is finite and $I$ is the
identity operator. Such a series of operators is often called the
{\em prover's strategy} on the input $x$. To refer to the
strategy on $x$, we often use the notation $P_x$; namely,  $P_x=\{U_{P,i}^{x}\}_{i\in\nat^{+}}$. With this notation, the prover can be expressed as $\{P_{x}\}_{x\in\Sigma^*}$.
If the prover has string $y\in\Delta_{fin}^{\infty}$ on his private tape
and scans symbol $\gamma$ in the communication cell, then he applies
$U_{P,i}^x$ to the quantum state $\qubit{\gamma}\qubit{y}$ at the
$i$th step of the prover's turn.
If $U_{P,i}^x\qubit{\gamma}\qubit{y} =\sum_{\gamma',y'}\alpha_{\gamma',y'}^{i}\qubit{\gamma'}\qubit{y'}$,
then the prover changes $y$ into $y'$ and replaces $\gamma$ by $\gamma'$ with amplitude $\alpha_{\gamma',y'}^{i}$.

A {\em (global) configuration} consists of the four items:
$V$'s inner state, $V$'s tape head position, the content of the communication cell, and the content of $P$'s
private tape. We express a superposition of such configurations
of $(P,V)$ on input $x$ as a vector in the Hilbert space $\VV_{|x|}\otimes\MM\otimes\PP$,
which is called the {\em (global) configuration space} of $(P,V)$
on the input $x$; in other words, a (global) configuration is a pure quantum state of the form $\qubit{q,k}\qubit{\gamma}\qubit{y}$, indicating that $V$ is in inner state $q$, its tape head is at cell $k$, $\gamma$ is in the communication cell,
and $P$'s private tape contains $y$.  A global configuration $\xi$ is called a {\em halting configuration} (a {\em non-halting configuration}, resp.) if  $\xi$ contains a halting (non-halting, resp.) inner state of $V$.

A {\em computation} of the QIP protocol $(P,V)$ on the input $x$ constitutes
a series of superpositions of configurations resulting from alternate applications
of unitary operations of the prover $P$ and the verifier $V$ including his projective   measurements  in the following manner.
The computation of $(P,V)$ on  $x$ starts with the global initial configuration $\qubit{q_0,0}\qubit{\#}\qubit{\#^{\infty}}$,
where the verifier is in his initial configuration and the prover's
private tape consists only of the blank symbol $\#$.
The two players $P$ and $V$ apply their unitary
operators  $P_x$ and $U_\delta^x$ (as well as measurements) {\em in
turn}, starting with the verifier's move. A projective measurement is made after every move of the verifier to determine whether $V$ is in a halting inner state.
Through the communication cell, the
two players exchange communication symbols, which naturally cause the two players to be entangled.
When the prover (verifier, resp.) writes symbol $\sigma\in\Gamma$ in the communication cell, we customarily say that the prover (verifier, resp.) {\em sends} $\sigma$ to the verifier (prover., resp.),  where $\sigma$ is viewed as a {\lq\lq}message.{\rq\rq}
More precisely, when $\qubit{q,k}\qubit{\gamma}\qubit{y}$ is a current
global configuration, $V$ changes it into $(U_{\delta}^{x}\otimes I_1)\qubit{q,k}\qubit{\gamma}\qubit{y} = U_{\delta}^{x}\qubit{q,k}\qubit{\gamma} \otimes \qubit{y}$, where $I_1$ is the  identity operator acting on $\PP$. After $V$ applies the  projective  measurement $E_{non}$, the global configuration becomes $(E_{non}\otimes I_1)(U_{\delta}^{x}\otimes I_1)\qubit{q,k}\qubit{\gamma}\qubit{y}$.
Finally, $P$ changes $\qubit{q,k}\qubit{\gamma}\qubit{y}$ into $(I_2\otimes U_{P,i}^{x})\qubit{q,k}\qubit{\gamma}\qubit{y} = \qubit{q,k} \otimes U_{P,i}^{x}\qubit{\gamma}\qubit{y}$, where $I_2$ is the identity operator acting on $\VV_{|x|}$.
A superposition $\qubit{\Phi_i}$ of global configurations at the $i$th step is defined recursively as $\qubit{\Phi_0}=\qubit{q_0,0}\qubit{\#}\qubit{\#^{\infty}}$,  $\qubit{\Phi_{2i+1}} = (E_{non}\otimes I_1) (U_{\delta}^{x}\otimes I_1)\qubit{\Phi_{2i}}$, and $\qubit{\Phi_{2i+2}} = (I_2\otimes U_{P,{i+1}}^{x}) \qubit{\Phi_{2i+1}}$ for every $i\in\nat$. For example,
the superposition of global configurations after the $2i+1$st step becomes
\[
(E_{non}\otimes I_1) (U_{\delta}^{x}\otimes I_1) (I_2\otimes U_{P,i}^{x}) \cdots
(U_{\delta}^{x}\otimes I_1)(I_2\otimes U_{P,1}^{x})
(E_{non}\otimes I_1)(U_{\delta}^{x}\otimes I_1)\qubit{q_0,0}\qubit{\#}\qubit{\#^{\infty}}.
\]
The series $\ket{\Phi_0},\ket{\Phi_1},\ldots$ therefore becomes a computation of $(P,V)$ on $x$.

Given any global configuration $\xi$, a {\em local computation path} ending with (or leading to) $\xi$ in computation $(\qubit{\Phi_0},\qubit{\Phi_1},\ldots,\qubit{\Phi_n})$
of the QIP protocol $(P,V)$ on a given input is a series $(\xi_0,\xi_1,\ldots,\xi_m)$ of global configurations satisfying the following four conditions: $\qubit{\xi_0}=\qubit{q_0,0}\qubit{\#}\qubit{\#^{\infty}}$, $(E_{non}\otimes I_1) (U_{\delta}^{x}\otimes I_1)\qubit{\xi_{2i}}$ contains $\qubit{\xi_{2i+1}}$ with non-zero amplitude for all $i\in[0,\floors{(m-1)/2}]_{\integer}$, $(I_2\otimes U_{P,{i+1}}^{x}) \qubit{\xi_{2i+1}}$ contains $\qubit{\xi_{2i+2}}$ with non-zero amplitude for all $i\in[0,\floors{(m-2)/2}]_{\integer}$, and $\xi_m$ equals $\xi$.
Moreover, a {\em (global) computation path} ending with $\xi$ is a local computation path $(\xi_0,\xi_1,\ldots,\xi_m)$ ending with $\xi$ for which $\qubit{\xi_i}$ appears in $\qubit{\Phi_i}$ with non-zero amplitude for every $i\in[0,m]_{\integer}$.
Each (global) computation path ends when $V$ enters a certain halting inner state along this computation path.
Furthermore, we define
the overall probability that $(P,V)$ {\em accepts} ({\em rejects}, resp.) the input $x$ to be the
limit, as $t\rightarrow\infty$, of the probability that $V$
accepts (rejects, resp.) $x$ within the first $t$ steps. We use the
notation $p_{acc}(x,P,V)$ ($p_{rej}(x,P,V)$, resp.) to denote
the overall acceptance (rejection, resp.) probability of $x$
by $(P,V)$. We say that $V$ {\em always halts with probability $1$} if, for every input
$x$ and every prover $P^{*}$, the QIP protocol $(P^{*},V)$ reaches halting inner states with
probability $1$. In general, $V$ may not always halt with probability $1$.
Notice that, when we discuss the entire {\em running time} of the QIP
system, we count the number of all steps taken by
the verifier (including measurements) {\em as well as the prover}.

Let $a$ and $b$ be any two real numbers in the unit real
interval $[0,1]$ and let $L$ be any language. We say that {\em $L$
has an $(a,b)$-QIP system} $(P,V)$ (or {\em an $(a,b)$-QIP system
$(P,V)$ recognizes} $L$) if $(P,V)$ is a QIP system and the following two conditions hold
for the corresponding QIP protocol $(P,V)$:
\begin{enumerate}\vs{-1}
  \setlength{\topsep}{0mm}%
  \setlength{\itemsep}{1mm}% original = 1mm
  \setlength{\parskip}{0cm}%
  
\item {\sf (completeness)} for any $x\in L$, the QIP protocol $(P,V)$ accepts $x$ with probability at least $a$, and

\item {\sf (soundness\footnote{As Lipton \cite{Lip89} demonstrated, this form of the soundness condition cannot be, in general, replaced by the following weaker form: {\lq\lq}$(P,V)$ accepts $x$ with probability at most $1-b$.{\rq\rq} See \cite{DS92} for a discussion.})} for any $x\not\in L$ and any prover $P^*$, the QIP protocol
$(P^*,V)$ rejects $x$ with probability at least $b$.
\end{enumerate}

Note that an $(a,a)$-QIP system has the error probability at most
$1-a$. This paper discusses only the QIP systems whose error
probabilities are bounded from above by certain constants lying in
the real interval $[0,1/2)$. For this reason, we simply say that {\em $L$ has a QIP system} if there exists a constant (an error bound) $\epsilon\in[0,1/2)$ such that $L$ has a $(1-\epsilon,1-\epsilon)$-QIP system.

Given any pair $a,b\in[0,1]$, the notation $\qip_{a,b}(\pair{\RR})$, where $\pair{\RR}$ is a set of restrictions,
denotes a class of all languages recognized by certain
$(a,b)$-QIP systems with the restrictions specified by $\pair{\RR}$.
In addition, we define $\qip(\pair{\RR})$ as the union  $\bigcup_{\epsilon>0}\qip_{1/2+\epsilon,1/2+\epsilon}(\pair{\RR})$.
In this paper, we shall focus our attention on the following three basic
restrictions $\pair{\RR}$:
$\pair{1qfa}$ (\ie 1qfa verifiers), $\pair{2qfa}$ (\ie 2qfa verifiers), and
$\pair{poly\mbox{-}time}$ (\ie expected polynomial running time). As an example,  $\qip(2qfa,poly\mbox{-}time)$ denotes the language class
defined by QIP systems with expected polynomial-time 2qfa verifiers.
Other types of restrictions will be discussed in later sections.

%%%%%%%%%%%%%
\section{What if Provers Behave Classically?}
\label{sec:classical-prover}

To promote a better understanding of the roles of provers in our QIP systems described  in Section \ref{sec:basic-def}, we shall examine a variant of those  systems. Recall that, in Dwork-Stockmeyer IP systems \cite{DS92}, mighty provers are in essence probabilistic machines that probabilistically select messages to send to verifiers.
As noted in \cite{DS92}, it is possible to reduce those provers to deterministic machines without compromising the language recognition power of the corresponding IP systems.

Naturally, we can ask whether or not standard {\lq\lq}quantum{\rq\rq} provers in our QIP systems can be replaced by significantly weaker machines. Among many candidates for weak machines, we consider machines that operate only unitary operators whose entries are all limited to $0$ and $1$. Significance of such operators is that, using a semi-infinite private tape, those restricted operators essentially make their corresponding provers reduce
to merely deterministic machines. Certainly, a real-life  implementation of such restricted operators could be much simpler and easier than implementing arbitrary unitary operators.
For those reasons, we call
a prover {\em classical}\footnote{In a strict sense, a more exact analogy to deterministic prover may demand that even a communication cell behaves classically.}
if the prover's move is dictated by unitary operators whose entries are $0$s and $1$s.

In comparison, we refer to the original provers (described in Section \ref{sec:basic-def}) as {\em quantum provers}. Remember that classical provers are still quantum provers.
Hereafter, the restriction $\pair{c\mbox{-}prover}$ indicates that all provers behave
classically as defined above. In our QIP systems, classical provers may play an essentially different role from quantum provers.

Let us examine the power of classical-prover QIP systems when verifiers are limited to  1qfa's. It is not difficult to prove that  $\mathrm{1QFA}\subseteq\qip(1qfa,c\mbox{-}prover)$ by forcing provers to unalter the communication cell at any step. However, it is not clear whether $\qip(1qfa,c\mbox{-}prover)$ coincides with $\qip(1qfa)$.  In what follows, we shall demonstrate that $\qip(1qfa,c\mbox{-}prover)$ actually contains $\qip(1qfa)$.

\begin{proposition}\label{public-cprover}
$\qip(1qfa) \subseteq \qip(1qfa,c\mbox{-}prover)$.
\end{proposition}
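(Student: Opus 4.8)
The plan is to route the argument through the identity $\qip(1qfa) = \mathrm{REG}$ recalled in Section~\ref{sec:QFA} from \cite{NY09}. Since that equality already gives $\qip(1qfa) = \mathrm{REG}$, it suffices to establish the single containment $\mathrm{REG} \subseteq \qip(1qfa,c\mbox{-}prover)$; combining the two yields $\qip(1qfa) = \mathrm{REG} \subseteq \qip(1qfa,c\mbox{-}prover)$, which is exactly the claim. Thus I would fix an arbitrary regular language $L$, take a DFA $M = (Q_M,\Sigma,\delta_M,s,F)$ for $L$, and exhibit a 1qfa verifier $V$ together with a classical prover $P$ whose protocol $(P,V)$ recognizes $L$ with bounded error (in fact with one-sided certainty on the honest side).

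The protocol I would design has the classical prover feed the verifier the sequence of DFA states. Concretely, set the communication alphabet to contain $Q_M$, and let the honest prover, which sees the whole input, write on its $i$-th turn the state $r_i = \delta_M(s,x_1\cdots x_i)$ into the communication cell. The verifier stores in its (finite) inner state a single ``guessed'' DFA state, initialized to $s$; upon reading input symbol $x_i$ together with the prover's message $m$, it checks whether $m = \delta_M(q,x_i)$ for its currently guessed state $q$, adopts $m$ as its new guess when the check succeeds, and routes every failed check to a rejecting inner state. At the right endmarker the verifier accepts iff its final guessed state lies in $F$. The reason a prover is needed at all---and the reason a 1qfa cannot do this unaided---is that for a fixed symbol the map $q \mapsto \delta_M(q,x_i)$ need not be a permutation, so the verifier's state-update step is not reversible on its own; the classical prover supplies the missing inverse information by sending the claimed next state, and I would realize the verifier's move as a permutation on (inner state, cell) pairs (for instance, swapping the old guessed state into the cell) so that well-formedness (unitarity of $U_\delta^x$) is preserved, while the prover---being deterministic and reversible, hence describable by a $0$/$1$ unitary---uncomputes the returned state and prepares the next one.

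Completeness is then immediate: on $x \in L$ the honest prover sends the true state sequence, every consistency check passes, and the verifier halts in an accepting DFA state. For soundness I would argue that on $x \notin L$ no prover can be accepted with high probability: any message $m \neq \delta_M(q,x_i)$ is detected by the verifier's check and sends that computational branch to rejection, whereas a prover that only ever sends valid transitions forces the verifier's guessed state to track the genuine DFA computation, which ends outside $F$ and is therefore rejected. Because the verifier performs a projective measurement after each of its moves, a prover attempting to cheat in superposition simply has the rejecting branches measured out, contributing the required rejection probability, and the one-way motion of the head prevents any later repair of an earlier inconsistency.

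I expect the main obstacle to be twofold and to lie entirely in the second and third steps rather than in the reduction: first, arranging the verifier's checking-and-updating move so that it is genuinely a well-formed (unitary) 1qfa transition---this is where the swap-into-the-cell device and the cooperation of the reversible classical prover are essential, including the routine but necessary completion of the transition to a unitary on all communication symbols---and second, pushing the soundness analysis through against \emph{arbitrary} quantum provers (not merely the honest classical one), where I would rely on the per-step measurement together with the strictly one-way structure to guarantee that the total rejection probability on inputs outside $L$ is bounded away from zero.
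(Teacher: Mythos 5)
Your proposal is correct and takes essentially the same route as the paper: both arguments reduce the claim, via the identity $\qip(1qfa)=\mathrm{REG}$ from \cite{NY09}, to the single containment $\mathrm{REG}\subseteq\qip(1qfa,c\mbox{-}prover)$, which the paper obtains by noting that the proof of $\mathrm{REG}\subseteq\qip_{1,1}(1qfa)$ in \cite{NY09} works unchanged for classical provers, while you spell that protocol out explicitly (the prover feeds the DFA state sequence, and the verifier checks consistency and swaps the old state into the communication cell to keep its transitions unitary). Your explicit construction is exactly the kind of protocol the paper's citation points to, so the two proofs coincide in substance.
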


\begin{proof}
It was shown in \cite[Proposition 4.2]{NY09} that $\mathrm{REG} \subseteq \qip_{1,1}(1qfa)$.
In a similar way, we can prove that $\mathrm{REG} \subseteq \qip(1qfa,c\mbox{-}prover)$.
Since $\qip(1qfa)=\mathrm{REG}$ \cite[Theorem 4.1]{NY09}, we obtain the desired containment  $\qip(1qfa)\subseteq \qip(1qfa,c\mbox{-}prover)$.
\end{proof}

\sloppy
Next, we shall examine the case of 2qfa verifiers.
Unlike the 1qfa-verifier case, any containment between $\qip(2qfa)$ and $\qip(2qfa,c\mbox{-}prover)$ is currently unknown;
nonetheless, we can verify that
$\qip(2qfa,poly\mbox{-}time,c\mbox{-}prover)$ contains $\mathrm{2QFA}(poly\mbox{-}time)$. Therefore, the proper inclusion $\mathrm{REG} \subsetneqq \qip(2qfa,poly\mbox{-}time,c\mbox{-}prover)$
follows as a direct consequence of the result $\mathrm{REG}\subsetneqq\mathrm{2QFA}(poly\mbox{-}time)$ \cite{KW97}.
The following theorem further strengthens this separation to Arthur-Merlin proof systems.

\begin{theorem}\label{am-cprover}
\begin{enumerate}
  \setlength{\topsep}{0mm}%
  \setlength{\itemsep}{1mm}% original = 1mm
  \setlength{\parskip}{0cm}%
  
\item  $\am(2pfa)\subsetneqq \qip(2qfa,c\mbox{-}prover)$.

\item $\am(2pfa,poly\mbox{-}time)\subsetneqq \qip(2qfa,poly\mbox{-}time,c\mbox{-}prover)\nsubseteq \am(2pfa)$.
\end{enumerate}
\end{theorem}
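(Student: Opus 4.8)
The plan is to establish the two containments by \emph{simulation} and the strictness (together with the non-containment) by exhibiting a single witness language.

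For the containments I would start from an Arthur--Merlin protocol for a language $L$ whose verifier is a 2pfa and convert it into a QIP protocol whose verifier is a 2qfa and whose prover is classical. The verifier retains the finite control of the 2pfa and replaces each public coin flip by a \emph{quantum} coin flip whose outcome it immediately writes into the communication cell; the honest classical prover then (i) copies this outcome onto its private tape by a $0/1$ controlled-copy operator and (ii) returns the message that the optimal Merlin would send, which is a deterministic function of the public-coin history and hence is again realized by a $0/1$ operator. Because every coin is broadcast through the communication cell and recorded on the prover's tape, tracing out the prover leaves the verifier's register in a classical mixture over coin sequences, so the decohered quantum computation reproduces, step for step, the probabilistic computation of the 2pfa. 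This gives completeness: against the honest classical prover the acceptance probability equals that of the Arthur--Merlin protocol. For the time-bounded statement in Part 2, the simulation incurs only constant overhead per step and induces the same distribution over coin sequences, so the expected running time is preserved and $\am(2pfa,poly\mbox{-}time)\subseteq\qip(2qfa,poly\mbox{-}time,c\mbox{-}prover)$ follows from the identical construction.

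The delicate point, and the step I expect to be the main obstacle, is \emph{soundness}: for $x\notin L$ I must bound the acceptance probability against \emph{every} admissible cheating prover. Here the restriction to $0/1$ operators is essential. A classical prover acts as a permutation on the joint (communication cell, private tape) basis, so it cannot manufacture new superpositions; the only way it could hope to beat the best Arthur--Merlin prover is to refrain from copying a broadcast coin and route the resulting coherent branches so that they later reconverge and interfere on the verifier's side. I would rule this out by designing the verifier to \emph{commit} to each coin before the prover replies, for instance by letting the coin value influence a distinguishing feature of the verifier's own visible configuration (inner state or head position) that is not subsequently erased, so that branches carrying different coin histories stay mutually orthogonal in the verifier's register whatever permutation the prover applies. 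Once interference across distinct coin histories is impossible, the interaction with any classical prover collapses to an ordinary public-coin interaction with some deterministic Merlin strategy, and the Arthur--Merlin soundness bound for $L$ applies verbatim. This reduction is precisely the ``much easier analysis'' that classical provers afford; making the commitment argument watertight for a finite-state verifier, which cannot store an arbitrarily long coin history, is the technical crux, and I expect to localize it to one coin at a time using the broadcast-and-copy discipline together with the per-step halting measurement.

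For the strict inclusions and the non-containment I would produce one witness language $L^{*}$ that is recognized by a poly-time 2qfa verifier interacting with a classical prover yet lies outside $\am(2pfa)$. The natural source is the separation $\qip(2qfa,poly\mbox{-}time)\nsubseteq\am(2pfa)$ already recorded in the overview (result (iii) of \cite{NY09}): I would take its witness language together with its protocol and check that a classical prover suffices, the entire quantum workload being carried by the 2qfa verifier while the prover only supplies positional information that the verifier then checks by a deterministic strategy, exactly as in the $Center$ example. This yields $L^{*}\in\qip(2qfa,poly\mbox{-}time,c\mbox{-}prover)\setminus\am(2pfa)$, which is the asserted non-containment in Part 2. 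Since $\qip(2qfa,poly\mbox{-}time,c\mbox{-}prover)\subseteq\qip(2qfa,c\mbox{-}prover)$ and $\am(2pfa,poly\mbox{-}time)\subseteq\am(2pfa)$, the same $L^{*}$ witnesses the strictness of both containments, completing Parts 1 and 2.
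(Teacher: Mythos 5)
Your handling of the strictness claims and the non-containment matches the paper: the witness is the marked-palindrome language $Pal_{\#}$ underlying the separation $\qip(2qfa,poly\mbox{-}time)\nsubseteq\am(2pfa)$ of \cite{NY09}; the paper simply checks that the protocol of \cite[Lemma 5.2]{NY09} works verbatim when the prover is classical (the prover only signals positional information, the quantum work being done by the verifier), and then both proper inclusions follow from that single non-containment by exactly the inclusion-chasing you give. The genuine gap lies in the containments $\am(2pfa)\subseteq\qip(2qfa,c\mbox{-}prover)$ and $\am(2pfa,poly\mbox{-}time)\subseteq\qip(2qfa,poly\mbox{-}time,c\mbox{-}prover)$, where your plan diverges from the paper's and does not close.

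Two concrete problems. First, your verifier ``retains the finite control of the 2pfa,'' but a 2pfa's transition function is in general irreversible, so it cannot be realized as a well-formed (unitary) 2qfa as is; the discarded information must go somewhere. The paper handles this by having the verifier tag the communication cell with the pre-transition data (the symbol $(p,1)$ at a coin flip in state $p$, the symbol $(\hat{p},d_i)$ after a nondeterministic move) and by relying on the prover to erase these tags, with the verifier rejecting outright whenever the cell is not blank when it should be; your sketch never addresses well-formedness. Second, and more seriously, your soundness rests on a ``commitment'' mechanism that keeps coin branches orthogonal inside the verifier's own visible configuration, and, as you yourself concede, this is the unresolved crux: a finite-state verifier has only $O(|Q|\cdot n\cdot |\Gamma|)$ visible configurations while the number of coin histories grows without bound, so distinct histories must collide, and the only register that could store the history is the prover's tape---precisely what a cheating prover may refuse to use. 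The paper needs no commitment at all. It first replaces the AM protocol by the characterization of $\am(2pfa)$ through bounded-error 2npfa's \cite{CHPW98}, machines alternating probabilistic and nondeterministic moves, so that the prover's entire role is to supply nondeterministic choices and soundness becomes a universal statement over \emph{static} choice sequences rather than adaptive Merlin strategies. It then observes that against a classical prover every amplitude arising in the computation is a nonnegative real (verifier: $0$, $1/\sqrt{2}$, $1$; prover: $0$, $1$), so destructive interference is impossible, the prover cannot annihilate any rejection path, and the computation is effectively a classical branching process; hence whatever choices the prover induces, the 2npfa soundness bound transfers and the verifier rejects with probability at least $1-\epsilon$. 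Your correct observation that a $0/1$ prover ``cannot manufacture new superpositions'' is the right ingredient, but it must be pushed to this non-destructive-interference conclusion, in combination with the 2npfa reformulation, rather than to a commitment scheme that finite memory cannot support.
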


\begin{proof}
In a quantum-prover model, it was shown in \cite[Lemma 5.2]{NY09} that the language  $Pal_{\#}=\{x\# x^R\mid x\in\{0,1\}^*\}$ (where $x^R$ is $x$ in the reverse order) of {\em marked even-length palindromes} belongs to $\qip(2qfa,poly\mbox{-}time)$.
By a careful examination of the proof, we find that
the same proof works for classical provers.
This fact immediately places $Pal_{\#}$ into
$\qip(2qfa,poly\mbox{-}time,c\mbox{-}prover)$.
Hence, the non-containment of $\qip(2qfa,poly\mbox{-}time,c\mbox{-}prover)$ inside $\am(2pfa)$
naturally follows because $Pal_{\#}$ is located
outside of $\am(2pfa)$ \cite{DS92}.
This separation further leads to the difference between $\am(2pfa)$ and $\qip(2qfa,c\mbox{-}prover)$. Similarly, we can show the difference between $\am(2pfa,poly\mbox{-}time)$ and
$\qip(2qfa,poly\mbox{-}time,c\mbox{-}prover)$ since, if they are equal, we obtain $\qip(2qfa,poly\mbox{-}time,c\mbox{-}prover) \subseteq \am(2pfa,poly\mbox{-}time) \subseteq
\am(2pfa)$, contradicting the earlier non-containment.

To complete the proof, we shall prove that $\am(2pfa)\subseteq \qip(2qfa,c\mbox{-}prover)$. Since the proof that begins below
works for any time-bounded model, we also obtain the remaining claim that $\am(2pfa,poly\mbox{-}time) \subseteq \qip(2qfa,poly\mbox{-}time,c\mbox{-}prover)$.

Let $L$ be any language in $\am(2pfa)$ over alphabet $\Sigma$. We want to show that $L$ is also in $\qip(2qfa,c\mbox{-}prover)$.
The important starting point is a fact that $L$ can be recognized as follows
by special finite automata $M$ called  {\em 2npfa's} \cite{CHPW98} that make probabilistic moves and nondeterministic moves in turn. If $x\in L$, then there exists a series of nondeterministic choices by which $M$ halts in accepting states with probability at least $1-\epsilon$; otherwise, for every series of nondeterministic choices, $M$ halts in rejecting states with probability at least $1-\epsilon$, where $\epsilon$ is an appropriate constant in $[0,1/2)$.

Now, we take a 2npfa $M=(Q,\check{\Sigma},\delta_M,q_0,Q_{acc},Q_{rej})$
with nondeterministic states and probabilistic states that recognizes $L$ with error probability at most $\epsilon$,
where $0\leq \epsilon < 1/2$ and $Q$ is made up of four disjoint sets $Q_R$ (a  set of probabilistic states), $Q_N$ (a set of nondeterministic states),
$Q_{acc}$ (a set of accepting states) and $Q_{rej}$ (a set of rejecting states).
To simplify our proof, we force $M$ to satisfy the following two extra conditions:
(i) $M$'s tape head does not stay still at any step and (ii) whenever $M$ tosses a fair coin, the tape head moves only to the right.
It is not difficult to modify any 2npfa to meet those two conditions.
Let us recall the transitions of 2npfa's from \cite{CHPW98}. Given any $(p,\sigma,q,d)\in Q\times\check{\Sigma}\times Q\times \{\pm 1\}$,
a transition $\delta_M(p,\sigma,q,d)$ takes a value in $\{0,1/2,1\}$ in the following manner.
(1) When $p\in Q_R$, there exist two distinct states $p_0,p_1\in Q$ for which  $\delta_M(p,\sigma,q,d)=1/2$ holds if  $(q,d)\in\{(p_0,1),(p_1,1)\}$,
and $\delta_M(p,\sigma,q,d)=0$ otherwise. This indicates that $M$ changes its  inner state from $p$ to each of $p_0$ and $p_1$ with equal probability $1/2$
and moves its tape head rightward (by Condition (ii)).
(2) In the case where $p\in Q_N$, there are $m$ ($\in\nat^{+}$) pairs
$(p_1,d_1),(p_2,d_2),\ldots,(p_m,d_m)\in Q\times \{\pm1\}$ for which
$\delta_M(p,\sigma,q,d)=1$ holds if $(q,d)\in \{(p_1,d_1),\ldots,(p_m,d_m)\}$,
and $\delta_M(p,\sigma,q,d)=0$ otherwise. This means that $M$ chooses an index $i  \in[1,m]_{\integer}$ nondeterministically, changes the inner state from $p$ to $p_i$,
and moves the tape head to the left ($d_i=-1$) or to the right ($d_i=1$).
Notice that deterministic moves are always treated as a special case of nondeterministic moves.
Based on the above machine $M$, we shall construct the desired QIP system $(P,V)$
with classical prover $P$ for $L$.

Let $x$ be any input string of length $n$. Let $Q'= Q\cup \{\hat{p}\mid p\in Q\}$ be a set of inner states and let
$\Gamma=(Q'\times\{\pm1\})\cup\{\#,\kappa\}$ be a communication alphabet,
where $\hat{p}$ is a new inner state associated with $p$ and $\kappa$ is a fresh non-blank symbol.
The verifier $V$ carries out the procedure that follows $\delta_M$, by which $V$ simulates $M$ step by step.

Let us consider any step at which $M$ tosses a fair coin in probabilistic state $p$
by applying a transition $\delta_M(p,\sigma,p_0,1)=\delta_M(p,\sigma,p_1,1)=1/2$ for two distinct states $p_0,p_1\in Q$.
The verifier $V$ checks whether $\#$ is in the communication cell.
Unless this is the case, $V$ rejects $x$ immediately; otherwise, $V$ makes the corresponding $(Q\times\Gamma)$-transition
$V_\sigma\qubit{p}\qubit{\#} = \frac{1}{\sqrt{2}}(\qubit{p_0}\qubit{(p,1)}+ \qubit{p_1}\qubit{(p,1)})$ with $D(p_0,(p,1))= D(p_1,(p,1)) = +1$.
The verifier expects a prover to erase the symbol $(p,1)$
in the communication cell
by overwriting it with $\#$. This erasure of symbols
guarantees $V$'s move to be unitary.

Next, let us consider any step at which $M$ makes a nondeterministic choice $(p,d)$ in state $p\in Q_{N}$, namely, $\delta_M(p,\sigma,q,d)=1$, where $m\in\nat^+$ and $(q,d)\in\{(p_1,d_1),\ldots,(p_m,d_m)\}\subseteq Q\times\{\pm1\}$.
For this $M$'s step, $V$ needs two steps to simulate it.
The verifier $V$ enters a rejecting inner state immediately
unless the communication cell contains $\#$. Now, assume that $\#$ is in the communication cell.
Without moving its tape head, $V$ first sends the designated symbol $\kappa$ to a prover,
requesting a pair $(p',d')$ in $Q\times\{\pm 1\}$ to return.
This is done by the special $(Q\times\Gamma)$-transition $V_\sigma\qubit{p}\qubit{\#}=\qubit{\hat{p}}\qubit{\kappa}$ with $D(\hat{p},\kappa)=0$.
The verifier forces a prover to return a valid form of
nondeterministic choice (namely, $\delta_M(p,\sigma,p',d')=1$)
by entering a rejecting inner state whenever the prover writes any other symbol.
Once $V$ receives a valid pair $(p_i,d_i)$,
he makes the $(Q\times\Gamma)$-transition $V_\sigma\qubit{\hat{p}}\qubit{(p_i,d_i)} = \qubit{p_i}\qubit{(\hat{p},d_i)}$ with $D(p_i,(\hat{p},d_i))=d_i$
and expects a prover to erase the communication symbol $(\hat{p},d_i)$.

The honest prover $P$ must blank out the communication cell at the end of
every simulation step of $V$ and, in request of $V$ with the symbol $\kappa$,  $P$ returns a {\lq\lq}correct{\rq\rq} nondeterministic
choice to $V$ (if any).
Assuming $x\in L$, there are a series of nondeterministic choices along which $M$ accepts $x$ with probability at least $1-\epsilon$. Since
the honest prover $P$ sends such a series step by step, $P$ can guide $V$ to make correct nondeterministic choices. Moreover, $P$ allows $V$ to simulate correctly $M$'s probabilistic moves by erasing $V$'s communication symbols. Hence, $V$ successfully reaches $M$'s outcomes
with the same error probability, and thus the protocol $(P,V)$ accepts $x$ with probability at least $1-\epsilon$.

Next, consider the case where $x\not\in L$. Notice that no matter how nondeterministic choices are made, $M$ rejects $x$ with probability at least $1-\epsilon$. Take a dishonest classical prover $P^*$ that maximizes the acceptance probability of $V$ on $x$.
This particular prover $P^*$ must clear out the communication cell whenever $V$ asks him to do so since,
otherwise, $V$ immediately rejects $x$ and thus lowers the acceptance probability, a contradiction against the choice of $P^*$.
Since $P^*$ is classical, all the computation paths of $V$
have nonnegative amplitudes, which cause only non-destructive interference.
This indicates that $P^*$ cannot annihilate any existing computation path of $V$.
On request for a nondeterministic choice, $P^*$ must return any one of valid nondeterministic choices since, otherwise, $V$ rejects immediately.
With a series of nondeterministic choices of $P^*$, if $V$ rejects $x$ with probability less than $1-\epsilon$,
then our simulation implies that $M$ also rejects $x$ with probability less than $1-\epsilon$.
This is a contradiction against our assumption. Hence, $V$ must reject $x$ with probability at least $1-\epsilon$.
Therefore, $(P,V)$ is a classical-prover $(1-\epsilon,1-\epsilon)$-QIP system for $L$.
\end{proof}

In the above proof, we cannot replace classical provers
by quantum provers, mainly because a certain quantum prover may fool the constructed verifier by (i) returning a superposition of nondeterministic choices instead of choosing one of the two nondeterministic  choices and
(ii) using negative amplitudes to make the verifier's quantum simulation destructive.

In the end of this section, we shall present a QIP protocol using classical provers for the non-regular language $Center =\{x1y\mid x,y\in\{0,1\}^*, |x|=|y|\}$,
which is known to be in $\am(2pfa)$ but not in $\am(2pfa,poly\mbox{-}time)$ \cite{DS92}.
In the QIP protocol described in the next proof, an honest prover signals the location of the center bit of a given input
and then a verifier tests the correctness of the location by employing the {\em quantum Fourier transform} (or {\em QFT}, in short)
in a fashion similar to \cite{KW97}. An {\em interaction} in a QIP protocol  constitutes a verifier's transition, a projective measurement, and a prover's move.

\begin{lemma}\label{qip-vs-am}
For any constant $\epsilon\in(0,1)$, $Center\in \qip_{1,1-\epsilon}(2qfa,poly\mbox{-}time,c\mbox{-}prover)$.
\end{lemma}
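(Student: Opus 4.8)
The plan is to build a QIP system $(P,V)$ in which an honest classical prover $P$ indicates the location of the center bit and the $2qfa$ verifier $V$ (i) confirms that this bit is a $1$ and (ii) verifies, by a QFT-based length comparison in the style of \cite{KW97}, that the indicated cell is genuinely equidistant from the two endmarkers. Place an input $w$ of length $N$ on cells $1,\ldots,N$, with $\cent$ on cell $0$ and $\$$ on cell $N+1$. Observe that $w\in Center$ exactly when $N$ is odd and the middle symbol $w_{(N+1)/2}$ equals $1$; equivalently, there is a cell $c$ carrying a $1$ whose distance $c$ to $\cent$ equals its distance $N+1-c$ to $\$$. The verifier therefore sweeps its head rightward from $\cent$, interacting with $P$ at every step: $P$ is expected to answer ``continue'' until the center cell, where it answers ``here.'' At the first cell $c$ where $P$ says ``here,'' $V$ reads the tape and rejects unless $w_c=1$, and then launches a self-contained comparison subroutine testing $c=N+1-c$; if the head reaches $\$$ with no ``here'' signal, $V$ rejects. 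Since this guiding sweep is deterministic and single-branched, the head occupies a definite cell at every step, so a classical prover is forced to commit to one definite cell $c$.

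The comparison subroutine is the quantum heart of the construction and is executed by $V$ alone, with $P$ required to keep the communication cell blank (any deviation causing immediate rejection). Following \cite{KW97}, $V$ prepares an internal register in a uniform superposition by a QFT and sends two (more generally $p$) ``ballistic'' branches away from cell $c$ toward the endmarkers; exploiting the circularity of the tape, the branches are choreographed so as to reconverge to a single common configuration precisely when $c=N+1-c$, at which moment the inverse QFT returns the register to its zero state and $V$ accepts. Thus the subroutine accepts with probability exactly $1$ when $c$ is the true center, whereas for any off-center $c$ the branches meet the endmarkers at different times, fail to reconverge, and the inverse QFT leaves the register spread out, so that $V$ accepts with probability at most some fixed constant strictly below $1$. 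Repeating the test a constant number $r=O(\log(1/\epsilon))$ of times (resetting the head to $\cent$ between passes) and accepting only when every repetition passes --- or, alternatively, enlarging the branch count $p$ to $\Omega(1/\epsilon)$ --- pushes the off-center acceptance probability below $\epsilon$ while preserving perfect completeness; as each pass costs $O(N)$ steps, $V$ runs in linear, hence expected polynomial, time.

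To assemble soundness, fix any $x\notin Center$ and any classical prover $P^*$. Because $P^*$ uses only $0/1$ amplitudes, all of its computation paths carry nonnegative amplitudes, so it can neither fork the guiding sweep into a superposition of marks nor inject destructive interference into the comparison subroutine (precisely as in the proof of Theorem~\ref{am-cprover}); writing any non-blank symbol during the subroutine merely forces rejection. Hence $P^*$ effectively commits to a single cell $c$ with $w_c=1$, and since $x\notin Center$ this $c$ cannot be the true center, so $c\neq N+1-c$ and each repetition independently rejects with constant probability; the overall rejection probability is therefore at least $1-\epsilon$. Completeness is immediate, as on $x\in Center$ the honest $P$ steers $V$ to the true center, the symbol test passes, and every repetition accepts with certainty, so $(P,V)$ accepts with probability $1$. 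This places $Center$ in $\qip_{1,1-\epsilon}(2qfa,poly\mbox{-}time,c\mbox{-}prover)$.

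The main obstacle is the design of the ballistic comparison itself: one must orchestrate the branches' directions, speeds, and wrap-arounds on the circular tape so that they reconverge to one configuration exactly on the diagonal $c=N+1-c$ and --- crucially --- fail to reconverge for every off-center $c$, ruling out spurious coincidences such as antipodal meetings, all while keeping each induced operator $U_\delta^x$ unitary and ensuring that the verifier halts. Securing perfect completeness simultaneously with a rejection probability bounded away from $0$ uniformly over all off-center $c$ is the delicate technical point; once a single such test is available, the boosting and the classical-prover argument above are routine.
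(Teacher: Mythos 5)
Your overall architecture coincides with the paper's: an honest classical prover signals the center cell, the verifier checks that the cell holds a $1$, and a QFT-based interference test in the style of \cite{KW97} verifies equidistance from the endmarkers, with the branch count taken to be $N=\lceil 1/\epsilon\rceil$ for soundness (your second alternative) and with the observation that a classical prover can neither send a superposition of marks nor create destructive interference. However, there is a genuine gap: the entire technical content of the statement is the explicit construction of the comparison test, and you defer exactly that, conceding at the end that designing the ``ballistic'' branches so that they reconverge precisely when $c=N+1-c$, remain unitary, and give an off-center acceptance probability bounded away from $1$ is ``the main obstacle.'' A proof of this lemma \emph{is} that construction together with its quantitative analysis; without it, what remains is a plan whose feasibility is the very thing in question. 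Moreover, your sketch of the test (branches sent in opposite directions toward the two endmarkers, reconverging via wrap-around on the circular tape) is not how the difficulty is resolved in the paper, and it is unclear it can be made to work as stated, since opposite-direction branches that wrap around can produce spurious coincidences you would have to rule out.

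The paper's construction avoids these issues with a purely timing-based, one-directional scheme: at the marked cell the verifier splits into $N$ branches of amplitude $1/\sqrt{N}$; branch $j$ moves right to $\dollar$ idling $2(N-j)$ steps in each cell, then sweeps left to $\cent$ idling (roughly) $2j$ steps per cell over the whole tape. If the mark is at position $e$ in an input $y b z$ with $|y|=|z|=n$, the total idling along branch $j$ is $2(n+1-e)(N-j)+2nN$, which is independent of $j$ exactly when $e=n+1$. Hence all branches arrive at $\cent$ simultaneously iff the mark is the true center, in which case a second QFT sends the uniform superposition to the accepting state $t_N$ with probability $1$; off-center, the $N$ paths have pairwise distinct lengths, cannot interfere, and the QFT gives each of them squared amplitude at most $1/N^2$ on $t_N$, for total acceptance at most $N\cdot(1/N^2)=1/N\le\epsilon$ in a single pass---no repetition is needed. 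Your repetition alternative ($r=O(\log(1/\epsilon))$ passes with $p=2$ branches) would also require additional care you do not supply: after the QFT the paper's halting measurement ends every path, so to iterate you must keep the accepting branch non-halting, have the prover re-signal a (possibly different) center in each pass, and argue that the desynchronized surviving branches from a failed pass compose correctly across passes.
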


\begin{proof}
Let $\epsilon$ be any error bound in the real interval $(0,1)$ and set $N=\ceilings{1/\epsilon}$.
In what follows, we shall define
the desired QIP protocol that witnesses the membership of $Center$ to
$\qip_{1,1-\epsilon}(2qfa,poly\mbox{-}time,c\mbox{-}prover)$.
Let $\Sigma=\{0,1\}$ be our input alphabet and let $\Gamma=\{\#,1\}$ be our communication alphabet. Our QIP protocol $(P,V)$
comprises four phases. The formal description of the behavior of $V$ is given in Table \ref{table:center} using $(Q\times\Gamma)$-transitions $\{V_{\sigma}\}_{\sigma\in\check{\Sigma}}$. Let $x$ be an arbitrary input.

\begin{table}[ht]
\bs\begin{center}
{\small
\begin{tabular}{|ll|}\hline
$V_{\cent}\qubit{q_0}\qubit{\#} = \qubit{q_0}\qubit{\#}$
& $V_{\$}\qubit{q_0}\qubit{\#} =\qubit{q_{rej,0}}\qubit{\#}$  \\

$V_{\cent}\qubit{q_2}\qubit{1}=\qubit{q_{rej,0}}\qubit{\#}$
& $V_{\$}\qubit{q_0}\qubit{1} =\qubit{q_{rej,1}}\qubit{\#}$ \\

$V_{\cent}\qubit{q_2}\qubit{\#}=\qubit{q_3}\qubit{\#}$
& $V_{\$}\qubit{q_1}\qubit{\#} =\qubit{q_2}\qubit{\#}$ \\

$V_{\cent}\qubit{s_{j,0}}\qubit{1} = \frac{1}{\sqrt{N}}\sum_{l=1}^N \mathrm{exp}(2\pi \imath jl/{N})
\qubit{t_{l}}\qubit{\#}$ ($1\leq j\leq N$)
& $V_{\$}\qubit{q_1}\qubit{1} =\qubit{q_{rej,1}}\qubit{1}$ \\

&  $V_{\$}\qubit{r_{j,0}}\qubit{1} = \qubit{s'_{j,0}}\qubit{1}$  ($1\leq j\leq N$) \\

& $V_{\$}\qubit{s'_{j,0}}\qubit{1} = \qubit{s_{j,0}}\qubit{1}$  ($1\leq j\leq N$) \\

$V_b\qubit{q_0}\qubit{\#}=\qubit{q_1}\qubit{\#}$
& $V_b\qubit{q_1}\qubit{\#}=\qubit{q_0}\qubit{\#}$ \\

$V_b\qubit{q_0}\qubit{1} = \qubit{q_{rej,0}}\qubit{\#}$
& $V_b\qubit{q_1}\qubit{1} = \qubit{q_{rej,0}}\qubit{1}$ \\

$V_b\qubit{q_2}\qubit{\#} = \qubit{q_2}\qubit{\#}$ & $V_b\qubit{q_2}\qubit{1}=\qubit{q_{rej,1}}\qubit{1}$
 \\

$V_b\qubit{q_3}\qubit{\#}=\qubit{q_3}\qubit{\#}$ & \\

$V_1\qubit{q_3}\qubit{1} = \frac{1}{\sqrt{N}}\sum_{j=1}^N \qubit{r_{j,0}}\qubit{\#}$
& $V_0\qubit{q_3}\qubit{1} = \qubit{q_{rej,-1}}\qubit{\#}$ \\

$V_b\qubit{r_{j,0}}\qubit{1} = \qubit{r'_{j,N-j}}\qubit{1}$ ($1\leq j\leq N-1$)
& $V_b\qubit{r_{j,0}}\qubit{\#}=\qubit{q_{rej,j}}\qubit{1}$ ($1\leq j\leq N-1$) \\

$V_b\qubit{r_{j,k}}\qubit{1} = \qubit{r'_{j,k}}\qubit{1}$ ($1\leq k\leq N-j$, $1\leq j\leq N-1$)
& $V_b\qubit{r_{N,0}}\qubit{1} = \qubit{r_{N,0}}\qubit{1}$ \\

$V_b\qubit{r'_{j,k}}\qubit{1} = \qubit{r_{j,k-1}}\qubit{1}$ ($2\leq k\leq N-j$, $1\leq j\leq N-1$)  & \\

$V_b\qubit{r'_{j,1}}\qubit{1} = \qubit{r_{j,0}}\qubit{1}$, ($1\leq j\leq N-1$) &  \\

$V_b\qubit{s_{j,k}}\qubit{1}=\qubit{s_{j,k-1}}\qubit{1}$ ($2\leq k\leq j$, $1\leq j\leq N$)
& $V_b\qubit{s_{j,0}}\qubit{1} = \qubit{s_{j,j}}\qubit{1}$ ($1\leq j\leq N$) \\

$V_b\qubit{s_{j,1}}\qubit{1}=\qubit{s_{j,0}}\qubit{1}$ ($1\leq j\leq N$)
& $V_b\qubit{s_{j,0}}\qubit{\#}=\qubit{q_{rej,N+j}}\qubit{\#}$ ($1\leq j\leq N$)\\

& \\

$D(q_0)=D(q_1)=D(q_3)=1$, $D(q_2)=-1$ & $D(r_{j,0})=1$ ($1\leq j\leq N$) \\
$D(r_{j,k}) =D(r'_{j,k})= 0$ ($1\leq j\leq N-1$, $k\neq 0$)
& $D(s_{j,0})= -1$ ($1\leq j\leq N$) \\
 $D(s_{j,k})= D(s'_{j,0}) = 0$ ($1\leq j\leq N$, $k\neq 0$)  & $D(t_j)=0$ ($1\leq j\leq N$) \\
\hline
\end{tabular}
}
\caption{{\small $(Q\times\Gamma)$-transitions $\{V_{\sigma}\}_{\sigma\in\check{\Sigma}}$ of $V$ for $Center$ with $b\in\{0,1\}$.
Here, $t_N$ is a unique accepting inner state,
while $q_{rej,j}$ ($-1\leq j\leq 2N-1$) and $t_l$ ($1\leq l<N$) are all rejecting inner states.
The table excludes obvious transitions to rejecting inner states
when a prover changes
the communication symbol $1$ to $\#$ during the third and fourth phases.}}\label{table:center}
\end{center}
\end{table}

1) In the first phase, the verifier $V$ checks whether $|x|$ is odd by moving the tape head toward $\$$
together with switching two inner states $q_0$ and $q_1$.
To make deterministic moves during this phase,
$V$ forces a prover to return only the blank symbol $\#$ at any step by entering a rejecting state whenever the prover deceptively
sends back non-blank symbols.
When $|x|$ is odd, $V$ enters the inner
state $q_3$ after moving its tape head
back to $\cent$.
Hereafter, we consider only the case where the input $x$ has an
odd length.

2) In the second phase, $V$ moves its tape head rightward by sending
$\#$ to a prover
until $V$ receives $1$ from the prover.
Receiving $1$ from the prover, $V$ willingly rejects $x$ unless its tape
head is currently scanning $1$ on the input tape.
Otherwise, the third phase starts. During the third and fourth phases,
whenever the prover changes the communication symbol $1$ to $\#$, $V$ immediately rejects the input.

3) Assume that the tape head is now scanning $1$ on the input tape. In the third phase, the computation splits into $N$ parallel branches by applying $V_1\qubit{q_3}\qubit{1}$.
This step is called the {\em first split} and  it
generates the $N$ distinct inner states $r_{1,0},r_{2,0},\ldots,r_{N,0}$ with equal amplitudes $1/\sqrt{N}$.
The tape head then moves deterministically toward $\$$ in the following manner:
along the $j$th computation path ($1\leq j\leq N$) associated with the inner state $r_{j,0}$,
the tape head idles for $2(N-j)$ steps in each tape cell before moving to the next one by changing inner states as
\[
r_{j,0} \xrightarrow{1} r'_{j,N-j} \xrightarrow{0} r_{j,N-j-1} \xrightarrow{0} r'_{j,N-j-1} \xrightarrow{0} r_{j,N-j-2} \xrightarrow{0} \cdots \xrightarrow{0} r'_{j,1} \xrightarrow{0} r_{j,0},
\]
where each number over arrows indicates the direction of the tape head.
When the tape head reaches $\$$, it steps back one cell by applying $V_{\dollar}\qubit{r_{j,0}}\qubit{1}$ and $V_{\dollar}\qubit{s'_{j,0}}\qubit{1}$,  and then starts the fourth phase.

4) During the fourth phase, the tape head along the $j$th computation path keeps moving leftward by idling in each cell for $j$ steps, changing inner states as
\[
s_{j,0} \xrightarrow{0} s_{j,j} \xrightarrow{0} s_{j,j-1} \xrightarrow{0} s_{j,j-2} \xrightarrow{0}  \cdots \xrightarrow{0} s_{j,1} \xrightarrow{-1} s_{j,0}
\]
until the tape head reaches $\cent$.
At $\cent$, the computation splits again into $N$ parallel branches (called the {\em second split}) by applying the QFT $V_{\cent}\ket{s_{j,0}}\qubit{1}$,  yielding either the accepting inner state $t_N$ or one of the rejecting inner states in
$\{t_j\mid 1\leq j< N\}$.

{}From Table \ref{table:center}, it is not difficult to check that $V$ is indeed
well-formed (namely, $U^x_{\delta}$ is unitary for every $x\in\Sigma^*$).
The honest prover $P$ should return $1$ exactly at the time when $V$ scans the center bit of an input string
and at the time when $V$ sends $\#$ to $P$
during the third and fourth phases.
At any other step, $P$ should apply the identity operator.

Now, we shall check the completeness and soundness of the obtained
QIP system $(P,V)$ for $Center$.
First, consider a positive instance $x$, which is of the form $y1z$ for certain strings $y$ and $z$ of the same length,
say, $n$. Since the honest prover $P$ signals just before
$V$ reads the center bit $1$ of $x$,
the first split given by $V_1\qubit{q_3}\qubit{1} = \frac{1}{\sqrt{N}}\sum_{j=1}^{N}\qubit{r_{j,0}}\qubit{\#}$ occurs at the middle of $x$ during the third phase (more precisely, exactly after $n$ steps of $V$ from the start of the second phase) after reading $\cent y1$.
Along the $j$th computation path ($1\leq j\leq N$) associated with
the inner state $r_{j,0}$ chosen at the first split,
$V$ idles for $2n(N-j)$ steps while reading $z$
and also idles
for $2nj$ steps while reading the whole input.
Overall, the idling time elapses for the duration of $2n(N-j)+2nj=2nN$, which is independent of $j$.
Hence, all the $N$ computation paths created at the aforementioned first split must have the same length, and thus the superposition of global configurations prior to the second split becomes $\frac{1}{\sqrt{N}}\sum_{j=1}^{N}\qubit{s_{j,0},0}\qubit{\#}\qubit{\Psi}$ for an appropriate quantum state $\qubit{\Psi}$ in the Hilbert space $\PP$ associated with the prover's private tape.
The QFT given by the transition $V_{\cent}\qubit{s_{j,0}}\qubit{1} = \frac{1}{\sqrt{N}}\sum_{l=1}^{N}\exp(2\pi \imath jl/N)\qubit{t_{l}}\qubit{\#}$ makes all the global configurations converge to the verifier's visible accepting configuration $\qubit{t_N}\qubit{\#}$; that is, $\frac{1}{\sqrt{N}}\sum_{l=1}^{N} \left( \frac{1}{\sqrt{N}}\sum_{j=1}^{N} \exp(2\pi \imath  jl/N)\right) \qubit{t_l,0}\qubit{\#}\qubit{\Psi}$, which equals $\qubit{t_N,0}\qubit{\#}\qubit{\Psi}$.
Therefore, $V$ accepts $x$ with probability $1$.

On the contrary, suppose that $x$ is a negative instance of the form
$x=y0z$ with $|y|=|z|=n$. Consider the second, third, and fourth phases.
To minimize the rejection probability, a dishonest prover $P^*$ must
send the symbol $1$ just before $V$ scans $1$ on the input tape during
the second phase and then $P^*$ must maintain $1$ because, otherwise,
$V$ immediately rejects $x$. Note that
there is no way for classical provers to pass both $1$ and $\#$ in a form of superposition to deceive the verifier.
Let us assume that the $e$th symbol of $x$ is 1 and $P^*$ sends
$1$ during the $e$th interaction, where $1\leq e\leq 2n+1$.
Obviously, $e\neq n+1$ follows because the center bit of $x$ is $0$.
Consider the first split caused by applying $V_{1}\qubit{q_3}\qubit{1} = \frac{1}{\sqrt{N}}\sum_{j=1}^{N}\qubit{r_{j,0}}\qubit{\#}$.
For each index $j\in[1,N]_{\integer}$, let $p_{j}$ be the computation path following the $j$th branch that starts with the inner state $r_{j,0}$
generated at the first split.
Along this computation path $p_{j}$, the idling time totals $2(|x|-e)(N-j)+2nj=2(n+1-e)(N-j)+2nN$.
Since $1\leq j\leq N$, two computation paths $p_{j}$ and $p_{j'}$ for any distinct values $j$ and $j'$ must have different lengths.
Just before the second split, along the $j$th computation path, we obtain a quantum state $\frac{1}{\sqrt{N}}\qubit{s_{j,0},0}\qubit{\#}\qubit{\Psi_j}+ \ket{\Delta_j}$, where $\ket{\Delta_j}$ does not contain $\qubit{s_{j,0}}$. At the second split,
the QFT further generates $N$ parallel branches  $\frac{1}{N}\sum_{j=1}^{N}\exp(2\pi\imath jl/N)\qubit{t_l,0}\qubit{\#}\qubit{\Psi_j}+ \ket{\Delta'_j}$, which equals
$\frac{1}{N}\qubit{t_N,0}\qubit{\#}\qubit{\Psi_j} + \ket{\Delta''_j}$, where $\ket{\Delta'_j}$ is obtained from $\ket{\Delta_j}$ by the QFT and $\ket{\Delta''}$ is  an appropriate quantum state not containing $\qubit{t_N}$. Thus, at most
one of the computation paths can reach $\qubit{t_N,0}\qubit{\#}$. Hence, the probability of $V$ reaching such an acceptance configuration is no more than $1/N^2$.
Since there are $N$ computation paths $\{p_j\}_{1\leq j\leq N}$ generated at the first split, the overall acceptance probability is at most $N\times(1/N^2)=1/N$. Since $V$'s computation paths always end with certain halting states, it follows that $V$ rejects $x$ with probability $\geq 1- 1/N \geq 1-\epsilon$.
\end{proof}

%%%%%%%%%%%%%%%%
\section{What If a Verifier Reveals Private Information?}\label{sec:public-coin}

In Dwork-Stockmeyer IP systems \cite{DS92}, the prover's view of the verifier's computation is limited to a small window (\ie a communication cell) and the strength of a prover's strategy hinges on the amount of the information
that a verifier is willing to reveal to the prover through this window.
Let us consider a situation, in their IP system, that a verifier always unalters the communication cell.
Since the behavior of a 2pfa verifier depends on not only messages from a prover but also its internal random choices (or its coin flips), no prover can gain more than the information on the number of the verifier's moves, and therefore any prover knows little of the
verifier's actual configurations.

In Babai's {\em Arthur-Merlin proof systems} \cite{Bab85} (also known as {\lq\lq}public-coin{\rq\rq} IP systems \cite{GS86}), on the contrary,
the verifier must always pass the information on his next move resulting from   his internal random choices, and such information suffices for
the mighty prover to keep track of the verifier's configurations.
Dwork and Stockmeyer \cite{DS92} defined $\am(\pair{restriction})$ as a variant of their original IP systems by requiring their verifiers to publicly reveal  next inner states and tape head directions determined by internal coin flips.

Here, we shall consider a straightforward quantum analogy of the above public-coin IP systems and investigate their language recognition power. In our QIP system, we demand the verifier to reveal through   the communication cell his choice of non-halting inner state as well as his tape head direction {\em at every step}.
Formally, we define a {\em public QIP system} as follows, whereas we sometimes call the original QIP systems defined in Section \ref{sec:basic-def} {\em private QIP systems}  in comparison.

\begin{definition}\label{ourpublic}
A 2qfa-verifier QIP system $(P,V)$ is called {\em public}\footnote{As another variant of public QIP system, we may require  $U_{\delta}^{x}\qubit{q,k,\gamma}$ to satisfy the same equality only for non-halting states $q'$, instead of $q$. See \cite{NY04} for more details.}
if the verifier's linear operators $\{U_{\delta}^{x}\}_{x\in\Sigma^*}$
induced by $\delta$ satisfy the following {\em publicness condition}:
for any tuple $(x,q,k,\gamma)$, if $q$ is a non-halting state, then  $U_\delta^x\qubit{q,k,\gamma}$ must be of the form
$\sum_{q',\xi,d} \delta(q,x_{(k)},\gamma,q',\xi,d)
\qubit{q',k+d\ (\mbox{mod }|x|+2),\xi}$, where $\xi=(q',d)$, $x_{(0)}=\cent$, and $x_{(|x|+1)}=\dollar$.
\end{definition}

In particular, when the verifier $V$ is a 1qfa, we omit the
information on tape head direction $d$ from the communication symbol
$\xi=(q',d)$ in Definition \ref{ourpublic} since $V$ always moves its tape head to the right (\ie $d=+1$) and the information on $d$ is obviously redundant.
To emphasize the {\lq\lq}publicness{\rq\rq} of this new system,
we use the specific notation $\pair{public}$.
For instance, $\qip(2qfa,public)$ indicates a collection of all languages recognized by public QIP systems with 2qfa verifiers.
By direct analogy with $\am(2pfa)$, however, we might possibly write $\mathrm{QAM}(2qfa)$ for $\qip(2qfa,public)$.  In Definition \ref{ourpublic}, since there is no restriction on provers, all public QIP systems with 1qfa verifiers are naturally private QIP systems with the same verifiers. It therefore holds that, for example,
$\qip(1qfa,public) \subseteq \qip(1qfa)$ and $\qip(2qfa,public) \subseteq \qip(2qfa)$.

We shall further demonstrate the power of public QIP systems. Now, we shall concentrate on the language class $\qip(1qfa,public)$.
Unlike $\mathrm{1QFA}\subseteq\qip(1qfa)$, the containment $\mathrm{1QFA}\subseteq \qip(1qfa,public)$,
which seems to hold naturally at a quick glance, is still unknown.
The difficulty of proving this containment is caused by the publicness condition of the public QIP systems.
Because the verifier must announce its next move to a prover, he unintentionally helps
the prover make his local system entangled with the prover's local system; however, we do not know how to get rid of this type of entanglement.

Despite the publicness condition, we can still show that the power of $\qip(1qfa,public)$ is well beyond $\mathrm{1QFA}$.
Let us consider the language
$Zero = \{w0\mid w\in\{0,1\}^*\}$ in $\qip(1qfa,public)$, which
is known to reside outside of $\mathrm{1QFA}$ \cite{KW97}. In the next lemma, we shall prove that $Zero$ has a public QIP system with 1qfa verifiers.

\begin{lemma}\label{zero-public-1qfa}
$Zero\in  \qip_{1,1}(1qfa,public)$.
\end{lemma}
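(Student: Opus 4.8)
The plan is to build a public QIP protocol $(P,V)$ with a 1qfa verifier in which the honest prover, who has access to the whole input $x$, signals the position of the last input symbol, and $V$ merely verifies that this marked symbol is a $0$ immediately followed by the right endmarker $\$$. Concretely, I would let the communication alphabet consist of a blank $\#$ together with a single marker $m$ (and the inner states that $V$ is forced to reveal), and I would equip $V$ with a ``running'' state $q_{run}$, an ``accept-pending'' state $q_{ap}$, and a supply of accepting and rejecting states. While $V$ scans data symbols and receives $\#$, it stays in $q_{run}$ and moves right. When $V$ scans a symbol $\sigma$ together with the marker $m$, it enters $q_{ap}$ if $\sigma=0$ and rejects if $\sigma=1$; from $q_{ap}$ it accepts on reading $\$$ and rejects on reading any data symbol; and reaching $\$$ in $q_{run}$ (no marker ever received, or the empty input) also rejects. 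The honest prover sends $m$ exactly when $V$ scans the final data symbol and sends $\#$ otherwise.

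For completeness, on $x=w0$ the honest prover marks the last symbol, so $V$ enters $q_{ap}$ while scanning that $0$, then reads $\$$ and accepts; since $V$'s trajectory is entirely deterministic here, it accepts with probability $1$. For soundness I would argue combinatorially that, for $x\notin Zero$, no computation path of nonzero amplitude ever reaches an accepting configuration, regardless of the (possibly superposed) prover: $V$ accepts only by reading $\$$ in state $q_{ap}$, and $q_{ap}$ is entered only by scanning a $0$ under the marker; hence an accepting path requires the marker to fall on a $0$ occupying the final data position, i.e. $x\in Zero$. If $x$ ends in $1$ the marked last symbol forces a rejection, if the marker is placed too early then $V$ reads a data symbol from $q_{ap}$ and rejects, and if no marker arrives $V$ rejects at $\$$; the empty input is rejected likewise. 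Because the acceptance subspace is simply unreachable, interference among a cheating prover's branches cannot manufacture acceptance, and the rejection probability is exactly $1$.

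The two things that still need care are well-formedness and the publicness condition, and I expect their interaction to be the main obstacle. I would make the single-symbol transition of $V$ injective in $(q,\gamma)$ --- reusing the standard device of distinct rejecting states $q_{rej,1},q_{rej,2},\dots$ to avoid collisions when several transitions head into halting states --- so that each operator $V_{\sigma}$ extends to a unitary and $U_\delta^x$ is well-formed for every $x$. The delicate part is that the publicness condition demands that the symbol written into the communication cell at each step be the verifier's newly chosen inner state, which must be reconciled with $V$'s reading of the prover's message while keeping the step unitary; I would handle this by organizing the communication alphabet so that the revealed state is recorded without erasing the information needed for reversibility. The crucial point that makes the construction survive publicness --- in contrast to the still-open containment $\mathrm{1QFA}\subseteq\qip(1qfa,public)$ --- is that the honest verifier is in a definite inner state at every step, so revealing that state leaks only a classical, deterministic value and creates none of the prover--verifier entanglement that obstructs the general case.
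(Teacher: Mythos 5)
Your proposal is correct and follows essentially the same route as the paper's proof: the honest prover signals the position of the final input symbol, the 1qfa verifier deterministically checks that the marked symbol is a $0$ immediately followed by $\dollar$, soundness holds because no accepting configuration is reachable with nonzero amplitude on non-members (so prover interference is irrelevant), and well-formedness is secured by making each $V_{\sigma}$ injective via a supply of distinct rejecting states. The only real difference is the signaling polarity: your prover writes an explicit marker $m$, whereas the paper's prover signals by erasing the verifier's publicly announced state back to $\#$ (the verifier keeps running while reading its own echo $q_0$), which is precisely how the paper concretely resolves the publicness-versus-unitarity tension that you flag as the delicate step but leave informal.
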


The following proof exploits the prover's ability to inform the location of the rightmost bit $0$ of an instance in $Zero$. To simplify the description of 1qfa verifiers $V=(Q,\check{\Sigma},\delta,q_0,Q_{acc},Q_{rej})$, for each $q\in Q$, we intend to abbreviate the communication symbol $(q,1)$ as $q$, because  $V$'s tape head direction is always $+1$.

\begin{proofof}{Lemma \ref{zero-public-1qfa}}
Let us show that $Zero$ has a public $(1,1)$-QIP system $(P,V)$ with
1qfa verifier $V$.
To describe the desired protocol $(P,V)$, let $\Sigma=\{0,1\}$ be its input alphabet and let
$Q_{non}=\{q_0,q_1\}$, $Q_{acc}=\{q_{acc,0},q_{acc,1},q_{acc,-1}\}$ and $Q_{rej}=\{q_{rej,i},q'_{rej,i}\mid i\in\{0,\pm1\}\}$
be respectively the sets of  non-halting inner states, of accepting inner states, and of rejecting inner states of $V$.
Using the above-mentioned abbreviation, our communication alphabet $\Gamma$ can be defined as $\{\#,q_0,q_1\}$.

The protocol of $V$ is described in the following. See Table \ref{table:public-Lzero} for the formal description of $V$'s $(Q\times\Gamma)$-transitions.
Let $x=yb$ be any input string, where $b\in\{0,1\}$.
The verifier $V$ stays in the initial state $q_0$ by publicly announcing $q_0$  (\ie sending the communication symbol $q_0$ to a prover)
until the prover returns $\#$. Whenever $V$ receives $\#$, he immediately rejects $x$ by entering $q_{rej,-1}$ (after applying either $V_{\dollar}\qubit{q_0}\qubit{\#}$ or $V_1\qubit{q_0}\qubit{\#}$)
if its current scanning symbol is different from $0$.
On the contrary, if $V$ is scanning $0$, then he waits for the next tape symbol by entering $q_1$. If the next symbol is $\dollar$, then he accepts $x$ after applying $V_{\dollar}\qubit{q_1}\qubit{q_1}$; otherwise, he rejects $x$ by entering $q'_{rej,i}$ (after applying either $V_0\qubit{q_1}\qubit{q_i}$ or $V_1\qubit{q_1}\qubit{q_i}$).
Our honest prover $P$ does not alter the communication cell until $V$ reaches the right end of $\cent y$ and $P$ must return $\#$ just before $V$ reads the  symbol $b$ so that $V$ can apply $V_b\qubit{q_1}\qubit{\#}$.

\begin{table}[ht]
\bs\begin{center}
\begin{tabular}{|lll|}\hline
$V_{\cent}\qubit{q_0}\qubit{\#}=\qubit{q_0}\qubit{q_0}$
&  $V_{0}\qubit{q_0}\qubit{\#}=\qubit{q_{1}}\qubit{q_{1}}$  &
$V_{1}\qubit{q_{0}}\qubit{\#} =\qubit{q_{rej,-1}}\qubit{q_{rej,-1}}$ \\

$V_{\cent}\qubit{q_0}\qubit{q_j} = \qubit{q_{rej,j}}\qubit{q_{rej,j}}$
& $V_{0}\qubit{q_0}\qubit{q_0}=\qubit{q_0}\qubit{q_0}$ & $V_{1}\qubit{q_0}\qubit{q_0}=\qubit{q_0}\qubit{q_0}$  \\

$V_{\cent}\qubit{q_1}\qubit{q_i} = \qubit{q'_{rej,i}}\qubit{q'_{rej,i}}$ &
$V_{0}\qubit{q_{0}}\qubit{q_1} =\qubit{q_{rej,1}}\qubit{q_{rej,1}}$ & $V_{1}\qubit{q_{0}}\qubit{q_1} =\qubit{q_{rej,1}}\qubit{q_{rej,1}}$  \\

$V_{\dollar}\qubit{q_0}\qubit{q_i} =\qubit{q_{rej,i}}\qubit{q_{rej,i}}$ &
$V_{0}\qubit{q_1}\qubit{q_i}=\qubit{q'_{rej,i}}\qubit{q'_{rej,i}}$
& $V_{1}\qubit{q_1}\qubit{q_i}=\qubit{q'_{rej,i}}\qubit{q'_{rej,i}}$  \\

$V_{\dollar}\qubit{q_{1}}\qubit{q_i} =\qubit{q_{acc,i}}\qubit{q_{acc,i}}$
& & \\
\hline
\end{tabular}
\caption{{\small $(Q\times\Gamma)$-transitions $\{V_{\sigma}\}_{\sigma\in\check{\Sigma}}$ of $V$ for $Zero$ with $i\in \{0,\pm 1\}$ and $j\in\{0,1\}$. The symbol $q_{-1}$ denotes $\#$.}}\label{table:public-Lzero}
\end{center}
\end{table}

It still remains to prove that $(P,V)$ recognizes $Zero$ with certainty. Consider the case where our input $x$ is of the form $y0$ for a certain string $y$. Since $x$ is in $Zero$, the honest prover $P$ returns $\#$ just after   $V$ reads the rightmost symbol of $\cent y$. This information helps $V$ locate the end of $y$.
Moving its tape head rightward, $V$ confirms that the next scanning symbols are $0\dollar$ and then enters an accepting inner state (either $q_{acc,0}$, $q_{acc,1}$, or $q_{acc,-1}$) with probability $1$. On the contrary, assume that $x=y1$.
Clearly, the best adversary $P^*$ needs to return either $q_0$ or $\#$ (or their superposition).
If $P^*$ keeps returning $q_0$, then $V$ eventually rejects $x$ and increases the rejection probability.
Since $V$'s computation is essentially deterministic, this strategy only decreases the chance of cheating by $P^*$.
To make the best of the adversary's strategy, $P^*$ must return the communication symbol $\#$ just before $V$ scans $0$.
Nonetheless, when $P^*$ returns $\#$, $V$ applies $V_0\qubit{q_0}\qubit{\#}$ and then
applies $V_0\qubit{q_1}\qubit{q_i}$ or $V_1\qubit{q_1}\qubit{q_i}$,
where $i\in\{0,\pm1\}$ and $q_{-1}=\#$. Obviously, this leads to a rejecting inner state of $V$ with certainty.
Therefore, the QIP system $(P,V)$ recognizes $Zero$ with certainty.
\end{proofof}

It follows from Lemma \ref{zero-public-1qfa} that $\qip(1qfa,public)$ is powerful enough to contain certain languages that cannot be recognized by 1qfa's alone.
It is also possible to show that  $\qip(1qfa,public)$ contains all languages recognized by 1qfa's whose transition amplitudes are limited to $\{0,1\}$. Those 1qfa's are known as {\em 1-way (deterministic) reversible finite automaton} ({\em 1rfa}, in short) \cite{AF98}.
For our convenience, let $\mathrm{1RFA}$ denote the collection of all languages recognized by such 1rfa's.
As Ambainis and Freivalds \cite{AF98} showed, $\mathrm{1RFA}$ is characterized exactly as the collection of all languages
that can be recognized by 1qfa's with success probability $\geq 7/9+\epsilon$ for certain constants $\epsilon>0$.

\begin{theorem}\label{QFA-in-public-qfa}
$\mathrm{1RFA}\subsetneqq \qip_{1,1}(1qfa,public)\nsubseteq \mathrm{1QFA}$.
\end{theorem}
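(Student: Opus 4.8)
The statement splits into the proper inclusion $\mathrm{1RFA}\subsetneqq\qip_{1,1}(1qfa,public)$ and the non-containment $\qip_{1,1}(1qfa,public)\nsubseteq\mathrm{1QFA}$, and my plan is to observe that both the strictness of the former and the latter are simultaneously witnessed by the language $Zero$. Indeed, Lemma \ref{zero-public-1qfa} places $Zero$ in $\qip_{1,1}(1qfa,public)$, while $Zero\notin\mathrm{1QFA}$ by \cite{KW97}; since every $1$rfa is in particular a $1$qfa (its amplitudes lie in $\{0,1\}$), we have $\mathrm{1RFA}\subseteq\mathrm{1QFA}$, whence $Zero\notin\mathrm{1RFA}$ as well. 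Thus, once the inclusion $\mathrm{1RFA}\subseteq\qip_{1,1}(1qfa,public)$ is in hand, $Zero$ separates $\mathrm{1RFA}$ from $\qip_{1,1}(1qfa,public)$ and witnesses the non-containment in $\mathrm{1QFA}$ at once. The entire burden therefore rests on establishing $\mathrm{1RFA}\subseteq\qip_{1,1}(1qfa,public)$.

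For that inclusion I would start from a $1$rfa $M=(Q_M,\check{\Sigma},\delta_M,s_0,F_{acc},F_{rej})$ recognizing a given $L$, and exploit that determinism together with reversibility makes each symbol $\sigma\in\check{\Sigma}$ act on the states by an \emph{injection} $\pi_\sigma\colon p\mapsto\delta_M(p,\sigma)$. The plan is to build a public $1$qfa verifier $V$ whose inner states consist of a copy of $Q_M$ (inheriting its accept/reject classification) together with a supply of fresh rejecting states, and whose working communication alphabet is $\{\#\}\cup\{[p]\mid p\in Q_M\}$, where $[p]$ is the symbol used to publicly announce state $p$. After the one-time start transition out of $\ket{q_0}\ket{\#}$, the verifier simulates $M$ along the \emph{echo-diagonal} by $V_\sigma\ket{p}\ket{[p]}=\ket{\pi_\sigma(p)}\ket{[\pi_\sigma(p)]}$, thereby advancing $M$ and, as the publicness condition demands, re-announcing the new state; every remaining pair $\ket{p}\ket{\gamma}$ with $\gamma\neq[p]$ is routed to its own distinct fresh rejecting state. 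The honest prover simply applies the identity to the communication cell at every turn, so that $V$ always rereads the very state it has just announced and runs through $M$'s computation undisturbed.

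The main obstacle is well-formedness. Publicness forces the post-transition communication symbol to equal the new inner state, so the images of all non-halting inputs are pinned to the announcement diagonal $\mathrm{span}\{\ket{q'}\ket{[q']}\}$ of dimension $|Q|$, whereas the $|Q_M|\cdot|\Gamma|$ non-halting inputs far outnumber $|Q_M|$. I would resolve this exactly as the $Zero$ construction implicitly does: reversibility guarantees that the honest echo-diagonal $\{\ket{p}\ket{[p]}\mid p\in Q_M\}$ is carried injectively by $\pi_\sigma$, and one then absorbs every off-diagonal (i.e.\ tampered) input by sending it to a private rejecting state. Allotting $\Theta(|Q_M|^2)$ such rejecting states makes the non-halting part of each $V_\sigma$ an isometry onto the full announcement diagonal, after which it extends to a genuine permutation---hence a unitary---of the basis of $\mathrm{span}\{\ket{q}\ket{\gamma}\}$; the never-reached transitions out of halting states are filled in arbitrarily to complete the unitary.

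Completeness and soundness then fall out of the determinism of $M$. For $x\in L$ the honest (identity) prover leaves $V$ to traverse the echo-diagonal and reach an accepting state of $M$ with certainty, giving acceptance probability $1$. For $x\notin L$ I would argue that, against any prover $P^{*}$, the deterministic design keeps at most one branch alive: whatever superposition $P^{*}$ writes into the cell, only the component equal to the announced symbol $[p]$ survives the next transition, while every other component is thrown into a rejecting state and removed by the measurement. The surviving branch is thus a single verifier configuration that faithfully replays $M$ and therefore shares $M$'s verdict, which for $x\notin L$ is rejection, so no amplitude ever escapes rejection and $V$ rejects with probability $1$. This yields $L\in\qip_{1,1}(1qfa,public)$, completing the inclusion and, with the observation of the first paragraph, the theorem.
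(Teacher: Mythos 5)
Your proposal is correct and takes essentially the same route as the paper: the paper likewise reduces the whole theorem to $\mathrm{1RFA}\subseteq\qip_{1,1}(1qfa,public)$, derives the non-containment and the properness from Lemma \ref{zero-public-1qfa} together with $Zero\notin\mathrm{1QFA}$ and $\mathrm{1RFA}\subseteq\mathrm{1QFA}$, and establishes the inclusion by exactly your construction---a verifier that simulates the 1rfa while publicly announcing its next state, routes every tampered pair $\qubit{p}\qubit{\gamma}$ with $\gamma\neq[p]$ to its own fresh rejecting state, and runs against the identity-acting honest prover, with reversibility of $\delta_M$ yielding well-formedness and determinism yielding the $(1,1)$ bounds. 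The one point of difference is that you address the unitary-extension count explicitly where the paper merely calls it ``clear from the list''; note, though, that your fix tacitly requires the relaxed publicness of the footnote to Definition \ref{ourpublic} (announcement demanded only for non-halting target states), since if the fresh rejecting states' announcements had to join $\Gamma$, the constrained inputs, numbering $|Q_{non}|\cdot|\Gamma|$, would outnumber the announcement diagonal whenever $|Q_{non}|\geq 2$---a tension present in the paper's own table as well.
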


\begin{proof}
Firstly, we shall show that $\qip_{1,1}(1qfa,public)$ contains
$\mathrm{1RFA}$.
Take an arbitrary set $L$ recognized by a 1rfa $M=(Q,\check{\Sigma},\delta_{M},q_0,Q_{acc},Q_{rej})$,
where $\delta_{M}$ is a {\em reversible transition function} \cite{AF98} from $Q\times\check{\Sigma}\times Q$ to $\{0,1\}$.
This $\delta_{M}$ satisfies that (i) for any pair $(p,\sigma)\in Q\times\check{\Sigma}$, there exists a unique inner state $q\in Q$  for which $\delta_{M}(p,\sigma,q)=1$ and (ii) for any $(q,\sigma)\in Q\times\check{\Sigma}$, there is at most one $p\in Q$ satisfying $\delta_{M}(p,\sigma,q)=1$.

Henceforth, we shall construct a public $(1,1)$-QIP system $(P,V)$ that {\lq\lq}mimics{\rq\rq} a computation of $M$.
The desired 1qfa verifier $V=(Q',\check{\Sigma},\Gamma,\delta,q_0,Q'_{acc},Q'_{rej})$
behaves as follows. Let $Q'_{acc}= Q_{acc}$, $Q'_{rej} = Q_{rej}\cup \{q_{rej,p,q}\mid p\in Q_{non},q\in Q,p\neq q\}$, and
$Q' = \Gamma = Q\cup Q'_{rej}$, provided that $q_{rej,p,q}$'s are all fresh symbols not in $Q$.
Assume that $V$ is in inner state $p$, scanning symbol $b$ on an input tape.
Whenever $M$ changes its inner state from $p$ to $q$ after scanning $b$,
$V$ does so by revealing its next inner state $q$ to a prover. As soon as $V$ finds that the communication symbol has been altered intentionally
by the prover, $V$ immediately rejects the input. This process forces any prover to unalter the content of the communication cell.
Table \ref{table:public-cprover} gives
a list of $(Q\times\Sigma)$-transitions that induces $V$'s strategy  $\delta$. It is clear from the list that $\delta$ is well-formed  because of the reversibility of $\delta_{M}$ and that
the publicness condition for $V$ is met.
Finally, the honest prover $P$ is a prover who does not alter any communication symbol; that is, $P$ applies only the identity operator at every step.

\begin{table}[ht]
\bs\begin{center}
\begin{tabular}{|l|}\hline
$V_{\cent}\qubit{q_0}\qubit{\#}= \qubit{q}\qubit{q}$ if $\delta_{M}(q_0,\cent,q)=1$ \\

$V_{b}\qubit{p}\qubit{p}= \qubit{q}\qubit{q}$ if $\delta_{M}(p,b,q)=1$ \\

$V_{b}\qubit{p}\qubit{q}=  \qubit{q_{rej,p,q}}\qubit{q_{rej,p,q}}$
if $p\neq q$ and $p\in Q_{non}$ \\

\hline
\end{tabular}
\caption{{\small $(Q\times\Gamma)$-transitions $\{V_{\sigma}\}_{\sigma\in\check{\Sigma}}$ of $V$ for $L$ with $b\in\Sigma\cup\{\dollar\}$ and $p,q\in Q$. All inner states $q_{rej,p,q}$ are rejecting states.}}\label{table:public-cprover}
\end{center}
\end{table}

On input $x\in\Sigma^*$, the QIP system $(P,V)$ accepts $x$ with certainty if $x\in L$, since $V$ exactly simulates $M$ by the help of the honest prover $P$.
Let us consider the opposite case where $x\not\in L$.
It is easy to see that the best strategy for a dishonest
prover $P^*$ is to keep any communication symbol unchanged
because any alteration of the communication symbols causes $V$ to reject $x$ immediately and lowers the acceptance probability of $V$.
Against such a prover $P^*$, $V$ obviously enables to reject $x$ with certainty because, in this case, $V$'s final decision is not influenced
by the communication symbols.
Therefore, $(P,V)$ recognizes $L$ with certainty.
Since $L$ is arbitrary, we obtain the desired containment $\mathrm{1RFA}\subseteq\qip_{1,1}(1qfa,public)$.

Secondly, the separation between $\mathrm{1QFA}$ and $\qip_{1,1}(1qfa,public)$ immediately follows from Lemma \ref{zero-public-1qfa} together with the fact that
$Zero$ is not in $\mathrm{1QFA}$ \cite{AF98}. Moreover, since  $\mathrm{1RFA}\subseteq\mathrm{1QFA}$ and $\qip_{1,1}(1qfa,public) \nsubseteq \mathrm{1QFA}$, we can conclude that $\mathrm{1RFA}\neq \qip(1qfa,public)$.
This completes the proof.
\end{proof}

Next, we shall examine public QIP systems whose verifiers are
2qfa's. Similar to Theorem \ref{am-cprover}(2), we can claim the
following two separations.

\begin{theorem}\label{public-vs-am}
\begin{enumerate}
  \setlength{\topsep}{0mm}%
  \setlength{\itemsep}{1mm}% original = 1mm
  \setlength{\parskip}{0cm}%
  
\item $\qip(2qfa,public,poly\mbox{-}time) \nsubseteq\am(2pfa,poly\mbox{-}time)$.

\item $\qip(2qfa,public,poly\mbox{-}time,c\mbox{-}prover) \nsubseteq\am(2pfa,poly\mbox{-}time)$.
\end{enumerate}
\end{theorem}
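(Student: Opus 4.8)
The plan is to exhibit a single witness language, namely $Center=\{x1y\mid x,y\in\{0,1\}^*,\,|x|=|y|\}$, that lies in the public classes on the left but outside $\am(2pfa,poly\mbox{-}time)$. The right-hand non-containment is free, since Dwork and Stockmeyer \cite{DS92} proved $Center\notin\am(2pfa,poly\mbox{-}time)$. Moreover, because classical provers are a special case of quantum provers, we have $\qip(2qfa,public,poly\mbox{-}time,c\mbox{-}prover)\subseteq\qip(2qfa,public,poly\mbox{-}time)$, so statement (1) will follow from statement (2) by this inclusion. Thus I would reduce the whole theorem to the single membership $Center\in\qip_{1,1-\epsilon}(2qfa,public,poly\mbox{-}time,c\mbox{-}prover)$, which amounts to upgrading the protocol of Lemma \ref{qip-vs-am} so that it obeys the publicness condition of Definition \ref{ourpublic} while keeping a classical honest prover and the same (polynomial) running time.

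First I would leave Phases 1 and 2 essentially unchanged. The publicness condition constrains only the verifier's writes, not the prover's, so the prover may still flag the center bit by depositing a distinguished symbol in the communication cell, while the verifier merely reveals its single, definite inner state $q_3$ together with the direction $+1$ at each rightward step and rejects whenever it reads an unexpected symbol. No branch superposition is present here, so no entanglement can arise. The delicate part is Phases 3 and 4, where the first split produces $\frac{1}{\sqrt N}\sum_{j=1}^{N}\qubit{r_{j,0}}$ and publicness forces the verifier to announce its branch-dependent state: the split must now read $V_1\qubit{q_3}\qubit{\sigma}=\frac{1}{\sqrt N}\sum_{j=1}^{N}\qubit{r_{j,0}}\qubit{(r_{j,0},+1)}$, and every subsequent idling transition must likewise echo its current branch label $(r_{j,k},d)$ or $(s_{j,k},d)$ into the cell.

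The idea that keeps completeness alive is that the honest prover applies the identity throughout Phases 3 and 4. Then the communication cell merely mirrors the verifier's own branch-dependent state, while the prover's private tape is never touched and stays in the single blank configuration $\qubit{\#^{\infty}}$ across all $N$ branches. Consequently the branch index $j$ is confined entirely to the verifier-and-cell subsystem, which is exactly what the final QFT consumes. I would therefore redefine the second split to read this state-carrying symbol, $V_{\cent}\qubit{s_{j,0}}\qubit{(s_{j,0},d)}=\frac{1}{\sqrt N}\sum_{l=1}^{N}\exp(2\pi\imath jl/N)\qubit{t_l}\qubit{(t_l,d')}$, whose output symbol $(t_l,d')$ depends only on the Fourier index $l$ and is thus branch-independent. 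For a positive instance the idling times still balance to $2nN$ in every branch, so just before the second split the global state is $\frac{1}{\sqrt N}\sum_{j=1}^{N}\qubit{s_{j,0},0}\qubit{(s_{j,0},d)}\qubit{\#^{\infty}}$ with a uniform prover tape; summing over $j$ collapses the QFT output to $\qubit{t_N,0}\qubit{(t_N,d')}\qubit{\#^{\infty}}$, giving acceptance probability $1$. Well-formedness is inherited from Table \ref{table:center} by the bijective relabelling of communication symbols.

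The main obstacle is precisely this completeness step, and it is the very obstruction flagged earlier for $\qip(1qfa,public)$: a priori, forcing the verifier to reveal the branch states could entangle the prover with $j$ and wash out the interference, as it indeed would if the prover tried to rewrite the cell back to a uniform symbol, since reversibility would then dump $j$ onto its tape and leave the branches mutually orthogonal at the QFT. My resolution sidesteps this by never asking the prover to erase anything during the sensitive phases, so identity keeps its tape uniform and leaves the $j$-dependence exactly where the QFT cancels it. For soundness I would reuse the length-mismatch analysis of Lemma \ref{qip-vs-am} almost verbatim: a classical, hence deterministic, prover can only signal a single definite center position $e\neq n+1$ on a negative instance, the $N$ first-split branches then acquire pairwise distinct lengths, reach the second split at distinct times, and each contributes amplitude at most $1/N$ to $\qubit{t_N}$, for total acceptance at most $1/N\le\epsilon$. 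Any tampering with the echoed symbols during Phases 3 and 4 is caught by the verifier's consistency check and only raises the rejection probability, so the bound is untouched.
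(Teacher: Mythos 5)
Your proposal has a genuine gap in item (1). You reduce (1) to (2) via the claimed inclusion $\qip(2qfa,public,poly\mbox{-}time,c\mbox{-}prover)\subseteq \qip(2qfa,public,poly\mbox{-}time)$, justified by "classical provers are a special case of quantum provers." That reasoning handles completeness only; it fails for soundness. Membership in the $\pair{c\mbox{-}prover}$ class guarantees that no \emph{classical} prover can cheat on negative instances, but says nothing about quantum provers, so the inclusion is not known to hold (in either direction). The paper is explicit about this asymmetry: it states that no containment between $\qip(2qfa)$ and $\qip(2qfa,c\mbox{-}prover)$ is known, and that $Center$ itself is \emph{not} known to belong to $\qip(2qfa)$, precisely because a quantum prover can signal a superposition of center positions and exploit negative amplitudes. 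Your soundness analysis rests on the sentence "a classical, hence deterministic, prover can only signal a single definite center position $e\neq n+1$," which is exactly the step that collapses for quantum provers: distinct pairs (signal position, branch index) can yield computation paths of equal length that interfere at the QFT, and nothing in your argument bounds the resulting acceptance probability. So your plan proves (2) but leaves (1) unproven, and proving (1) with $Center$ would require resolving what the paper treats as an open problem.

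The paper avoids this trap by choosing a different witness, $Upal=\{0^n1^n\mid n\in\nat\}$, which also lies outside $\am(2pfa,poly\mbox{-}time)$ but whose protocol needs \emph{no help} from the prover: the honest prover applies the identity, the verifier runs the branching/QFT routine on its own, and any alteration of the communication cell triggers immediate rejection. Since the prover's only undetected action is to leave the cell alone (possibly entangling its private tape, which cannot raise any branch's accepting amplitude and only suppresses interference), soundness holds against \emph{arbitrary quantum} provers, and membership in both classes of Theorem \ref{public-vs-am} is obtained in one stroke (Lemma \ref{upal-public}). Your treatment of the publicness condition itself---honest prover acts as identity during the split phases, branch labels echoed in the cell are consumed by a QFT whose output symbol is branch-independent, well-formedness by relabelling---is sound and is essentially the same device the paper uses; the flaw is solely the choice of a witness language that requires genuine prover assistance, which is what makes quantum-prover soundness unattainable with current techniques.
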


A language that separates the public QIP systems with 2qfa verifiers
from $\am(2pfa,poly\mbox{-}time)$
is $Upal=\{0^n1^n\mid n\in\nat \}$. Since  $Upal$ resides outside of $\am(2pfa,poly\mbox{-}time)$ \cite{DS92} and $Upal$ belongs to $\mathrm{2QFA}(poly\mbox{-}time)$ \cite{KW97}, the separation $\mathrm{2QFA}(poly\mbox{-}time)\nsubseteq
\am(2qfa,poly\mbox{-}time)$ follows immediately. This separation, however, does not directly imply Theorem \ref{public-vs-am} because, for a technical reason similar to the case of 1qfa verifiers, it is not known whether $\mathrm{2QFA}(poly\mbox{-}time)$
is included in $\qip(2qfa,public,poly\mbox{-}time)$ or even in $\qip(2qfa,public,poly\mbox{-}time,c\mbox{-}prover)$. Therefore, we still need to prove in the next lemma that $Upal$ indeed belongs to both $\qip(2qfa,public,poly\mbox{-}time)$ and $\qip(2qfa,public,poly\mbox{-}time,c\mbox{-}prover)$.

\begin{lemma}\label{upal-public}
\sloppy
For any constant $\epsilon$ in $(0,1]$, $Upal$ belongs to
the intersection $\qip_{1,1-\epsilon}(2qfa,public,poly\mbox{-}time) \cap \qip_{1,1-\epsilon}(2qfa,public,poly\mbox{-}time,c\mbox{-}prover)$.
\end{lemma}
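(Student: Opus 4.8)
The plan is to adapt the $Center$ protocol of Lemma~\ref{qip-vs-am} to the counting task $n=m$ and then to fit it into the public framework using the announce-and-echo device already employed in Lemma~\ref{zero-public-1qfa} and Theorem~\ref{QFA-in-public-qfa}. First I would have the verifier $V$ sweep its head from $\cent$ toward $\dollar$ and check deterministically that the input has the form $0^*1^*$ (rejecting on any occurrence of a $1$ followed by a $0$), accepting the empty input outright; the only nontrivial obstacle that remains is then to certify $n=m$ on inputs $0^n1^m$ of the correct shape. Unlike the $Center$ case, the boundary between the $0$-block and the $1$-block is \emph{directly visible} to $V$ (it is the cell where the scanned symbol switches from $0$ to $1$), so no prover signal is needed to locate it. This is precisely what will allow a single construction to serve both target classes, including the one with classical provers.

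For the counting itself I would reuse the two-QFT interference idea. After the shape check, $V$ performs a first split into $N=\ceilings{1/\epsilon}$ parallel branches with equal amplitudes $1/\sqrt{N}$. On the $j$-th branch $V$ sweeps rightward, idling a $j$-dependent number of extra steps per cell that also depends on the scanned symbol: idling $2j$ steps per $0$ and $2(N-j)$ steps per $1$, so that the total idling along the $j$-th branch equals a $j$-independent quantity plus the term $2j\,(n-m)$. Consequently all $N$ branches have identical length precisely when $n=m$, whereas for $n\neq m$ the $N$ lengths are pairwise distinct. After returning the head to $\cent$, $V$ applies a QFT (the second split). The completeness/soundness bookkeeping is then verbatim the argument of Lemma~\ref{qip-vs-am}: when $n=m$ the branches coincide and interfere constructively into a unique accepting configuration, so $V$ accepts with probability $1$, giving completeness parameter $a=1$; when $n\neq m$ the distinct branch lengths prevent more than one branch from reaching the accepting configuration, so the acceptance probability is at most $N\cdot(1/N^2)=1/N\le\epsilon$, i.e. $V$ rejects with probability at least $1-\epsilon$. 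The whole sweep costs $O(nN)=O(n/\epsilon)$ steps, so the system runs in (expected) polynomial, indeed linear, time.

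To make the system public I would have every non-halting transition of $V$ write its freshly chosen pair $(q',d)$ into the communication cell, and on the next step read that symbol back, entering a dedicated rejecting state (exactly like the states $q_{rej,p,q}$ of Table~\ref{table:public-cprover}) whenever the symbol read differs from the one just announced. Note that this device is compatible with the QFT splits: each branch carries its own communication symbol $(q',d)$, so the read-back can be performed branch by branch. The honest prover $P$ applies only the identity, echoing each announcement back unchanged; against this echo the read-back is deterministic, the composite map $U_\delta^x$ is reversible, and hence $V$ is well-formed and meets the publicness condition of Definition~\ref{ourpublic}. Since the identity has $0/1$ entries, $P$ is simultaneously a classical prover, which places $Upal$ in the $c\mbox{-}prover$ class as well.

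The part I expect to be the main obstacle is soundness against a \emph{cheating} prover in the public setting, because the mandatory announcement of $V$'s next move entangles the two players and could in principle let a prover steer the interference destructively. I would handle it by observing that the prover's only lever is the communication symbol it returns, and any deviation from the echoed value is punished by an immediate transition into a rejecting state; a tampered path is thus removed from the accepting subspace rather than helping. For the classical-prover class this is immediate: the $0/1$ amplitudes force every surviving computation path to carry a nonnegative amplitude, so interference is non-destructive (as in the proof of Theorem~\ref{am-cprover}) and no path left alive by the prover can cancel an accepting path. For the general quantum-prover class I would argue that, restricted to the untampered subspace, $V$'s acceptance statistics are independent of the prover and coincide with the $1/N$ bound computed above, so that no prover strategy can push the acceptance probability on $0^n1^m$ with $n\neq m$ above $1/N\le\epsilon$. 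Assembling these pieces yields membership of $Upal$ in both $\qip_{1,1-\epsilon}(2qfa,public,poly\mbox{-}time)$ and $\qip_{1,1-\epsilon}(2qfa,public,poly\mbox{-}time,c\mbox{-}prover)$.
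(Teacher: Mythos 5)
Your proposal is correct and takes essentially the same approach as the paper's proof: a deterministic shape check, an $N$-way split with $j$-dependent idling so that all branches have equal length exactly when $n=m$, a QFT to merge them, and the announce-and-check publicness device with the identity (hence classical) honest prover, so that one construction witnesses membership in both classes. The only deviations are cosmetic---your idling constants, placing the second QFT at $\cent$ rather than at $\dollar$, and spelling out the soundness discussion that the paper compresses into an appeal to the analysis of Lemma~\ref{qip-vs-am}.
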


\begin{proof}
In what follows, we shall prove that $Upal$ belongs to $\qip_{1,1-\epsilon}(2qfa,public,poly\mbox{-}time)$. The proof for  $Upal\in \qip_{1,1-\epsilon}(2qfa,public,poly\mbox{-}time,c\mbox{-}prover)$ is similar.
Let $N=\ceilings{1/\epsilon}$. Let us define our public QIP system $(P,V)$.
The honest prover $P$ always applies the identity operation at every step.
The verifier $V$ acts as follows. In the first phase, it determines whether an input $x$ is of the form $0^m1^n$. The rest of the verifier's algorithm is similar in essence to the one given in the proof of Lemma \ref{qip-vs-am}.
In the second phase, $V$ generates $N$ parallel branches with equal amplitude $1/\sqrt{N}$ by entering $N$ different inner states, say,  $r_{1},r_{2},\ldots,r_{N}$.
In the third phase, along the $j$th branch starting with $r_j$ ($j\in[1,N]_\integer$), the tape head idles for $N-j$ steps at each tape cell containing $0$ and idles for $j$ steps at each cell containing $1$ until the tape head finishes reading $1$s.
In the fourth phase, $V$ applies the QFT to collapse all computation paths to a single accepting inner state if $m=n$. Otherwise, all the computation paths do not interfere with each other since the tape head reaches $\dollar$ at different times along different computation paths. During the first and second phases, $V$ publicly reveals the information $(q',d')$ on his next move and then checks whether the prover rewrites it with a different symbol.
To constrain the prover's strategy, $V$ immediately enters a rejecting inner state
if the prover alters the content of the communication cell.

An analysis of the QIP protocol $(P,V)$ for its completeness and soundness conditions is essentially the same as in the proof of Lemma \ref{qip-vs-am}. In conclusion, $Upal$ is in $\qip_{1,1-\epsilon}(2qfa,public,poly\mbox{-}time)$, as requested.
\end{proof}

%%%%%%%%%%%%%%%%%%%%%%%%%%%%%%%
\section{How Many Interactions are Necessary or Sufficient?}
\label{sec:interaction}

In the previous two sections, despite heavy restrictions on QIP systems,
we have witnessed that quantum interactions between
a prover and a qfa verifier remarkably enhance the qfa's ability to recognize certain types of languages.
Since our basic QIP model forces a verifier to communicate
with a prover {\em at every move}, it is natural to ask whether
such interactions are truly necessary.
To answer this question, we shall remodel QIP systems so that verifiers are allowed to communicate with provers only at the time when the verifiers need any help from the provers. Throughout this section, we shall shed new light on the number of
interactions between a prover and a verifier in those new QIP systems, and we shall carefully examine how many interactions are necessary or sufficient to conduct a given task of language recognition.

%%%%
\subsection{Interaction-Bounded QIP Systems}\label{sec:new-QIP-model}

To study the number of interactions between a prover and a verifier,
we want to modify our basic QIP systems so that a prover is permitted to
alter a communication symbol in the communication cell exactly when
the verifier asks the prover to do so. To make such a modification, we first
look into the IP systems of Dwork and Stockmeyer \cite{DS92}. In their
system, a verifier is allowed to do computation ``silently'' at
any chosen time with no communication
with a prover; in other words, the verifier interacts with the prover
only when the help of the prover is needed and the prover patiently
awaits for next interactions without conducting any computation.
We interpret the verifier's
silent mode as follows: if the verifier $V$ does not wish to communicate
with the prover, he writes a special communication symbol
in the communication cell to signal the prover that he needs no
help from the prover. Simply, we use the blank symbol $\#$ to condition that
the prover is prohibited to tailor the content of
the communication cell.

We formally introduce a new QIP system, in which no malicious
prover $P$ is permitted to cheat a verifier by willfully tampering with
the symbol $\#$  in the communication cell. Since the verifier is governed by quantum mechanics, if a malicious prover willfully modifies $\#$, the verifier's computation may be significantly hampered and the verifier may have no means to prevent such an action of the prover because of the unitarity requirement of the verifier's strategy $\delta$.
To describe a {\lq\lq}valid and legitimate{\rq\rq} prover $P$
independent of the choice of verifiers, we require the prover's strategy
$P_x=\{U_{P,i}^x\}_{i\in\nat^+}$  acting on the
prover's visible configuration space $\MM\otimes\PP$ on each input $x$ to do nothing (namely, apply the identity operator). To allow a prover $P$
to maintain the unitarity of his strategy $U^{x}_{P,i}$, we also permit the prover to modify his private information $\gamma$ (including a content of the communication cell) when $\gamma$ never appears in an actual computation with non-zero amplitudes. To formulate this condition independent of the verifier,
we need to introduce a series $\{S_i\}_{i\in\nat}$ of elements in $\Delta^{\infty}_{fin}$.
This series $\{S_i\}_{i\in\nat}$ is defined recursively as $S_0=\{\#^{\infty}\}$
and $S_{i}$ ($i\in\nat^{+}$) is the collection of all elements $y\in \Delta^{\infty}_{fin}$
such that, for a certain element $z\in S_{i-1}$ and certain
communication symbols $\sigma,\tau\in\Gamma$,
the superposition $U^{x}_{P,i}\qubit{\sigma}\qubit{z}$ contains
the visible configuration $\qubit{\tau}\qubit{y}$ of {\em non-zero} amplitude, namely, $|\bra{y}\bra{\tau}U_{P,i}^{x}\ket{\sigma}\ket{z}|>0$. Now, our requested
condition is expressed as follows.
\begin{quote}
\begin{itemize}
\item[(*)] For every $i\in\nat^{+}$ and every $y\in S_{i-1}$,
$U_{P,i}^x\qubit{\#}\qubit{y}=\qubit{\#}\qubit{y}$.
\end{itemize}
\end{quote}
Any  prover $P$ who satisfies Condition (*) is succinctly referred to as
{\em committed}.\footnote{There are a number of possible variants, one of which requires that, for every $i\in\nat^{+}$ and for every $y\in\Delta^{\infty}_{fin}$, $U_{P,i}^x\qubit{\#}\qubit{y}=\qubit{\#}\qubit{\psi_{x,y,i}}$ holds for a certain quantum state $\qubit{\psi_{x,y,i}}$.}
A trivial example of a committed prover is the
prover $P_{I}$, who always applies the identity operator. A
committed prover lets the verifier safely make a number of moves
without any {\lq\lq}direct{\rq\rq} interaction with him.
Observe that this new QIP model with committed provers
is in essence closer to
a circuit-based QIP model of Watrous \cite{Wat03}
than the original QIP model is. For convenience,
we name our new model an {\em
interaction-bounded QIP system} and use the new notation
$\qip^{\#}(1qfa)$ for the class of all languages recognized
with bounded error probability by such interaction-bounded QIP systems
with 1qfa verifiers.
Note that standard QIP systems can be naturally
transformed into interaction-bounded QIP systems by (possibly) modifying the blank symbol appropriately to a fresh non-blank symbol. This simple fact implies that
$\qip^{\#}(1qfa)$ contains $\qip(1qfa)$, which equals $\mathrm{REG}$ \cite{NY09}.

\begin{lemma}\label{qipregular}
$\mathrm{REG}\subseteq \qip^\#(1qfa)$.
\end{lemma}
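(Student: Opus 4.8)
The plan is to leverage the remark that immediately precedes the statement: standard (private) QIP systems can be embedded into interaction-bounded QIP systems by replacing the blank symbol $\#$ with a fresh non-blank communication symbol wherever an actual interaction is intended. Since the excerpt already records that $\qip(1qfa)=\mathrm{REG}$ by \cite[Theorem 4.1]{NY09}, it suffices to show the single containment $\qip(1qfa)\subseteq\qip^{\#}(1qfa)$, from which $\mathrm{REG}\subseteq\qip^{\#}(1qfa)$ follows at once. In fact, since every regular language is recognized exactly (with certainty) by a public, identity-prover protocol in the style of Table \ref{table:public-cprover}, the cleanest route is probably to directly exhibit an interaction-bounded protocol for an arbitrary regular $L$ rather than to translate a general private protocol.

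Concretely, I would take a deterministic finite automaton $M=(Q,\check{\Sigma},\delta_M,q_0,Q_{acc},Q_{rej})$ recognizing $L$ and build a $1qfa$ verifier $V$ that simulates $M$ step by step. Because $M$ is deterministic, its transition function is trivially reversible once we record enough history, but the key point for an \emph{interaction-bounded} system is that $V$ needs no help from the prover at all: it can carry out the entire simulation in ``silent mode.'' Thus $V$ keeps the blank symbol $\#$ in the communication cell at every step, and the honest prover is the committed prover $P_I$ that applies the identity operator throughout, which satisfies Condition (*) by construction. On reading $\dollar$, $V$ enters an accepting or rejecting inner state according to whether $M$'s final state lies in $Q_{acc}$ or $Q_{rej}$. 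Completeness with $P_I$ is immediate since $V$ deterministically reproduces the computation of $M$.

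The one genuine point requiring care is soundness against a malicious prover, and this is where the interaction-bounded model's definition does real work. A dishonest $P^{*}$ is \emph{not} forced to be committed in the soundness clause, so in principle it could tamper with the symbol $\#$. The argument is that $V$ can monitor the communication cell: whenever $V$ detects that the content it wrote ($\#$) has been altered to a non-blank symbol, $V$ immediately enters a rejecting inner state, exactly as in the proof of Theorem \ref{QFA-in-public-qfa}. Since on a negative instance $V$ would reject anyway by faithfully simulating $M$, and on a positive instance any tampering only triggers an immediate rejection (so it cannot help $P^{*}$ increase acceptance, and the honest $P_I$ already yields certainty), the protocol recognizes $L$ with certainty. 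The subtlety to verify is that $V$'s ``detect-and-reject'' reaction keeps the induced operator $U_\delta^x$ unitary; this is handled by making the transitions reversible in the same manner as Table \ref{table:public-cprover}, appending fresh rejecting states indexed by the offending communication symbol.

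I expect the main obstacle to be precisely the soundness bookkeeping: one must ensure that the verifier's strategy remains well-formed (unitary) while simultaneously (i) progressing the simulation when the cell holds $\#$ and (ii) branching to a dedicated rejecting state when the cell holds anything else, all without ever consulting the prover for a ``real'' message. Given that the analogous construction for $\mathrm{1RFA}\subseteq\qip_{1,1}(1qfa,public)$ has already been carried out in Table \ref{table:public-cprover}, the cleanest writeup is to reduce to that machinery: extend it from reversible automata to arbitrary DFAs (which is harmless, since exact recognition of regular languages by such protocols does not require $M$ itself to be reversible once the verifier never needs to uncompute), and observe that the resulting system is automatically interaction-bounded because the honest prover is $P_I$ and every communication symbol the verifier writes is $\#$.
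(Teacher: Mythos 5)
Your first paragraph is, in fact, the paper's entire proof: standard QIP systems are converted into interaction-bounded ones by renaming the blank symbol to a fresh non-blank symbol (so that the verifier formally queries at every step, Condition (*) never constrains a committed prover, and soundness over committed provers follows from soundness over all provers), whence $\mathrm{REG}=\qip(1qfa)\subseteq\qip^{\#}(1qfa)$. Had you stopped there, you would be done. The ``cleaner direct route'' you pivot to instead has a fatal flaw. A 1qfa verifier that keeps $\#$ in the communication cell at every step makes zero interactions, so your construction, if it worked, would prove $\mathrm{REG}\subseteq\qip^{\#}_{0}(1qfa)=\mathrm{1QFA}$, contradicting $Zero\notin\mathrm{1QFA}$ \cite{KW97}, a fact the paper itself uses. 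The error is the claim that an arbitrary DFA $M$ can be simulated ``in silent mode'': well-formedness requires $U_{\delta}^{x}$ to be unitary, and a non-reversible DFA transition is non-injective. Concretely, if $p_1\neq p_2$ and $\delta_M(p_1,b,q)=\delta_M(p_2,b,q)=1$, then Table \ref{table:public-cprover}-style transitions send the orthogonal vectors $\qubit{p_1}\qubit{p_1}$ and $\qubit{p_2}\qubit{p_2}$ to the same vector $\qubit{q}\qubit{q}$. That table is well-formed precisely ``because of the reversibility of $\delta_M$'' (condition (ii) of a 1rfa), and your parenthetical that extending it to arbitrary DFAs is harmless ``once the verifier never needs to uncompute'' misses the point: unitarity is injectivity, not an uncomputation issue, and ``recording enough history'' is impossible with finitely many inner states. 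This is exactly why interactions matter in this model: the prover serves as an eraser (cf.\ the discussion after Lemma \ref{odd-1qfa}) absorbing information that a finite-state verifier cannot reversibly discard on its own.

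A secondary misreading compounds this: in the interaction-bounded model the soundness clause quantifies only over \emph{committed} provers---the defining feature of the model is that no prover, honest or malicious, may tamper with $\#$ on reachable configurations. So your ``detect-and-reject'' machinery guards against an adversary the model already excludes, and in any case it cannot repair the unitarity failure above, since that failure occurs even with the identity prover. The fix is simply to delete the direct construction and keep your first paragraph.
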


We are now ready to clarify the meaning of the {\em number of interactions} in an interaction-bounded QIP system $(P,V)$.
Let us consider any non-halting global configuration in which
$V$ on input $x$ communicates with a prover (that is, writes a non-blank symbol in the communication cell).
For convenience, we call such a global configuration a {\em query configuration} and, at such a query configuration, $V$ is said to {\em query} a symbol to the prover. Recall from Section \ref{sec:basic-def} the definition of global computation paths.
The {\em number of interactions} in a given computation means the maximum number,
over all global computation paths $\chi$ of the computation of $(P,V)$,
of all
query configurations of non-zero amplitudes along $\chi$.
Let $L$ be any language and assume that $(P,V)$ recognizes $L$.
We say that the QIP protocol {\em $(P,V)$ makes $i$ interactions} on input $x$ if $i$ equals the number of interactions during the computation of $(P,V)$ on $x$. Furthermore, we call the QIP system $(P,V)$ {\em $k$-interaction bounded}\footnote{When $x\in L$ and $P^*$ is a malicious prover, Condition (i) does not impose any restriction on the number of interactions for $(P^*,V)$ on $x$. Instead of Conditions (i)--(ii), we could take a much stronger condition; for example, for every $x$ and every committed prover $P^*$, $(P^*,V)$ makes at most $k$ interactions. Such a stronger condition actually makes simpler the proof of, say, Proposition \ref{zero-1qfa}.} if (i) for every $x\in L$, the protocol $(P,V)$ makes at most $k$ interactions on the input $x$ and (ii) for every $x\notin L$ and for every committed prover $P^*$, the protocol $(P^*,V)$ makes at most $k$ interactions on the input $x$. At last, let $\qip^{\#}_{k}(1qfa)$ denote the class
of all languages recognized with bounded error probability by $k$-interaction bounded QIP systems with 1qfa verifiers. Since verifiers can control the number of queries, it is not difficult to show that $\mathrm{1QFA}\subseteq \qip^{\#}_{k}(1qfa)\subseteq \qip^{\#}_{k+1}(1qfa)\subseteq \qip^{\#}(1qfa)$ for any constant $k\in\nat$. In particular, $\qip^{\#}_{0}(1qfa) = \mathrm{1QFA}$ holds.

As the main theorem of this section, we want to
show in Theorem \ref{query-once} that (i) $1$-interaction helps a
verifier but (ii) $1$-interaction does not achieve the full power of $\qip^{\#}(1qfa)$.

\begin{theorem}\label{query-once}
$\qip^{\#}_0(1qfa) \subsetneqq\qip^{\#}_1(1qfa)
\subsetneqq\qip^{\#}(1qfa)$.
\end{theorem}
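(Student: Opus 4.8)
The plan is to establish the two strict inclusions separately, since the non-strict containments $\qip^{\#}_0(1qfa)\subseteq\qip^{\#}_1(1qfa)\subseteq\qip^{\#}(1qfa)$ are already recorded just before the statement; for the first I use the witness $Odd$ and for the second the witness $Zero$. For the first separation, recall that $\qip^{\#}_0(1qfa)=\mathrm{1QFA}$. I would build a $1$-interaction protocol $(P,V)$ for $Odd$ in which the honest prover stays silent (applies the identity) while $V$ scans the leading block and sends a single non-blank signal exactly at the step when $V$'s head is about to read the first $1$; thereafter the prover is silent again, so the protocol is genuinely $1$-interaction bounded. Before the signal $V$ reversibly checks that every scanned symbol is $0$ (rejecting otherwise); upon receiving the well-timed signal it confirms the scanned symbol is $1$ and switches to a reversible parity-counting automaton that accepts iff the number of $0$s in the remaining suffix $z$ is odd. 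For soundness I argue that any signal delivered at an invalid position (before the genuine first $1$, or not on a $1$) drives $V$ into a rejecting state, so no committed prover can raise the acceptance probability on inputs outside $Odd$, and since $V$'s behaviour after the single signal is essentially deterministic the error stays bounded. Finally $Odd\notin\mathrm{1QFA}$ by the forbidden-construction characterization of $\mathrm{1QFA}$ \cite{AF98,KW97}: the minimal DFA of $Odd$ contains distinct states $q_1,q_2$ with $q_1\xrightarrow{1}q_2$, $q_2\xrightarrow{1}q_2$, separated by the continuation $0$, which is precisely the pattern forbidden for bounded-error $1$qfa's. This yields $\qip^{\#}_0(1qfa)\subsetneqq\qip^{\#}_1(1qfa)$.

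For the second separation the easy half is immediate: since $Zero$ is regular, Lemma \ref{qipregular} gives $Zero\in\qip^{\#}(1qfa)$ (alternatively via Lemma \ref{zero-public-1qfa}). The whole difficulty is $Zero\notin\qip^{\#}_1(1qfa)$. My plan is to suppose, for contradiction, that a $1$-interaction bounded system $(P,V)$ with $1$qfa verifier recognizes $Zero$ with error below $1/2$, and then exploit two structural features: a $1$qfa reads its input once with a head that advances deterministically, so the single interaction is pinned to a determined stage of the scan; and the prover's entire influence is a single message over the \emph{finite} communication alphabet $\Gamma$, entangled with its private register. Orthogonalizing that register, the prover's subsequent silence means the $|\Gamma|$ response branches no longer interfere at the final measurement, so I would decompose the acceptance probability of $(P^*,V)$ on any input as a convex combination of the acceptance probabilities of the finitely many ordinary measure-many $1$qfa's $M_\gamma$ obtained from $V$ by hard-wiring the response symbol $\gamma$ (and its query position) into the verifier's state.

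The crux, and the step I expect to be the main obstacle, is to show that such a finite convex combination still cannot separate ``ends in $0$'' from ``ends in $1$'' by a constant margin, contradicting recognition of $Zero$. I would push the Kondacs--Watrous analysis of $Zero\notin\mathrm{1QFA}$ \cite{KW97} through the combination: for each fixed $M_\gamma$ the accepting probability on the family $0^n$ versus its trailing-symbol flip is governed by the near-periodicity forced by the forbidden construction, and a fixed convex combination of near-periodic, non-separating functions remains non-separating. Making this robust is exactly the ``much more involved'' part flagged in the overview, because (i) different branches may query at different head positions, so the hard-wired automata $M_\gamma$ must be indexed by position as well as symbol, and (ii) a malicious prover may keep its registers non-orthogonal to induce interference, so I must argue that such interference cannot help beyond the convex-combination bound, e.g.\ by a worst-case triangle-inequality estimate over the finitely many branches. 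Once the combined acceptance function is shown to be asymptotically almost periodic with no constant gap between the two residue classes, the bounded-error assumption fails, giving $Zero\notin\qip^{\#}_1(1qfa)$ and hence $\qip^{\#}_1(1qfa)\subsetneqq\qip^{\#}(1qfa)$.
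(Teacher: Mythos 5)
Your overall skeleton (witness $Odd$ for the first gap, witness $Zero$ for the second, with $Zero\in\qip^{\#}(1qfa)$ via regularity) matches the paper, but both of your key constructions have genuine problems. For the first separation, your protocol has the honest prover delivering a ``well-timed signal'' to $V$ just before the first $1$. In the interaction-bounded model this is impossible: a committed prover must satisfy Condition (*), i.e., $U_{P,i}^x\qubit{\#}\qubit{y}=\qubit{\#}\qubit{y}$, so a prover can never write a non-blank symbol while the cell holds $\#$; every interaction must be opened by the \emph{verifier}. (Prover-initiated signalling is how $Zero$ is handled in the public/private systems of Section 4, but it is exactly what this model forbids.) The paper's protocol for $Odd$ reverses the information flow: $V$ itself queries (writes $a$) upon reading the first $1$, and the honest prover's entire role is to \emph{erase} that symbol into his private tape. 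This eraser is not cosmetic; it is the reason interaction helps at all. Realizing both $q_0\xrightarrow{1}q_1$ and $q_1\xrightarrow{1}q_1$ is an irreversible merge that no unitary $V$ can perform with the cell unchanged; $V$ stays unitary only by dumping the distinguishing bit into the cell, and if the prover then stays silent (as you propose) the non-blank symbol persists, so the very next step already counts as a second interaction under the paper's definition, while folding the bit back into $V$'s own finite state space just reproduces the obstruction that puts $Odd$ outside $\mathrm{1QFA}$ in the first place. So your protocol, as described, does not establish $Odd\in\qip^{\#}_1(1qfa)$.

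For the second separation the gaps are more serious. Your decomposition into ``finitely many'' hard-wired automata $M_\gamma$ fails: the query may occur, in superposition, at any head position, and the position is unbounded in the input length, so the objects indexed by (symbol, position) are neither finitely many nor 1qfa's; moreover your step ``a fixed convex combination of near-periodic, non-separating functions remains non-separating'' needs \emph{simultaneous} near-periods across all branches, which unboundedly many indices destroy. You acknowledge the interference problem, but a ``worst-case triangle-inequality estimate'' is not an argument---taming prover-induced interference is precisely what forces the paper to introduce the prover-independent modified computation $MComp_{V}(x)$ and to prove Lemma \ref{weight}. Most importantly, your plan lacks the paper's central mechanism: define the query weight $wt_{V}(w)$, set $\nu=\sup_{w\in Zero}wt_{V}(w)$, show $\nu>0$ (else $V$ with the identity prover is already a bounded-error 1qfa for $Zero$, a contradiction), and choose a prefix $w_y\in Zero$ that nearly exhausts $\nu$, so that at most $\eta^2/16(|y|+1)^2$ of query weight can remain over the suffix. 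Only this exhaustion step permits the hybrid prover $P'_y$ (honest while reading $w_y$, identity afterwards), decouples the prover from the padding $1^m$, and lets the Kondacs--Watrous near-periodicity argument run on the single operator $W$ to force acceptance of $\tilde{y}01^m\notin Zero$ with probability at least $1/2$. Without it, the prover's one response may depend on the entire input including the trailing $1^m$, and no hard-wiring or almost-periodicity argument applies; the crux you yourself flagged as ``the main obstacle'' is left unresolved in your proposal.
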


Theorem \ref{query-once} is a direct consequence of Lemma \ref{odd-1qfa} and Proposition \ref{zero-1qfa}, and its proof proceeds as follows.
For the first inequality of the theorem, we shall take the language $Odd$ defined as
the set of all binary strings of the form $0^m1z$, where $m\in\nat$,
$z\in\{0,1\}^*$, and $z$ contains an odd number of $0$s. Since $Odd\not\in \mathrm{1QFA}$ \cite{AKV01},
we shall show in Lemma \ref{odd-1qfa} that $Odd$ belongs to $\qip^{\#}_1(1qfa)$.
For the second inequality of the theorem, recall the regular language $Zero=\{x0\mid x\in\{0,1\}^*\}$ from Section \ref{sec:public-coin}.
We shall demonstrate in Proposition \ref{zero-1qfa} that  $\qip^{\#}_1(1qfa)$ does not include $Zero$. Since $\mathrm{REG}\subseteq\qip^{\#}(1qfa)$ by Lemma \ref{qipregular}, $Zero$ must belong to $\qip^{\#}(1qfa)$, and thus we shall obtain the desired separation of $\qip^{\#}(1qfa)$ from $\qip^{\#}_{1}(1qfa)$.
It therefore suffices to prove Lemma \ref{odd-1qfa} and Proposition \ref{zero-1qfa}.

As the first step, we prove Lemma \ref{odd-1qfa} that asserts $Odd\in\qip^{\#}_{1}(1qfa)$.

\begin{lemma}\label{odd-1qfa}
$Odd\in\qip^{\#}_{1}(1qfa)$.
\end{lemma}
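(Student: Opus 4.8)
The plan is to design a $1$-interaction-bounded QIP system $(P,V)$ with a $1$qfa verifier recognizing $Odd = \{0^m1z \mid m\in\nat,\ z\in\{0,1\}^*,\ z \text{ has an odd number of } 0\text{s}\}$. The single interaction should be used to let the prover pinpoint the location of the first $1$ in the input, since the membership condition depends only on the suffix $z$ that begins immediately after this first $1$. Concretely, the verifier reads the input left-to-right while keeping $\#$ in the communication cell (so that, by the committed-prover condition, no interaction occurs) and waits for the prover to query. The honest prover signals---by writing a single non-blank symbol---exactly at the moment $V$'s tape head is about to scan the first $1$. Once $V$ receives this signal it verifies that the currently scanned input symbol really is $1$ (rejecting immediately otherwise, which prevents a dishonest prover from mislocating the boundary), and thereafter counts the parity of $0$s in the remainder $z$ using two non-halting inner states, accepting iff the count is odd when it reaches $\dollar$.

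First I would set up the verifier's state space with a ``searching'' component for the prefix $0^m$, a pair of parity-tracking states $\{q_{even},q_{odd}\}$ for scanning $z$, and appropriate accepting and rejecting states; I would present the $(Q\times\Gamma)$-transitions in a table analogous to Table~\ref{table:public-Lzero}, using the abbreviation that $V$ keeps $\#$ in the cell except during the one interaction. The prover's strategy is the identity everywhere except at the single step where it deposits the query symbol, so the protocol makes exactly one interaction on every input, witnessing the $1$-interaction bound. For completeness, on $x = 0^m1z \in Odd$ the honest prover signals correctly, $V$ confirms the $1$, tracks parity deterministically, and accepts with certainty since the parity of $0$s in $z$ is odd. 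For soundness, I would argue that for $x\notin Odd$ every committed prover $P^*$ either fails to signal (so $V$ never leaves the searching phase and rejects, e.g.\ upon hitting $\dollar$ with no $1$ seen, covering strings with no $1$) or signals at some position; if it signals when $V$ is not scanning a $1$, $V$ rejects immediately, and if it signals legitimately at the first $1$, then the deterministic parity count yields the correct even parity and $V$ rejects.

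The main subtlety---though not a deep obstacle---is handling the possibility that a dishonest prover returns a \emph{superposition} of query and non-query, or queries at a wrong position, since the verifier is a genuine quantum device and the committed-prover condition only restricts the prover's behavior when the cell holds $\#$. I expect the cleanest route is to ensure $V$'s transitions are essentially deterministic (no amplitude splitting before the interaction), so that no destructive interference can be engineered by the prover; any tampering with a position that is not the true first $1$ deterministically routes $V$ into a rejecting state. Because the whole computation is deterministic apart from the prover's single message, the soundness analysis reduces to a finite case check over the prover's possible query timing, and the error probability is in fact $0$; this gives $Odd \in \qip^{\#}_{1,1,1}(1qfa) \subseteq \qip^{\#}_1(1qfa)$, completing the lemma.
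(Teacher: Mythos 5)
There is a genuine gap: you have the direction of communication backwards for the interaction-bounded model. In this model an ``interaction'' is initiated by the \emph{verifier} writing a non-blank symbol into the communication cell, and \emph{every} legal prover---honest or malicious---must be committed, i.e., must satisfy Condition (*): $U_{P,i}^x\qubit{\#}\qubit{y}=\qubit{\#}\qubit{y}$ whenever the cell holds $\#$. Consequently no committed prover can ``signal, by writing a single non-blank symbol, exactly at the moment $V$'s tape head is about to scan the first $1$'': your honest prover is not a valid prover in this model. With any legal prover, your verifier (which keeps $\#$ in the cell and waits) never receives a signal, never leaves the searching phase, and so never accepts; completeness fails. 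Worse, since your verifier never writes a non-blank symbol, the whole system degenerates into a stand-alone 1qfa (the prover is forced to act as the identity throughout), and a 1qfa cannot recognize $Odd$ at all, since $Odd\notin\mathrm{1QFA}$. So the approach cannot be repaired by tweaking the analysis; the roles must be inverted.

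The paper's proof does the inversion and, in doing so, isolates the real reason an interaction is needed. The verifier itself finds the first $1$ (trivial, since the prefix is $0^m$) and at that moment \emph{queries} by writing the symbol $a$ via $V_1\qubit{q_0}\qubit{\#}=\qubit{q_1}\qubit{a}$; the honest prover's only job is to act as an \emph{eraser}, replacing $a$ by $\#$ (storing $a$ on its private tape) so that $V$ can proceed. The interaction is not for locating the boundary---the verifier can do that alone---but for circumventing unitarity: a reversible machine cannot map both $\qubit{q_0}\qubit{\#}$ (on reading $1$, entering parity mode) and $\qubit{q_1}\qubit{\#}$ (already in parity mode, reading $1$) to states with the same inner state and the same cell content, which is exactly the obstruction placing $Odd$ outside $\mathrm{1QFA}$. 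Writing $a$ keeps the two branches orthogonal, and the prover's erasure (unitary on the prover's side thanks to the infinite private tape) restores $\#$ for the deterministic parity check that follows. Soundness then follows roughly as you envisioned: if a dishonest committed prover fails to erase $a$, the verifier rejects deterministically. Your proposal is missing this central idea, and its protocol as stated is not a 1-interaction-bounded QIP system.
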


\begin{proof}
We shall design a 1-interaction bounded QIP system $(P,V)$ that recognizes $Odd$.
Now, let $\Sigma=\{0,1\}$ and $\Gamma=\{\#,a\}$ be respectively an input alphabet
and a communication alphabet for $(P,V)$.
Moreover, let $Q=\{q_0,q_1,q_2,q_{acc},q_{rej,0},q_{rej,1}\}$ be a set of $V$'s inner states with $Q_{acc}=\{q_{acc}\}$ and $Q_{rej}=\{q_{rej,0},q_{rej,1}\}$. The protocol of the verifier $V$ is described as follows.
Table \ref{fig:interaction} gives
a formal description of $V$'s $(Q,\Gamma)$-transitions.
Making no query to a committed prover, $V$ continues to read input symbols until its tape head scans $1$ on the input tape.
When $V$ reads $1$, he queries the symbol $a$ to a committed prover by
executing the transition $V_1\qubit{q_0}\qubit{\#}=\qubit{q_0}\qubit{a}$.
If the prover returns $a$, then $V$ immediately rejects the input.
Otherwise, $V$ checks whether the substring of the input after $1$ includes an odd number of $0$s.
This check can be done by $V$ alone by applying $V_b\qubit{q_1}\qubit{\#}$ and $V_b\qubit{q_2}\qubit{\#}$ for $b\in\Sigma$.
The role of the honest prover $P$ is to work as an {\em eraser}, which erases any non-blank symbol written in the communication cell, to help the verifier safely make a transition from the inner state $q_0$ to $q_1$. Note that,  without the eraser, $V$ alone cannot make such a transition
because the unitary requirement of $V$'s strategy (namely, two orthonormal vectors are transformed into two orthonormal ones)
prohibits him from transforming both $|q_1\rangle|a\rangle$
and $|q_1\rangle|\#\rangle$ into $|q_1\rangle|\#\rangle$.
To be more precise, whenever receiving the symbol $a$ from $V$, $P$ returns the symbol $\#$ and copies $a$ into the first blank cell of his private tape. Technically speaking, to make $P$ unitary, we need to map other visible configurations $\qubit{\#}\qubit{y}$ for certain $y$'s not having appeared in $P$'s private tape to superpositions of the form  $\qubit{a}\qubit{\phi_{x,y}}$ with an appropriate vector $\qubit{\phi_{x,y}}$. By right implementation, it is possible  to make $P$ a committed prover.

\begin{table}[ht]
\bs\begin{center}
\begin{tabular}{|lll|}\hline
$V_{\cent}\qubit{q_0}\qubit{\#}=\qubit{q_0}\qubit{\#}$
& $V_{0}\qubit{q_0}\qubit{\#}=\qubit{q_0}\qubit{\#}$ &
$V_{1}\qubit{q_0}\qubit{\#}=\qubit{q_1}\qubit{a}$ \\
$V_{\$}\qubit{q_0}\qubit{\#} =\qubit{q_{rej,0}}\qubit{\#}$ &
$V_{0}\qubit{q_1}\qubit{\#}=\qubit{q_2}\qubit{\#}$ &
$V_{1}\qubit{q_1}\qubit{\#}=\qubit{q_1}\qubit{\#}$ \\
$V_{\$}\qubit{q_1}\qubit{\#} =\qubit{q_{rej,1}}\qubit{\#}$ &
$V_{0}\qubit{q_2}\qubit{\#} =\qubit{q_1}\qubit{\#}$ &
$V_{1}\qubit{q_2}\qubit{\#} =\qubit{q_2}\qubit{\#}$ \\
$V_{\$}\qubit{q_2}\qubit{\#} =\qubit{q_{acc}}\qubit{\#}$
& $V_{0}\qubit{q_1}\qubit{a}=\qubit{q_{rej,0}}\qubit{\#}$ &
$V_{1}\qubit{q_1}\qubit{a}=\qubit{q_{rej,0}}\qubit{\#}$ \\  \hline
\end{tabular}
\caption{{\small $(Q\times\Gamma)$-transitions $\{V_{\sigma}\}_{\sigma\in\check{\Sigma}}$ of $V$ for $Odd$}}\label{fig:interaction}
\end{center}
\end{table}

Next, we shall show that $(P,V)$ recognizes $Odd$ with probability $1$. Let $x$ be any binary input.
First, consider the case where $x$ is in $Odd$. Assume that $x$ is of the form $0^m1y$,
where $y$ contains an odd number of 0s.
The honest prover $P$ erases $a$ that is sent from $V$ when $V$ reads $1$.
This process helps $V$ shift to the next mode of checking the number of $0$s.
Since $V$ can check whether $y$ includes an odd number of $0$s without any communication with the prover, $V$ eventually accepts $x$ with certainty.
On the contrary, assume that $x\not\in Odd$. In the special case where $x\in\{0\}^*$, $V$ can reject $x$ with certainty with no query to a committed prover.
Now, focus our attention to the remaining case where $x$ contains a $1$.
Let us assume that $x$ is of the form $0^m1y$, where $y$ contains an even number of 0s. The verifier $V$ sends $a$ to a committed prover when he reads $1$.
Note that $V$'s protocol is essentially deterministic. To maximize the acceptance probability of $V$, a malicious prover needs to return $\#$ to $V$ since, otherwise, $V$ immediately rejects $x$ in a deterministic fashion.
Since $V$ can check whether $y$ includes an odd number of $0$s
without making any query to the prover, for any committed prover $P^*$, the QIP protocol $(P^*,V)$ rejects $x$ with certainty.
Since the number of interactions made by the protocol is obviously
at most $1$,  $Odd$ therefore belongs to $\qip_1^\#(1qfa)$, as requested.
\end{proof}

As the second step, we need to prove Proposition \ref{zero-1qfa} regarding the language $Zero=\{x0\mid x\in\{0,1\}^*\}$. This regular language $Zero$ is known to be outside of $\mathrm{1QFA}$ \cite{KW97};
in other words, $Zero\not\in \qip^{\#}_{0}(1qfa)$ since $\qip_0^{\#}(1qfa) = \mathrm{1QFA}$.
Proposition \ref{zero-1qfa} expands this impossibility result and shows that $Zero$ is not even in $\qip^{\#}_{1}(1qfa)$.

\begin{proposition}\label{zero-1qfa}
$Zero\not\in\qip^{\#}_{1}(1qfa)$.
\end{proposition}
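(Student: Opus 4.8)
The plan is to argue by contradiction, extending the Kondacs--Watrous proof that $Zero\notin\mathrm{1QFA}$ \cite{KW97} with an analysis of the single permitted interaction. Suppose $(P,V)$ is a $1$-interaction bounded QIP system with a $1$qfa verifier $V$ recognizing $Zero$ with error $\epsilon<1/2$. The first thing I would record is a structural simplification of the single interaction. Because the verifier is one-way, after each step its head position equals the number of elapsed steps, so a query at step $t$ is a query about the $t$-th input cell, and earlier cells can never be revisited. Moreover, since at most one query occurs along any computation path and the prover's private tape is consulted nowhere after that query, the prover's action at the query is, from the verifier's point of view, nothing more than an (input- and step-dependent) quantum channel applied to the communication register; the honest prover realizes a helpful channel, while any committed adversary may realize an arbitrary channel. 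After the unique query the verifier writes $\#$ forever and the committed prover is inert, so the entire post-query computation on the remaining suffix is governed by one \emph{fixed} finite $1$qfa $W$ (its ``initial state'' being the verifier's inner state together with the received response).

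Next I would decompose the non-halting evolution by the step at which each path queries. The $1$-interaction condition makes the events ``first query at step $t$'' pairwise orthogonal, so if $g_t$ denotes the squared amplitude of first querying at step $t$ during the verifier's run on the all-$0$ prefix $0^{\infty}$, then $\sum_t g_t\le 1$; in particular the tails $\sum_{t\ge T}g_t$ tend to $0$. The comparison inputs will be $0^{m}\in Zero$ and $0^{m-1}1\notin Zero$, which share the prefix $0^{m-1}$ and hence induce identical query statistics on that prefix. I would then write the overall acceptance probability on each input as a sum of three contributions: the never-query branch (a plain $1$qfa run on the whole input), the branches that query early (at some step $t\le m-R$, leaving a suffix of length $\ge R$ for $W$ to process), and the branches that query late (at some step $t>m-R$).

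The heart of the argument is the claim that, except on the late-query branches, the computation cannot tell $0^m$ from $0^{m-1}1$. This is precisely the Kondacs--Watrous obstruction, now applied to the fixed machine $W$: the minimal DFA of $Zero$ contains the forbidden pattern $s\xrightarrow{0}t$, $t\xrightarrow{0}t$ with $s$ rejecting and $t$ accepting, which forces the non-halting operator for $0$ to drive any state into the subspace on which reading $0$ is unitary; consequently the difference in acceptance probability of $W$ between $0^{s}0\$$ and $0^{s}1\$$, maximized over the finitely many possible initial states, is bounded by some $\eta(s)\to 0$. Choosing $R$ with $\eta(s)<\delta$ for $s\ge R-1$ and a prover $P'$ for $0^{m-1}1$ that reproduces exactly the channels $P$ applies at the shared-prefix queries (a legitimate committed strategy, since the prover may apply any channel at its sole query), the early-query and never-query contributions to the acceptance of $(P,0^m)$ and $(P',0^{m-1}1)$ agree to within $\delta$. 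The late-query contribution is bounded by $\sum_{t>m-R}g_t<\delta$ provided $m-R\ge T$, where $T$ satisfies $\sum_{t\ge T}g_t<\delta$. Hence for every $m\ge T+R$ the two acceptance probabilities differ by at most $3\delta$; taking $\delta<(1-2\epsilon)/3$ makes $(P',0^{m-1}1)$ accept with probability exceeding $\epsilon$, contradicting soundness because $0^{m-1}1\notin Zero$.

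The main obstacle I anticipate is making the middle step fully rigorous: establishing the uniform bound $\eta(s)\to 0$ for the post-query machine $W$ while correctly bookkeeping the sub-normalized branch states and, in particular, the entanglement each query leaves between the verifier and the now-idle prover register. One must verify that this residual entanglement cannot be exploited on the suffix---which follows because no further interaction occurs and the prover register is effectively traced out---and that the decomposition into never-, early-, and late-query branches is genuinely orthogonal, so that the three contributions add. Getting the Kondacs--Watrous contraction estimate to hold from an arbitrary finite family of initial states, uniformly in the suffix length, is where the real work lies, and it is the reason the argument is substantially more delicate than the plain proof that $Zero\notin\mathrm{1QFA}$.
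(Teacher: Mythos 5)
Your argument breaks at exactly the step you flag as its heart, and the break is not a matter of missing rigor: the claim itself is false. The Kondacs--Watrous obstruction does \emph{not} imply that a fixed 1qfa $W$, run from an arbitrary finite set of initial states, satisfies a bound $\eta(s)\to 0$ on the difference in acceptance probability between the suffixes $0^s0\dollar$ and $0^s1\dollar$. Counterexample: take a 1qfa whose non-halting evolution on $0$ is the identity, whose evolution on $1$ swaps two non-halting states $q_0$ and $q_a$, and whose endmarker transition accepts from $q_a$ and rejects from $q_0$. This machine is even reversible, every state lies in the subspace on which reading $0$ acts unitarily (so the $E_1\oplus E_2$ decomposition you invoke gives no constraint), and yet it distinguishes $0^s0\dollar$ from $0^s1\dollar$ with certainty for every $s$. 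What Kondacs and Watrous actually prove is a pumping statement on suffixes of $1$s: from the non-halting state $\qubit{\psi}$ reached after a prefix ending in $0$, finite-dimensional compactness plus their norm lemma yield some $m\geq 1$ with $\|\qubit{\psi}-W_1^m\qubit{\psi}\|$ small, so the fooling pair for $Zero$ is $x0\in Zero$ versus $x01^m\notin Zero$ --- two strings of \emph{different lengths} --- never two same-length strings differing in their last symbol. Since your whole decomposition is built on comparing $0^m$ with $0^{m-1}1$, the early-query and never-query contributions need not agree at all, and no contradiction is reachable from your setup. A secondary gap: your assertion that the never-/early-/late-query branches are orthogonal so that their contributions add ignores quantum interference; branches that query at different steps are not measured apart, and a prover's response can interfere destructively with branches that have not queried. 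This is precisely the difficulty that the paper's modified computation $MComp_{V}$ and Lemma \ref{weight} exist to control.

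For contrast, the paper keeps the Kondacs--Watrous fooling pair and spends all of its effort on the prover side instead. It defines a prover-independent query weight $wt_{V}(w)$, shows $\nu=\sup_{w\in Zero}wt_{V}(w)>0$ (Claim \ref{nu-is-positive}: otherwise $V$ alone would be a bounded-error 1qfa for $Zero$), chooses $w_y\in Zero$ with $wt_{V}(w_y)\geq\nu-\gamma_y$ so that on $\tilde{y}=w_yy$ almost no query weight remains to be spent, and defines a committed prover $P'_y$ that copies $P$'s strategy on the prefix $w_y$ and is inert afterwards. Claims \ref{low-bound-Pacc} and \ref{mu_is_zero} then guarantee $p_{acc}(\tilde{y}0,P'_y)\geq 1/2+\eta/2$ with the non-halting state $\qubit{\psi_y}$ bounded away from $0$ in norm, and only then does the finite-dimensional pumping on $W=(I_2\otimes P_I)(E_{non}\otimes I_1)(V_1\otimes I_1)$ produce an $m$ with $p_{acc}(\tilde{y}01^m,P'_y)\geq 1/2$, contradicting soundness. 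If you wish to salvage your plan, you must switch your comparison pair to $\tilde{y}0$ versus $\tilde{y}01^m$, and then you will face the problem the query-weight machinery solves: on inputs of different lengths a prover may help at different places, so one needs a prefix on which the useful interaction is already nearly exhausted before the pumping argument can be applied.
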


Since the proof of Proposition \ref{zero-1qfa} is quite
involved, it will be given in the subsequent subsection.

%%%%%%%
\subsection{Proof of Proposition \ref{zero-1qfa}}

Our proof of Proposition \ref{zero-1qfa} proceeds by way of contradiction.  Towards a contradiction, we
start with assuming $Zero\in \qip_{1}^{\#}(1qfa)$ and take
a 1-interaction bounded QIP system $(P,V)$ with 1qfa verifier $V$ that recognizes $Zero$   with error probability at most $1/2-\eta$
for a certain constant $\eta>0$.  Our goal is to pick a suitable string $\tilde{y}0$ and an appropriate prover $P'_y$ and to prove that their
associated  protocol $(P'_y,V)$ accepts $\tilde{y}01^m$ with probability at least $1/2$, because this contradicts our assumption that $Zero\in \qip_{1}^{\#}(1qfa)$. For this purpose, we shall employ a technical tool, called {\lq\lq}query weight{\rq\rq}, which is the sum of all the squared magnitudes of query configurations appearing in a computation of the protocol $(P',V)$ on a specified input.
However, a different choice of provers may result in different query weights,
as seen in the second and the third computation trees shown in Figure \ref{fig:query}. To cope with this unfavorable situation, we shall introduce another computation model, which is not dependent on the choice of provers, and we shall prove that this model gives an upper-bound of the query weight induced by any committed prover.
Using this model and its query weight, we shall finally select the desired string $\tilde{y}0$ and the desired prover $P'_y$.

Now, let $Q$ be a set of $V$'s inner states
and let $\Gamma$ be a communication alphabet. Write $\Sigma$ for our input alphabet $\{0,1\}$. For technicality, we assume without loss of generality  that $V$ never queries at the very time when it enters a certain halting inner state, that is, the time when $V$'s tape head scans $\$$.

First, we introduce two useful notions: {\lq\lq}1-interaction condition{\rq\rq} and {\lq\lq}query weight.{\rq\rq}
Fix an input $x\in\Sigma^*$ and let $P'$ be any committed prover. For readability, we use the notation $Comp_{V}(P',x)$ to denote a
computation of the QIP protocol $(P',V)$ on input $x$ when $P'$ takes
strategy $P'_x$.

A committed prover $P'$ is said to meet the {\em 1-interaction condition at $x$ with $V$} if the corresponding protocol $(P',V)$ makes at most $1$ interaction on input $x$.
Note that, when $P'$ satisfies the 1-interaction condition at $x$ with $V$, for any query configuration  $\xi$ of non-zero amplitude along computation path $\chi$ in $Comp_{V}(P',x)$, there exists no other query configuration in $Comp_{V}(P',x)$ between the initial configuration and this given  configuration $\xi$ along the computation path $\chi$.
Let $C^{(1)}_{x,V}$ be the collection of all committed provers $P'$
who satisfy the 1-interaction condition at $x$ with $V$.
It is important to note that, whenever a prover in $C^{(1)}_{x,V}$ answers to $V$ with non-blank communication symbols
with non-zero amplitude, $V$ must change these symbols back to the blank symbol immediately since, otherwise, $V$ is considered to make the second query
in the next round, according to the
definition of our interaction-bounded QIP model.

Now, let us choose any prover $P'$ in $C^{(1)}_{x,V}$ and consider its  computation $Comp_{V}(P',x)$. By introducing an extra projection operator,
we modify $Comp_{V}(P',x)$ as follows.
Whenever $V$ performs a projective measurement onto $(W_{non},W_{acc},W_{rej})$, we then apply to the communication cell an additional projection that maps onto the Hilbert space spanned by $\qubit{\#}$. This projection makes all non-blank symbols collapse.
The protocol $(P,V)$ then continues to the next step.
By Condition (*) in Section \ref{sec:new-QIP-model}, observe that the computation obtained by inserting an extra projection operator at every step of $V$ is independent of the choice of  committed provers. To express this modified computation of $V$ on $x$,
we use another notation $MComp_{V}(x)$.
Figure \ref{fig:query} illustrates the difference between such a modified computation and two original computations generated by two different provers $P_1$ and $P_2$.

For two strings $x,y\in\Sigma^*$, the {\em query weight $wt_{V}^{(x)}(y)$ of $V$ at $y$ conditional to $x$} is the sum of all the squared magnitudes of
the amplitudes of query configurations appearing in $MComp_{V}(xy)$ while $V$ is reading $y$.
For brevity, let $wt_{V}(y)=wt_{V}^{(\lambda)}(y)$, where $\lambda$ is the empty string. Note that this query weight $wt_{V}^{(x)}(y)$
ranges between $0$ and $1$ and satisfies that $wt_{V}(x) + wt^{(x)}_{V}(y) = wt_{V}(xy)$ for any strings $x,y\in\Sigma^*$.

\begin{figure}[t]
\begin{center}
\centerline{\psfig{figure=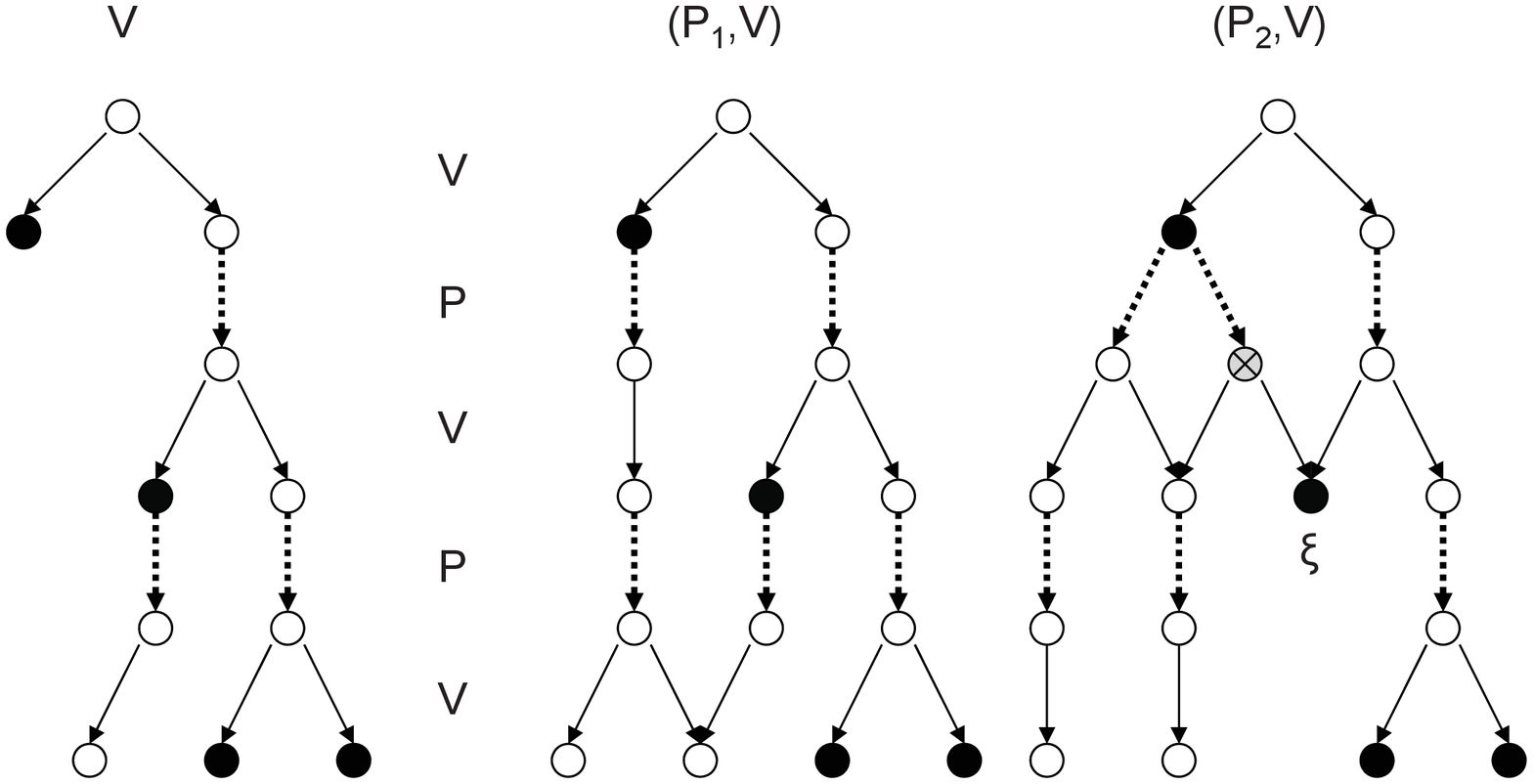,height=4.5cm}}
\caption{{\small Examples of a modified computation. The leftmost graph depicts a  modified computation of $V$ on input $x$. Two remaining graphs are original computations generated by $V$ on $x$ using different provers $P_1$ and $P_2$. The black circles indicate query configurations whereas  the white circles indicate non-query configurations. Let query configuration $\xi$ marked black have
zero amplitude.  The crossed grey circle (located at the upper-left side of $\xi$) is the place where prover $P_2$ forces $V$ to generate a new computation path
that destructively interferes with an existing path in the modified computation of $V$.}}\label{fig:query}
\end{center}
\end{figure}

Recall from our assumption that $(P,V)$ is a 1-interaction bounded QIP system  recognizing $Zero$ with success probability at least $1/2+\eta$.
The following lemma shows two properties of  the query weights of $V$.

\begin{lemma}\label{weight}
Let $P'$ be any committed prover and let $x,y$ be any strings.
\begin{enumerate}\vs{-2}
  \setlength{\topsep}{0mm}%
  \setlength{\itemsep}{1mm}% original = 1mm
  \setlength{\parskip}{0cm}%
  
\item If $P'\in C^{(1)}_{x,V}$, then, for every query configuration $\xi$ of non-zero amplitude in $Comp_{V}(P',x)$, any computation path $\chi$ in $Comp_{V}(P',x)$ ending with $\xi$ appears in $MComp_{V}(x)$ ending with $\xi$ of the same amplitude.

\item If $P'\in C^{(1)}_{xy,V}$, then $wt_{V}^{(x)}(y)$ is greater than or equal to the sum of all the squared magnitudes
of amplitudes of query configurations in $Comp_{V}(P',xy)$ while $V$'s tape head is reading $y$.
\end{enumerate}
\end{lemma}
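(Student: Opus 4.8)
The plan is to prove Part~1 first and then read off Part~2 from it. The guiding idea for Part~1 is that the hypothesis $P'\in C^{(1)}_{x,V}$ forces the communication cell to stay blank along $\chi$ up to the moment $\xi$ is created, and on a blank cell both the committed prover (by Condition~(*)) and the extra projection onto $\qubit{\#}$ inserted in $MComp_{V}(x)$ act trivially; hence the two computations cannot be distinguished before $\xi$ appears, so the path amplitude agrees.

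Concretely, I would write $\chi=(\xi_0,\xi_1,\ldots,\xi_m)$ with $\xi_m=\xi$, where the odd-indexed configurations are produced by $V$'s move followed by the measurement $E_{non}$ and the even-indexed ones by $P'$'s move. Since $\xi$ is a query configuration and, by the $1$-interaction condition, the \emph{unique} one of non-zero amplitude on $\chi$, every earlier verifier-produced configuration $\xi_{2i+1}$ with $2i+1<m$ is a non-query configuration, so $V$ leaves $\qubit{\#}$ in the cell there. Each prover-produced configuration $\xi_{2i}$ then results from applying $U^{x}_{P',i}$ to a state whose cell is $\qubit{\#}$ and whose private tape lies in $S_{i-1}$; Condition~(*) makes this the identity, so $\xi_{2i}=\xi_{2i+1}$ as configurations and the private tape remains $\#^{\infty}$. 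The same prefix occurs in $MComp_{V}(x)$, where the post-measurement projection onto $\qubit{\#}$ is the identity on these blank-cell states, while the final step—in which $V$ writes a non-blank symbol to create $\xi$—is governed by the same operator $(E_{non}\otimes I_1)(U^{x}_{\delta}\otimes I_1)$ in both computations (in $MComp_{V}(x)$ the blank projection is applied only afterwards and merely removes $\xi$ from the continuing superposition). Multiplying the per-step amplitudes gives the same overall path amplitude, which is Part~1.

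For Part~2 I would use Part~1 to match query configurations branch by branch. Fix a query configuration $\xi$ arising while $V$ reads $y$ in $Comp_{V}(P',xy)$ with $P'\in C^{(1)}_{xy,V}$. By the $1$-interaction condition $\xi$ is again the unique query on every path reaching it, so all those paths are blank-cell paths and, by Part~1, each appears in $MComp_{V}(xy)$ with the same amplitude. Conversely, any path reaching $\xi$ inside $MComp_{V}(xy)$ is blank-cell (queries are projected away there), hence—the committed prover being inert on blank cells—present in $Comp_{V}(P',xy)$ as well. Thus the set of paths reaching $\xi$, and therefore the total amplitude of $\xi$, is the same in the two computations, and likewise the collection of such $\xi$ coincides. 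Summing the squared magnitudes over the query configurations read during $y$, and invoking the additivity $wt_{V}(xy)=wt_{V}(x)+wt^{(x)}_{V}(y)$ to isolate the $y$-portion, yields $wt^{(x)}_{V}(y)$ equal to—and in particular at least—the claimed sum.

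The delicate point, which I expect to be the main obstacle, is ruling out spurious interference at $\xi$ between the blank-cell branches counted above and branches that have already answered an earlier query. Such an answered branch could only reach $\xi$ by querying a second time, which $P'\in C^{(1)}_{xy,V}$ forbids; moreover Condition~(*) together with the unitarity of $U^{x}_{P',i}$ forces every answered branch to carry a private-tape content different from the all-blank tape $\#^{\infty}$ of the never-queried branches, so the two live in orthogonal parts of $\PP$ and cannot cancel. I would make this ``non-collision'' explicit, since it is exactly what fails for a prover outside $C^{(1)}_{xy,V}$: the prover $P_2$ of Figure~\ref{fig:query} deliberately drives an answered branch into destructive interference with an existing path, which is why the bound is asserted only for committed provers meeting the $1$-interaction condition and is phrased as an inequality.
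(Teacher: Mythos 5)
Your Part~1 is essentially the paper's own argument and is fine at the same level of rigor: under the $1$-interaction condition, every global computation path ending at a non-zero-amplitude query configuration stays in the ``blank sector'' (blank communication cell and blank private tape), Condition~(*) makes the committed prover act as the identity there, and the extra projection onto $\qubit{\#}$ inserted in $MComp_{V}(x)$ is harmless on such configurations, so each path reappears in $MComp_{V}(x)$ with the same amplitude.

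Part~2 is where your argument breaks. You assert the converse correspondence --- that every path reaching a query configuration $\xi$ in $MComp_{V}(xy)$ is ``present in $Comp_{V}(P',xy)$ as well,'' so that the collection of query configurations and their total amplitudes coincide and the two sums are \emph{equal}. This intermediate claim is false in general, not merely unproven; it is exactly the failure that the paper points out (``the converse in general may not be true'') and that Figure~\ref{fig:query} illustrates with the prover $P_2$. The gap is that a blank-sector path existing in both computations with identical per-step amplitudes does not mean it contributes the same total amplitude to $\xi$: in $Comp_{V}(P',xy)$ the prover's answer to an \emph{earlier} query spawns additional branches that $MComp_{V}(xy)$ simply lacks (the inserted projection kills them there), and these branches can destructively interfere at $\xi$, making its amplitude zero in $Comp_{V}(P',xy)$ while it is non-zero in $MComp_{V}(xy)$. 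Your ``non-collision'' argument does not rule this out, because Condition~(*) plus unitarity only forces $U^{x}_{P',i}\qubit{b}\qubit{\#^{\infty}}$ to be orthogonal to $\qubit{\#}\qubit{\#^{\infty}}$ in the \emph{joint} space $\MM\otimes\PP$; the prover may answer a query by rewriting the cell to another non-blank symbol while leaving its private tape equal to $\#^{\infty}$. If the verifier then blanks the cell (which its unitary may do), that branch re-enters the blank sector with a blank tape, so no tape-orthogonality separates it from never-queried branches, and since the cancellation can occur before any second non-zero-amplitude query configuration arises, the $1$-interaction condition is never violated. This is precisely why the lemma is stated as an inequality: the correct route, which is the paper's, uses Part~1 in one direction only --- every query configuration of non-zero amplitude in $Comp_{V}(P',xy)$ retains the same squared magnitude in $MComp_{V}(xy)$, and $MComp_{V}(xy)$ may contain additional ones --- whence $wt^{(x)}_{V}(y)$ upper-bounds the sum taken over $Comp_{V}(P',xy)$.
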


\begin{proof}
(1) Take any committed prover $P'$ in $C^{(1)}_{x,V}$. Let $\xi$ be any query configuration in $Comp_{V}(P',x)$ having non-zero amplitude, say, $\alpha_{\xi}$.
Since $\alpha_{\xi}$ is not zero in $Comp_{V}(P',x)$, there must exist at least one computation path in $Comp_{V}(P',x)$ ending with $\xi$ of non-zero amplitude. Let us consider such a computation path, say, $\chi$. Since $P'$ satisfies the 1-interaction condition, $\chi$ cannot contain any query configuration of non-zero amplitude except for the last configuration $\xi$. By the definition of $MComp_{V}(x)$, no projective measurement on the communication cell is performed along $\chi$. Hence, all configurations inside $\chi$ must be present also in $MComp_{V}(x)$. Thus, $\chi$ appears in $MComp_{V}(x)$. Since $\chi$ is arbitrary, all the computation paths  in $Comp_{V}(P',x)$ ending with $\xi$,  which contribute to the amplitude $\alpha_{\xi}$, must appear in $MComp_{V}(x)$. Therefore, the amplitude of $\xi$ in $MComp_{V}(x)$ equals $\alpha_{\xi}$, as requested.

(2) Assume that $P'\in C^{(1)}_{xy,V}$.
Let us recall that the query weight $wt_{V}^{(x)}(y)$ is the sum of all the squared magnitudes of the amplitudes of query configurations in $MComp_{V}(xy)$ while $V$'s reading $y$.
By (1), for every query configuration $\xi$ of non-zero amplitude in $Comp_{V}(P',xy)$, the squared magnitude of the amplitude of $\xi$ in $Comp_{V}(P',xy)$ is equal to that of $\xi$ in $MComp_{V}(xy)$.
Note that the converse in general may not be true; that is, there may be a query configuration of non-zero amplitude in $MComp_{V}(x)$ that never appears in $Comp_{V}(P',x)$. Figure \ref{fig:query} illustrates such a case.  By summing up the squared magnitudes over all query configurations $\xi$ in $Comp_{V}(P',xy)$, we immediately obtain (2).
\end{proof}

We continue the proof of Proposition \ref{zero-1qfa}. Let us consider a value $\nu$ that is the supremum, over all strings $w$ in $Zero$, of the query weight of $V$ at $w$; namely, $\nu = \sup_{w\in Zero}\{wt_{V}(w)\}$. Observe that $0\leq \nu\leq 1$ since any query weight is in the unit real interval $[0,1]$.
For readability, we omit the letter $V$ whenever it is clear from the context. Let $P_I$ denote a committed prover applying only the identity operator at every step.

\begin{claim}\label{nu-is-positive}
$\nu>0$.
\end{claim}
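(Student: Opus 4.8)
The plan is to argue by contradiction: I would assume $\nu=0$ and derive that $Zero$ would then be recognizable by a standalone 1qfa, contradicting the known fact $Zero\notin\mathrm{1QFA}$ \cite{KW97}. Intuitively, $\nu=0$ asserts that on every accepting instance the verifier never actually consults the prover (every query configuration carries zero amplitude), so the prover is useless for acceptance; but then the verifier alone must already separate $Zero$ from its complement, which is impossible for a 1qfa.

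First I would observe that, since query weights are nonnegative and $\nu=\sup_{w\in Zero}\{wt_{V}(w)\}=0$, we must in fact have $wt_{V}(w)=0$ for \emph{every} $w\in Zero$. By the definition of $MComp_{V}(w)$, this means that no query configuration appears with non-zero amplitude in the modified computation while $V$ reads $w$. Now fix $w\in Zero$ and let $P$ be the honest prover of the assumed system; since $(P,V)$ is $1$-interaction bounded, $P$ lies in $C^{(1)}_{w,V}$, so Lemma \ref{weight}(2) forces the sum of squared magnitudes of query configurations appearing in the genuine computation $Comp_{V}(P,w)$ while $V$ reads $w$ to be at most $wt_{V}(w)=0$. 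Hence $Comp_{V}(P,w)$ likewise contains no query configuration of non-zero amplitude while $V$ reads $w$, and by the standing convention that $V$ never queries at the step it scans $\dollar$, the \emph{entire} computation of $(P,V)$ on $w$ is query-free.

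The key reduction is that a query-free computation is prover-independent. Since the verifier never writes a non-blank symbol with non-zero amplitude, the communication cell remains $\#$ throughout, so by Condition (*) every committed prover---in particular the identity prover $P_{I}$---applies only the identity there. Consequently $(P,V)$ and $(P_{I},V)$ induce the same computation on $w$, and $(P_{I},V)$ accepts each $w\in Zero$ with probability at least $1/2+\eta$. The soundness of $(P,V)$ applies to the committed prover $P_{I}$ as well, so $(P_{I},V)$ rejects each $w\notin Zero$ with probability at least $1/2+\eta$. But $(P_{I},V)$ is literally a 1qfa: its communication cell, acted on only by $V$, can be absorbed into the inner-state set $Q\times\Gamma$, turning $V$ into an ordinary measure-many 1qfa that recognizes $Zero$ with bounded error, contradicting $Zero\notin\mathrm{1QFA}$. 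Therefore $\nu>0$. The one step needing real care---and the main obstacle---is justifying rigorously that zero query weight makes the computation genuinely prover-independent and collapses $(P_{I},V)$ to a bona fide 1qfa; this is precisely where Lemma \ref{weight}, the committed-prover Condition (*), and the no-query-at-$\dollar$ convention must all be invoked.
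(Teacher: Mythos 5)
Your proof is correct and follows essentially the same route as the paper's: assuming $\nu=0$, you use Lemma \ref{weight}(2) to conclude the honest computation on each $w\in Zero$ is query-free, replace the honest prover by the identity prover $P_{I}$ (using soundness against $P_{I}$ on negative instances), and fold the communication cell into the state set $Q\times\Gamma$ to obtain a bounded-error 1qfa for $Zero$, contradicting $Zero\notin\mathrm{1QFA}$. Your treatment is, if anything, slightly more explicit than the paper's on the points that need care, namely invoking Condition (*) and the no-query-at-$\dollar$ convention to justify prover-independence.
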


\begin{proof}
Let us assume that $\nu=0$. From this assumption, it follows that
$wt(x)=0$ for all $x\in Zero$.
To obtain a contradiction, we aim at constructing
an appropriate bounded-error 1qfa $M$ that recognizes $Zero$.
Recall that $P$ is a honest committed prover that makes the 1-interaction bounded QIP system $(P,V)$ recognize $Zero$.
Firstly, we assert that even a simple protocol $(P_I,V)$ can recognize $Zero$ with success probability at least $1/2+\eta$. For each input $x\in Zero$, since $wt(x)=0$, Lemma \ref{weight}(2) implies that all query configurations in $Comp_{V}(P,x)$ must have zero amplitudes. This situation implies that,   in the superposition of global configurations at each step of the computation of $(P,V)$ on $x$,  the verifier $V$'s next moves cannot be affected by any messages sent out by $P$. Hence, we can replace $P$ by $P_I$  without changing the outcome of $V$.

Next, we shall construct the desired 1qfa $M$. Any inner state of $M$ has the form $(q,\sigma)$, which implicitly reflects both $V$'s inner state $q$ and a symbol $\sigma$
in the communication cell.
Our 1qfa $M$ behaves as follows.
On input $x$, $M$ simulates $V$ on $x$ using the {\lq\lq}imaginary{\rq\rq} prover $P_I$ by maintaining the content of the communication cell
as an integrated part of $M$'s inner states.

Finally, we claim that $M$ recognizes $Zero$ with bounded error probability.
Let $x$ be any input string.
If $x$ is in $Zero$, then, since any query configuration with
the prover $P_{I}$ has the zero amplitude,
$M$ correctly accepts $x$ with probability $\geq 1/2+\eta$.
Likewise, if $x$ is not in $Zero$, then the protocol $(P_{I},V)$ rejects $x$ with probability $\geq 1/2+\eta$; thus, $M$  also rejects $x$ with the same probability.
Therefore, $M$ recognizes $Zero$ with error probability at most $1/2-\eta$, as requested. Since $Zero\notin \mathrm{1QFA}$, we obtain a contradiction, and therefore $\nu>0$ follows.
\end{proof}

Next, we shall construct
a committed prover $P'$ and a string $z\notin Zero$ that force the protocol $(P',V)$ to  accept $z$ with probability at least $1/2$.
Let us recall from Section \ref{sec:basic-def} the notation $P_{w}$, which
refers to a strategy $\{U_{P,i}^{w}\}_{i\in\nat^{+}}$ of $P$ on input $w$. Since $\nu>0$ by Claim \ref{nu-is-positive},
for every real number $\gamma\in(0,\nu]$, there exists a string $w$ in $Zero$ such that $wt(w) \geq \nu-\gamma$.
Given any $y\in\Sigma^*$, we set $\gamma_y=\min\{\eta^2/16(|y|+1)^2,\nu\}$
and choose the lexicographically minimal string $w_{y}\in Zero$ satisfying  $wt(w_y) \geq \nu - \gamma_{y}$. For readability, we abbreviate the string $w_{y}y$ as $\tilde{y}$.

Moreover, we define a new committed prover $P'_{y}$ that behaves on input $\tilde{y}01^m$ ($=w_{y}y01^m$),
where $m\in\nat$,
in the following fashion:
$P'_{y}$ follows the strategy $P_{\tilde{y}0}$
while $V$'s tape head is reading $\cent w_{y}$ and then $P'_{y}$ behaves as $P_{I}$  while $V$ is reading the remaining portion $y01^m\$$.   Since $P$ satisfies the 1-iteration condition, we obtain $P_{\tilde{y}0}\in C^{(1)}_{\tilde{y}0,V}$. By its definition, $P'_y$ also belongs to $C^{(1)}_{\tilde{y}0,V}$.
As for the notation $p_{acc}(x,P,V)$ introduced in Section \ref{sec:basic-def}, we simply drop ``$V$'' and write $p_{acc}(x,P)$ instead. We then claim the following lower bound of $p_{acc}(\tilde{y}0,P'_y)$.

\begin{claim}\label{low-bound-Pacc}
For any string $y\in\Sigma^*$, $p_{acc}(\tilde{y}0,P'_{y}) \geq 1/2+\eta/2$.
\end{claim}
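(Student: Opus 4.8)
The plan is to compare the protocol $(P'_y,V)$ against the \emph{honest} protocol $(P,V)$ on the same input $\tilde{y}0 = w_y y 0$, exploiting that this string lies in $Zero$ and that $P'_y$ deviates from the honest strategy only on a stretch of the computation that carries very little query weight. First I would record the two facts that drive everything. Since $w_y\in Zero$ ends in $0$, the string $\tilde{y}0 = w_y y 0$ also ends in $0$, so $\tilde{y}0\in Zero$; by completeness of $(P,V)$ the honest prover $P$ (running strategy $P_{\tilde{y}0}$) gives $p_{acc}(\tilde{y}0,P)\geq 1/2+\eta$. Next, using the additivity $wt_{V}(w_y)+wt_{V}^{(w_y)}(y0)=wt_{V}(w_y y0)$ together with $wt_{V}(w_y y0)\le \nu$ (as $w_y y0\in Zero$) and $wt_{V}(w_y)\ge \nu-\gamma_y$ (the defining property of $w_y$), I obtain the key estimate $wt_{V}^{(w_y)}(y0)\le \gamma_y$.

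The heart of the argument is a perturbation (hybrid) bound. By construction $P'_y$ and $P$ both run $P_{\tilde{y}0}$ while $V$ reads $\cent w_y$, so $Comp_{V}(P'_y,\tilde{y}0)$ and $Comp_{V}(P,\tilde{y}0)$ are identical up to the moment $V$ finishes $w_y$; thereafter $P'_y$ applies $P_{I}$ while $P$ keeps applying $P_{\tilde{y}0}$. Because both provers are committed, each acts as the identity at every non-query (i.e. $\#$-carrying) configuration, so the two computations can diverge \emph{only} at query configurations arising while $V$ reads $y0$. Both $P$ and $P'_y$ belong to $C^{(1)}_{\tilde{y}0,V}$, so Lemma \ref{weight}(2) bounds the total squared amplitude of those query configurations in each computation by $wt_{V}^{(w_y)}(y0)\le \gamma_y$.

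A step-by-step telescoping estimate then controls the gap between the two final superpositions. Letting $\delta_t$ denote the norm distance between the $t$-th states of the two computations, I would use that $V$'s transition, its measurement, and the common $\#$-action are norm non-increasing, while the operator $U_{P,i}^{\tilde{y}0}-I$ annihilates the $\#$-component and hence changes the state by at most twice the norm $\sqrt{w'_i}$ of the query part of $P'_y$'s current superposition. Summing, $\delta_{\mathrm{final}}\le \sum_i 2\sqrt{w'_i}$; since there are only $|y|+1$ steps spent reading $y0$ and $\sum_i w'_i\le \gamma_y$, Cauchy--Schwarz yields $\delta_{\mathrm{final}}\le 2\sqrt{(|y|+1)\,\gamma_y}$.

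Finally I would convert this state-distance bound into an acceptance-probability bound: the acceptance probability is a squared norm of a projection of the limiting halted state, so a norm difference $\delta$ perturbs it by $O(\delta)$. Plugging in the choice $\gamma_y\le \eta^2/16(|y|+1)^2$ makes $2\sqrt{(|y|+1)\gamma_y}\le \eta/2$, giving $|p_{acc}(\tilde{y}0,P'_y)-p_{acc}(\tilde{y}0,P)|\le \eta/2$ and hence $p_{acc}(\tilde{y}0,P'_y)\ge (1/2+\eta)-\eta/2 = 1/2+\eta/2$, as claimed. I expect the main obstacle to be making the perturbation bound fully rigorous: one must verify that committedness really does force agreement at \emph{all} $\#$-configurations (so that only query steps contribute to the divergence), and then control the accumulation of error across the $\Theta(|y|)$ diverging steps tightly enough---via Cauchy--Schwarz against the single quadratic budget $\gamma_y$---that the final slack stays below $\eta/2$.
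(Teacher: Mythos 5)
Your proposal follows essentially the same route as the paper's proof: completeness on $\tilde{y}0\in Zero$ giving $p_{acc}(\tilde{y}0,P_{\tilde{y}0})\geq 1/2+\eta$, the additivity identity $wt(w_y)+wt^{(w_y)}(y0)=wt(\tilde{y}0)\leq\nu$ combined with $wt(w_y)\geq\nu-\gamma_y$ to obtain $wt^{(w_y)}(y0)\leq\gamma_y$, Lemma \ref{weight}(2) applied to both $P_{\tilde{y}0}$ and $P'_y$ (both lying in $C^{(1)}_{\tilde{y}0,V}$), and a perturbation bound turning the small query weight into $|p_{acc}(\tilde{y}0,P'_y)-p_{acc}(\tilde{y}0,P_{\tilde{y}0})|\leq\eta/2$. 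The only divergence is in how that perturbation inequality is justified: the paper imports it, in the form $2\bigl(wt^{(w_y)}(y0)\bigr)^{1/2}|y0|$, by citation to \cite[Lemma 9]{Yam03}, whereas you sketch a hybrid/telescoping argument for it; your Cauchy--Schwarz variant $2\sqrt{(|y|+1)\gamma_y}$ is in fact slightly sharper than the paper's $2\sqrt{\gamma_y}\,(|y|+1)$, and both fall below $\eta/2$ under the stated choice of $\gamma_y$.
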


\begin{proof}
Let $y$ be an arbitrary input string. From the fact $\tilde{y}0\in Zero$, it must hold that $p_{acc}(\tilde{y}0,P_{\tilde{y}0})\geq 1/2+\eta$ because  $(P,V)$ recognizes $Zero$.
Note that, on the same
input $\tilde{y}0$, the protocol $(P'_{y},V)$ works in the same way as $(P_{\tilde{y}0},V)$
while $V$ is reading $\cent w_{y}$.
Consider the query weight $wt(\tilde{y}0)$. Since $wt(\tilde{y}0)\leq \nu$,
we obtain $wt(w_y) + wt^{(w_y)}(y0) = wt(\tilde{y}0) \leq \nu$, from which
it follows that $wt^{(w_y)}(y0)\leq \gamma_{y}$ using the inequality  $wt(w_y)\geq \nu - \gamma_{y}$.
Lemma \ref{weight}(2) implies that, for any committed prover $P^*$ in $C^{(1)}_{\tilde{y}0,V}$,
$wt^{(w_y)}(y0)$ upper-bounds the sum of all the squared magnitudes of the amplitudes of query configurations in $Comp_{V}(P^*,\tilde{y}0)$
while the tape head is reading $y0\dollar$.
Notice that both $P_{\tilde{y}0}$ and $P'_y$ belong to  $C^{(1)}_{\tilde{y}0,V}$.
Taking $P_{\tilde{y}0}$ and $P'_{y}$ as $P^*$, a simple calculation shows  that
\begin{eqnarray*}
\left|p_{acc}(\tilde{y}0,P'_{y}) - p_{acc}(\tilde{y}0,P_{\tilde{y}0})\right|
&\leq& 2\left(wt^{(w_y)}(y0)\right)^{1/2}|y0|
\;\;\leq\;\; 2\sqrt{\gamma_{y}}(|y|+1) \\
&\leq& 2\sqrt{\frac{\eta^2}{16(|y|+1)^2}}\cdot(|y|+1)
\;\;=\;\; \eta/2,
\end{eqnarray*}
where the first inequality is shown as in, \eg  \cite[Lemma 9]{Yam03}. Since $p_{acc}(\tilde{y}0,P_{\tilde{y}0})\geq 1/2+\eta$,
it follows that $p_{acc}(\tilde{y}0,P'_{y}) \geq p_{acc}(\tilde{y}0,P_{\tilde{y}0}) - \eta/2
\geq (1/2+\eta)- \eta/2 \geq 1/2+\eta/2$.
\end{proof}

Henceforth, we shall turn  our attention to the protocol $(P'_y,V)$ working on input $\tilde{y}01^m$ for any number $m\in\nat^{+}$ and we shall estimate its acceptance probability $p_{acc}(\tilde{y}01^m,P'_y)$.

The initial superposition of global configurations is $\qubit{q_0}\qubit{\#}\qubit{\#^{\infty}}$, where we omit the qubits representing the tape head position of $V$, because the tape head moves only in one direction without stopping.
Let $\VV=\mathrm{span}\{\qubit{q}\mid q\in Q\}$, let $\MM=\mathrm{span}\{\qubit{\sigma}\mid \sigma\in \Gamma\}$, and let $\PP$ be a Hilbert space representing the prover's private tape.
For each number $m\in\nat^{+}$,
we denote by $\qubit{\psi_{y,m}}$ a superposition in the global configuration space $\VV\otimes\MM\otimes\PP$ obtained by $(P'_{y},V)$ on input $\tilde{y}01^m$
just after $V$'s tape head moves off the right end of $\cent \tilde{y}0$ and $P'_y$ then applies the identity operator while reading the
remaining $1^m\$$. Since $P'_{y}$ basically does nothing after $V$ has read $\cent w_{y}$, $\qubit{\psi_{y,m}}$ does not depend on the choice of $m$. We therefore write $\qubit{\psi_y}$ instead of $\qubit{\psi_{y,m}}$, for simplicity. Moreover, we set $\mu = \mathrm{inf}_{y\in\Sigma^*}\{\|\qubit{\psi_y}\|\}$.

\begin{claim}\label{mu_is_zero}
$\mu>0$.
\end{claim}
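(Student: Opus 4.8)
The plan is to bound $\|\ket{\psi_y}\|$ below by a positive constant that does not depend on $y$, by reading $\|\ket{\psi_y}\|^2$ as the \emph{survival probability} of the computation $(P'_y,V)$ on $\tilde{y}0$: it is the probability that $V$ has entered no halting inner state by the moment its tape head leaves $\cent\tilde{y}0$. Since $V$ performs a projective measurement at every step, the total acceptance probability $p_{acc}(\tilde{y}0,P'_y)$ splits into two nonnegative contributions — the mass $a_0$ accumulated in accepting configurations while $V$ scans $\cent\tilde{y}0$, and the mass produced at the final step, when $V_{\dollar}$ acts on the surviving state $\ket{\psi_y}$. Because $U_{\delta}^{\tilde{y}0}$ is unitary and a projection never increases the norm, the latter contribution is at most $\|\ket{\psi_y}\|^2$, so I would first record
\[
a_0 \;\geq\; p_{acc}(\tilde{y}0,P'_y) - \|\ket{\psi_y}\|^2 \;\geq\; \tfrac12 + \tfrac{\eta}{2} - \|\ket{\psi_y}\|^2,
\]
the last step being Claim~\ref{low-bound-Pacc}.

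Next I would exploit the fact that $a_0$ is invariant under appending $1$'s to the input. Indeed, $P'_y$ applies $P_{\tilde{y}0}$ while $V$ reads $\cent w_y$ and the identity afterward, a behaviour dictated solely by the prefix $\cent\tilde{y}0$; and since $V$ is one-way, its transitions while scanning $\cent\tilde{y}0$ depend only on the symbols of that prefix. Hence the entire joint evolution of $(P'_y,V)$ — including every intermediate measurement — coincides on the inputs $\tilde{y}0$ and $\tilde{y}01$ up to the instant the head moves off $\cent\tilde{y}0$. In particular the accepting mass $a_0$ gathered during this phase is literally the same for both inputs, and because acceptance can only accrue further afterward, $p_{acc}(\tilde{y}01,P'_y)\geq a_0$.

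The string $\tilde{y}01 = w_y y 01$ ends in $1$ and therefore lies outside $Zero$. As $P'_y$ is a committed prover, the soundness condition of the $1$-interaction bounded system $(P,V)$ forces $(P'_y,V)$ to reject $\tilde{y}01$ with probability at least $1/2+\eta$, whence $p_{acc}(\tilde{y}01,P'_y)\leq 1/2-\eta$. Chaining the inequalities,
\[
\tfrac12 + \tfrac{\eta}{2} - \|\ket{\psi_y}\|^2 \;\leq\; a_0 \;\leq\; p_{acc}(\tilde{y}01,P'_y) \;\leq\; \tfrac12 - \eta ,
\]
which rearranges to $\|\ket{\psi_y}\|^2 \geq \tfrac{3\eta}{2}$. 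Since this bound is uniform in $y$, I would conclude $\mu = \inf_{y\in\Sigma^*}\|\ket{\psi_y}\| \geq \sqrt{3\eta/2} > 0$.

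I expect the only delicate point to be the prefix-invariance step: justifying that the accepting mass $a_0$ collected before the head leaves $\cent\tilde{y}0$ is genuinely identical on $\tilde{y}0$ and on $\tilde{y}01$. This rests on the observation that both $P'_y$'s strategy on $\cent w_y$ and $V$'s one-way transitions on $\cent\tilde{y}0$ are determined entirely by the shared prefix and are insensitive to the suffix, so that the two computations (states, measurement outcomes, and hence partial acceptance probabilities) agree step by step until that moment. Everything else is a one-line manipulation of acceptance probabilities together with the unitarity bound that caps the final-step acceptance by $\|\ket{\psi_y}\|^2$.
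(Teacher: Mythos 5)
Your proposal is correct and takes essentially the same route as the paper's own proof: your $a_0$ is exactly the paper's $\alpha_y$ (the acceptance mass gathered while reading $\tilde{y}0$), your prefix-invariance step giving $p_{acc}(\tilde{y}01,P'_y)\geq a_0$ and your bound $p_{acc}(\tilde{y}0,P'_y)\leq a_0+\|\ket{\psi_y}\|^2$ are the paper's two inequalities, and combining them with Claim \ref{low-bound-Pacc} and the soundness bound $p_{acc}(\tilde{y}01,P'_y)\leq 1/2-\eta$ is precisely the paper's chain. The only difference is presentational: the paper phrases it as a contradiction from $\mu=0$ (picking $y$ with $\|\ket{\psi_y}\|^2<\eta$ and deriving $\|\ket{\psi_y}\|^2>\eta$), whereas you extract the uniform explicit bound $\|\ket{\psi_y}\|^2\geq 3\eta/2$ for all $y$, which is marginally cleaner but mathematically identical.
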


\begin{proof}
Let us prove this claim by contradiction. First, assume that $\mu=0$.
Since this assumption means $\mathrm{inf}_{y\in\Sigma^*}\{\|\qubit{\psi_y}\|^2\}=0$
and the constant $\eta$ is positive by its definition,
there exists a string $y\in\Sigma^*$ satisfying  $\|\qubit{\psi_{y}}\|^2<\eta$.
Now, let us consider a special input string $\tilde{y}01$.
Since $\tilde{y}01\not\in Zero$, it must follow that $p_{acc}(\tilde{y}01,P'_y)\leq 1/2-\eta$
by the soundness condition of the 1-interaction bounded QIP system $(P,V)$.
Notice that $\qubit{\psi_y}$ is composed only of non-halting configurations having non-zero amplitudes. For convenience, let $\alpha_y$ denote the total acceptance probability of $V$ obtained while $V$'s reading $\tilde{y}0$ (not including $\dollar$). It holds that $p_{acc}(\tilde{y}0,P'_y)\leq \alpha_y + \|\qubit{\psi_y}\|^2$. While $V$ is reading $\tilde{y}0$ out of $\tilde{y}01$, $P'_y$ takes exactly the same strategy as $P$ does on the input $\tilde{y}0$. We thus conclude that $p_{acc}(\tilde{y}01,P'_y)$ is at least $\alpha_y$. By combining those inequalities,
we obtain  $p_{acc}(\tilde{y}0,P'_y) \leq p_{acc}(\tilde{y}01,P'_y)+\|\qubit{\psi_y}\|^2$.
Since $p_{acc}(\tilde{y}0,P'_y)\geq 1/2+\eta/2$ by
Claim \ref{low-bound-Pacc},
it follows that $\|\qubit{\psi_y}\|^2 \geq p_{acc}(\tilde{y}0,P'_y) - p_{acc}(\tilde{y}01,P'_y)\geq 1/2+\eta/2 - (1/2-\eta) >\eta$,
a contradiction.
Therefore, $\mu>0$ must hold.
\end{proof}

Let $\epsilon$ be any sufficiently small positive real number and choose  the lexicographically first string $y$ for which  $\|\qubit{\psi_y}\|\in[\mu,\mu+\epsilon)$.
Let $I_1$ and $I_2$ denote the identity operators acting on $\PP$ and  $\VV$, respectively, and write $W$ for
$(I_2\otimes P_I)(E_{non}\otimes I_1)(V_1\otimes I_1)$, where $V_{\sigma}$ is a $(Q\times \Gamma)$-transition of $V$ on symbol $\sigma$.
For any integer $j\geq 1$, it holds that
$\mu\leq \|W^j \qubit{\psi_y}\| <\mu+\epsilon$ by the definition of $\mu$.  Note that, since $y$ is fixed, the prover $P'_y$ uses only a finite number of tape cells on his private tape. Let $d$ denote the maximal number of cells in use by $P'_y$. Without loss of generality, we can assume that the Hilbert space $\PP$ is of dimension $d$.
Thus, all vectors in a series $\{W^{j}\qubit{\psi_y}\mid j\in\nat^{+}\}$ are  in a Hilbert space of dimension $d|\Sigma||\Gamma|$. Since $\mu>0$ by Claim \ref{mu_is_zero},
there are two numbers $j_1,j_2\in\nat$ with $j_i<j_2$ such that $\|W^{j_1}\qubit{\psi_y} - W^{j_2}\qubit{\psi_y}\|<\epsilon$.
By an analysis similar to \cite{KW97} (see also \cite[Lemma 4.1.10]{Gru99}),
there exist a constant $c>0$, which is independent of the value of $\epsilon$, satisfying that
$\|W^{j_1}\qubit{\psi_y} - W^{j_2}\qubit{\psi_y}\| \leq \|\qubit{\psi_y}-W^{m}\qubit{\psi_y}\|<c\cdot \epsilon^{1/4}$, where  $m=j_2-j_1\geq1$.
{}From this inequality follows
\[
\left|p_{acc}(\tilde{y}0,P'_{y}) - p_{acc}(\tilde{y}01^m,P'_{y})\right|
\leq \left\| V_{\$}\qubit{\psi_y} - V_{\$}W^m\qubit{\psi_y}\right\| = \left\| \qubit{\psi_y} - W^m\qubit{\psi_y} \right\| < c\epsilon^{1/4},
\]
where the first inequality is a folklore (see \cite[Lemma 8]{Yam03} for its proof).
Since $\epsilon$ is arbitrary, we can set $\epsilon =\left(\eta/2c\right)^4$.
Because $p_{acc}(\tilde{y}0,P'_{y})\geq 1/2+\eta/2$ by Claim \ref{low-bound-Pacc},
it follows that $p_{acc}(\tilde{y}01^m,P'_{y})\geq (1/2+\eta/2)-c\epsilon^{1/4} = 1/2$.
This contradicts our assumption that,  for any committed prover $P^*$,  $(P^*,V)$ accepts $\tilde{y}01^m$ with probability $\leq 1/2-\eta/2 < 1/2$.
Therefore, $Zero$ does not belong to $\qip^{\#}_{1}(1qfa)$, as requested.

We have finally completed the proof of Proposition \ref{zero-1qfa}.

%%%%%%%%%%%%%
\section{Challenging Open Questions}

Throughout Sections \ref{sec:classical-prover}--\ref{sec:interaction}, we have placed various restrictions on the behaviors of verifiers and provers in our qfa-verifier QIP systems and we have studied how those restrictions affect the language recognition power of the systems. The restricted models that we have considered in Sections \ref{sec:classical-prover}--\ref{sec:interaction} include: classical-prover QIPs, public QIPs, and interaction-bounded QIPs.   After an initial study of this paper, nonetheless, there still remain numerous unsolved questions concerning those QIP systems. Hereafter, we shall give a short list of the important open questions as a helping guide to future research.

(1) The relationships between quantum provers and classical provers are
still
not entirely clear in the context of qfa-verifier QIP systems, mostly because of the soundness condition imposed on the systems.
For example, we expect to see a fundamental separation between $\qip(1qfa)$ and $\qip(1qfa,c\mbox{-}prover)$ as well as between $\qip(2qfa,poly\mbox{-}time)$ and $\qip(2qfa,poly\mbox{-}time,c\mbox{-}prover)$.

(2) In general, we need to discover precise relationships between public-coin  IP systems (namely, AM systems) and public QIP systems beyond Theorem \ref{public-vs-am}. Moreover, associated with 2qfa verifiers,
we ask whether  $\mathrm{2QFA}(poly\mbox{-}time)$ is properly included in $\qip(2qfa,public)$ and whether  $\qip(2qfa,public,poly\mbox{-}time)$ is different from $\qip(2qfa,poly\mbox{-}time)$.  A separation between $\qip(2qfa)$ and  $\qip(2qfa,public)$ is also unknown and this
must be settled.

(3) A new interaction-bounded QIP system is of special interest in analyzing the roles of interactions between provers and verifiers. For this model,
we hope to see that the equality $\qip^{\#}(1qfa)=\qip(1qfa)$ indeed holds.
Unsolved so far is a general question of whether $k+1$ interactions are more powerful than $k$ interactions.
Since a 1qfa verifier is unable to count the number of interactions (or queries),
we may not directly generalize the proof of Theorem \ref{query-once}
to assert
that $\qip^{\#}_k(1qfa)\neq \qip^{\#}_{k+1}(1qfa)$ for any constant $k$ in $\nat^{+}$.
Nevertheless, we still conjecture that this assertion is true.

(4) It is of great interest to seek an algebraic characterization of our qfa-verifier QIP systems. Such a characterization may shed new light on a nature of quantum interactions between two parties.

%%%%%%%%%%
%%%%%%%%%%
\let\oldbibliography\thebibliography
\renewcommand{\thebibliography}[1]{%
  \oldbibliography{#1}%
  \setlength{\itemsep}{0pt}%
}
\bibliographystyle{alpha}

%%%%%%%%%%%%%%%%%%%%%%%%%%%%%%%%%%%
\end{document}
%%%%%%%%%%%%%%%%%%%%%%%%%%%%%%%%%%%
%%%%%%%%%%%%%%%%%%%%%%%%%%%%%%%%%%%